\newtheorem{theorem}{Theorem}
\newtheorem{proposition}{Proposition}
\newtheorem{lemma}{Lemma}
\newtheorem{corollary}{Corollary}
\newtheorem{definition}{Definition}
\newtheorem{remark}{Remark}
\newtheorem{assumption}{Assumption}
\newcommand{\figSize}[1]{\def\@figSize{#1}}
\begin{document}
%
\newif\ifTwoColumns
\TwoColumnstrue 
%
\ifTwoColumns
\title{\huge On Muting Mobile Terminals for Uplink Interference Mitigation in HetNets -- System-Level Analysis via Stochastic Geometry}
\else
\title{\Large On Muting Mobile Terminals for Uplink Interference Mitigation in HetNets -- System-Level Analysis via Stochastic Geometry}
\fi
%
%
%
\ifTwoColumns
\author{\normalsize F.~J.~Martin-Vega, M.~C.~Aguayo-Torres, G.~Gomez and M.~Di~Renzo
\thanks{F.~J.~Martin-Vega, M. C. Aguayo-Torres and G. Gomez are with the Departamento
de Ingenier\'ia de Comunicaciones, Universidad de M\'alaga, M\'alaga 29071, Spain (e-mail: fjmvega@ic.uma.es, aguayo@ic.uma.es, ggomez@ic.uma.es).}
\thanks{Marco Di Renzo is with the Laboratoire des Signaux et Syst\`emes, CNRS, CentraleSupelec, Univ Paris-Sud, Universit\'e Paris-Saclay, 91192 Gif-sur-Yvette Cedex, France (e-mail: marco.direnzo@l2s.
centralesupelec.fr).}
\thanks{A small part of this manuscript has been accepted for presentation in the conference ICC 2017.}
\thanks{This work has been submitted to the IEEE for possible publication.
Copyright may be transferred without notice, after which this version may
no longer be accessible.}}%
\else
\author{\normalsize F.~J.~Martin-Vega, M.~C.~Aguayo-Torres, G.~Gomez and M.~Di~Renzo \vspace{-1.25cm}
\thanks{F.~J.~Martin-Vega, M. C. Aguayo-Torres and G. Gomez are with the Departamento
de Ingenier\'ia de Comunicaciones, Universidad de M\'alaga, M\'alaga 29071, Spain (e-mail: fjmvega@ic.uma.es, aguayo@ic.uma.es, ggomez@ic.uma.es).}
\thanks{Marco Di Renzo is with the Laboratoire des Signaux et Syst\`emes, CNRS, CentraleSupelec, Univ Paris-Sud, Universit\'e Paris-Saclay, 91192 Gif-sur-Yvette Cedex, France (e-mail: marco.direnzo@l2s.
centralesupelec.fr).}
\thanks{A small part of this manuscript has been accepted for presentation in the conference ICC 2017.}}%
\fi
%

\markboth{Draft for IEEE Journal,~Vol.~xx, No.~xx, Month~2017}%
{Shell \MakeLowercase{\textit{et al.}}: Bare Demo of IEEEtran.cls for Journals}

\maketitle

\begin{abstract}
We investigate the performance of a scheduling algorithm where the Mobile Terminals (MTs) may be turned off if they cause a level of interference greater than a given threshold. This approach, which is referred to as Interference Aware Muting (IAM), may be regarded as an interference-aware scheme that is aimed to reduce the level of interference. We analyze its performance with the aid of stochastic geometry and compare it against other interference-unaware and interference-aware schemes, where the level of interference is kept under control in the power control scheme itself rather than in the scheduling process. IAM is studied in terms of average transmit power, mean and variance of the interference, coverage probability, Spectral Efficiency (SE), and Binary Rate (BR), which accounts for the amount of resources allocated to the typical MT. Simplified expressions of SE and BR for adaptive modulation and coding schemes are proposed, which better characterize practical communication systems. Our system-level analysis unveils that IAM increases the BR and reduces the mean and variance of the interference. It is proved that an operating regime exists, where the performance of IAM is independent of the cell association criterion, which simplifies the joint design of uplink and downlink transmissions.
\end{abstract}



\ifTwoColumns

\else
\fi
\section{Introduction}
\label{sec:Introduction}
\IEEEPARstart{I}{nterference} awareness can be exploited at both the physical and Medium Access Control (MAC) layers to boost the performance of mobile networks. It is especially useful in the Uplink (UL) of Heterogeneous Cellular Networks (HCNs) for interference mitigation and performance enhancement. 
%
In current HCNs, the Mobile Terminals (MTs) are associated with the same Base Station (BS) in the UL and Downlink (DL) \cite{Elshaer14}. The cell association is performed based on DL pilot signals and the serving BS is chosen based on a given criterion, e.g., the highest average received power in the DL. {\color{black} In the UL, the same BS is used \cite{Elshaer14} which leads to a situation where MTs are associated with distant BSs.} In this context, the use of Fractional Power Control (FPC) accentuates the detrimental effect of the MTs that cause strong interference to neighboring BSs.
%

%
%
%

\subsection{UL Analysis: State-of-the-Art}
\label{sec:Related Work} 
{\color{black} The complex aforementioned interactions between power control and association in the UL require accurate mathematical frameworks to gain insights about the performance trends and limits of existing and future networks. }
Unfortunately, the mathematical analysis of the UL of HCNs is more involved than the analysis of the DL for two main reasons: i) due to the use of power control, the transmit power of the MTs depends on the distance to their serving BSs and ii) even though the locations of BSs and MTs are drawn from two independent Poisson Point Processes (PPPs), the locations of the interfering MTs scheduled in a given orthogonal Resource Block (RB) do not follow a PPP. These two peculiarities as compared to 
to the DL make the mathematical analysis of the UL intractable without resorting to approximations \cite{Singh15}. 
{\color{black} In \cite{Novlan13} it is studied the case of homogeneous cellular networks with FPC. To avoid such a mathematical intractability, it is assumed that the MTs that are scheduled in a given RB form a Voronoi tessellation and a single BS is available in each Voronoi cell. However, such an approach does not consider HCNs. 
The case of the UL of HCNs is accurately modeled in recent works like \cite{ElSawy14, Singh15, DiRenzo16b, Martin-Vega16}, where it is considered the spatial correlation between the location of the probe BS and those of the interfering MTs. 
In \cite{ElSawy14}, is studied a framework to model HCNs with a truncated channel inversion power control under smallest path loss association. In this work it is considered an homogeneous PPP as a generative process for the locations of interfering MTs, but then, the spatial correlation is added by means of an indicator function that discards interfering MTs' locations based on their received powers. 
The case of UL and DL with decoupled access is considered in \cite{Singh15}. The association is based on maximum weighted received powers and FPC is considered in the UL. Here, to account for the spatial correlation a non-homogeneous PPP is considered to model the locations of interfering MTs. 
A framework for the UL of HCNs with multi-antena BSs is stuided in \cite{DiRenzo16b}. In this work it is considered FPC under a generalized association criteria and two extreme detection techniques in terms of complexity and performance: Maximum Ratio Combining (MRC) and Optimum Combining (OC). It is demonstrated that OC, which can be regarded as an interference-aware detection technique for multi-antenna receivers, greatly outperforms MRC when MTs use aggressive power control, i.e., when the interference is high. 
The spatial correlation is imposed by means of a conditional thinning that takes into account the generalized cell association procedure. 
Interference-awareness is also studied in \cite{Martin-Vega16}, which considers HCNs with single-antenna BSs. In this work, it is studied a power control mechanism \cite{Zhang12}, which is referred to as 
Interference Aware Fractional Power Control (IAFPC). This approach consists of introducing a maximum interference level, $i_0$, that the transmission of each MT is allowed to cause to its most interfered BS. 
In simple terms, the MTs adjust their transmit power in order to cause a maximum interference level of $i_0$ to their most interfered BS. 
%

In the present paper, we investigate another option for interference mitigation in the UL and compare it with previously reported schemes. The approach consists of exploiting interference-awareness when scheduling the transmission of the MTs, rather than in the power control scheme itself (IAFPC) or in the detection process of the receiver (OC). As a result, interference management is conducted at the MAC layer rather than at the physical layer. The considered approach is referred to as Interference Aware Muting (IAM) and consists of turning off, i.e., muting, the MTs whose interference towards the most interfered BS is above a given threshold. 
The main difference between IAFPC and IAM can be summarized as follows. In IAFPC, all the MTs are active and adjust their transmit power for interference mitigation. In IAM, on the other hand, the transmit power of the MTs does not account for any interference constraints but some MTs may not be allowed to transmit if they produce too much interference. As a result, IAM has the potential of reducing the aggregate interference in the UL and of enabling the active MTs to better use the available resources, i.e., the transmission bandwidth. 
On the other hand, it reduces the fairness of allocating the resources among the MTs, since some of them may be turned off. 
{\color{black} Nevertheless, thanks to mobility and shadowing, muted MTs are only inactive for a given period of time. Hence, from the perspective of MTs the question to answer is whether this muting increases its achievable Binary Rate (BR), taking into account both the active and inactive periods.} 
 The main objective of the present paper is to quantify the advantages and the limitations of IAM and compare it against the IAFPC scheme.
}


\subsection{Technical Contribution}
\label{sec:Contributions}
In this paper, we overcome this mathematical intractability by using an approach similar to \cite{DiRenzo16b} and \cite{Martin-Vega16}, which is referred to as \textit{conditional thinning}. In simple terms, the locations of the active MTs are assumed to be drawn from a PPP but spatial constraints (correlations) are introduced, which account for the location of the serving BS, for the location of the most interfered BS, and for the maximum level of interference allowed. 
Based on these modeling assumptions, which are validated against extensive Monte Carlo simulations, we provide the following contributions.
\begin{itemize}

\item We study IAM scheme in terms of average transmit power of the MTs, mean and variance of the interference. The mathematical analysis reveals that IAM is capable of reducing the three latter performance metrics {\color{black} compared with IAFPC}, which results in several advantages for practical implementations. 
Reducing the variance of the interference, e.g., is beneficial for better estimating the SINR and, thus, for reducing the error probability of practical decoding schemes, e.g., turbo decoding, \cite{Shin02}, and for making easier the selection of the most appropriate Modulation and Coding Scheme (MCS) to use in LTE systems \cite{Zhang12}.

\item To make our study and conclusions directly applicable to current communication systems that are based on Adaptive Modulation and Coding (AMC) transmission, we provide tractable expressions of SE and BR based on practical MCSs that are compliant with the LTE standard and whose parameters are obtained from a link-level simulator \cite{wimo14, Martin-Vega13}. 

\item With the aid of the proposed mathematical frameworks, we compare IAFPC and IAM schemes in terms of SE and BR, which provide different information on their strengths and weaknesses. The SE provides information on how well the MTs exploit the available resources (e.g., bandwidth) that are shared among the MTs served by the same BS, whereas the BR accounts for the specific fraction of resources that is allocated to each MT served by a given BS. While the IAFPC scheme is superior in terms of SE, the IAM scheme is superior in terms of RB. This implies that IAM provides service to fewer users, which get better performance compared with IAFPC. To characterize this trade-off, we investigate the fairness of both schemes, which is defined as the probability that a randomly chosen MT gets access to the resources, and provide a tractable frameworks for its analysis.

\item In light of the emerging UL-DL decoupling principle, we develop the mathematical frameworks for a General Cell Association (GCA) criterion, whose association weights may be appropriately optimized for performance enhancement. By direct inspection of the mathematical frameworks, we prove that three operating regimes can be identified as a function of the interference threshold $i_0$: i) the first, where the performance is independent of $i_0$, ii) the second, where the performance depends on $i_0$ but it does not depend on the cell association, and iii) the third, where the performance depends on $i_0$ and the cell association. Of particular interest in this paper is the second regime, which highlights that UL-DL decoupling may not be an issue for some system setups, which in turn simplifies the design of HCNs.

\item As for the relevant case study for the UL where the serving BS of the typical MT is identified based on the Smallest Path-Loss Association (SPLA) criterion with channel-inversion power control \cite{Elshaer14}, we provide simple and closed-form frameworks for relevant performance indicators and prove that two operating regimes exist: i) the first, where the performance depends on $i_0$ (interference-aware) and ii) the second, where the performance is independent of $i_0$ (interference-unaware). We prove, in addition, that i) the scaling law of the average transmit power of the MTs, the mean interference and the probability that a MT gets access to the resources is a polynomial function of $i_0$ whose exponent depends on the path-loss exponent, ii) the distance towards the serving BS gets smaller as $i_0$ increases; and iii) the CCDF of the SINR is independent of the density of BSs.

\end{itemize}

{\color{black} To the best of authors’ knowledge, all these contributions are new in the literature and are not included in previous works. For instance, the muting mechanism introduces further correlations that do not exist in \cite{Martin-Vega16} and need to be taken into account. This muting differentiates the whole analysis. New metrics like the BR, which accounts for the amount of resources allocated by the scheduler, are obtained and it is also introduced a new framework to compute the SE and BR with AMC, which is closer to real systems than Shannon formula. Finally, a lot of closed-form expressions and remarks are obtained which provide important insights about system performance, fairness and cell association. }

The remainder of this paper is organized as follows. Section \ref{sec:System model} introduces the system model and the approach for system-level analysis. In Sections \ref{sec:General Association} and \ref{sec:Smallest Path Loss Association}, the analysis of IAM is presented for GCA and SPLA criteria, respectively. The BR of AMC schemes is analyzed and discussed in Section \ref{sec:Binary Rate and Spectral Efficiency}. In Section \ref{sec:Numerical Results}, IAM and IAFPC schemes are compared against each other via numerical simulations and the main findings and performance trends derived in the paper are substantiated with the aid of Monte Carlo simulations. Finally, Section \ref{sec:Conclusion} concludes this paper.

\textit{Notation}: A summary of the main symbols and functions used throughout the present paper is provided in Table \ref{tab:Symbols} for the convenience of the readers.


\section{System Model}
\label{sec:System model}

\begin{table*}
\renewcommand{\arraystretch}{1.0}
\caption{Summary of main symbols and functions used throughout the paper.}
\label{tab:Symbols}
\ifTwoColumns
\small
\else
\scriptsize
\fi
\centering
\begin{tabular}{ l l }
\toprule
{Symbol/function} & {Definition}  \\
\toprule
${_2 F_1} (\cdot,\cdot,\cdot,\cdot)$  & Gauss Hypergeometric function \\
\hline
${\cal K} = \left\{ {1,2} \right\}$ & Tier set: tier $1$ is related to macro BSs and tier $2$ is related to small cell BSs \\
\hline
$\tilde j = \left\{ {k \in {\cal K}:k \ne j} \right\}$ &  Complementary tier, i.e. $\tilde 1 = 2$ and
$\tilde 2 = 1$ \\
\hline
$\Phi ^{(j)} ,\lambda ^{(j)}$ & PPP and its density related to the locations of macro ($j=1$) and small cell BSs ($j=2$) \\
\hline
$\lambda _{{\rm{MT}}}$ & Density of the PPP of MTs' positions \\
\hline
$\Phi,\lambda$ & PPP and its density related to the locations of all BSs \\
\hline
$t^{(j)}$ & Association weight for tier $j$ \\
\hline
$i_0 ,p_0 ,\epsilon,p_{\max }$ & Interference threshold, target receive power, partial compensation factor, and maximum transmit power \\
\hline
$\tau ,\alpha$ & Path loss slope and path loss exponent \\
\hline
${\rm{MT}}_{\rm{0}} ,{\rm{MT}}_{\rm{i}}$ & Position of the probe MT and position of a generic MT, e.g. an interfering MT \\
\hline
$\Psi ^{(k)}$ & PPP of interfering MTs's locations \\
\hline
$R_{x,(q)}^{(j)}$ & Distance (including shadowing) between location $x$ and the $q$th nearest BS from tier $j$ \\
\hline
$R_{{\rm{MT}}_{\rm{i}} } ,U_{{\rm{MT}}_{\rm{i}} } ,D_{{\rm{MT}}_{\rm{i}} }$ & Distances (including shadowing) between MTi and its serving BS, its most interfered BS and the probe BS \\
\hline
$H_{{\rm{MT}}_{\rm{i}} }$ & Power gain of the multi-path fading which is exponentially distributed \\
\hline
$p_{{\rm{MT}}} \left( r \right) = p_0 \left( {\tau r} \right)^{\alpha \epsilon }$ & Transmit power for a given distance towards the serving BS for active MTs. Muted MTs has $0$ transmit power \\
\hline
$\sigma _n^2 ,I$ & Noise power
and aggregate interference according to \textbf{Assumption \ref{assumpt: Interference}} \\
\hline
${\cal X}_{{\rm{MT}}_i }^{(j)}$ & Event defined as: \textit{${\rm MT_i}$ is associated with tier $j$} \\
\hline
${\cal Q}_{{\rm{MT}}_i }^{(m)}$ & Event defined as: \textit{the most interfered BS of ${\rm MT_i}$ belongs to tier $m$} \\
\hline
${\cal X}_{{\rm{MT}}_i }^{(j,m)}$ & Event defined as: \textit{${\rm MT_i}$ is associated with tier $j$ and the most interfered BS of ${\rm MT_i}$ belongs to tier $m$} \\
\hline
${\cal A}_{{\rm{MT}}_i }$ & Event defined as: \textit{${\rm MT_i}$ is active, i.e., non-muted} \\
\hline
${\cal O}_{{\rm{MT}}_i }^{(j,k)}$ & Event defined as: \textit{the interfering ${\rm MT_i}$ of tier $k$ receives higher weighted average power} \\ & \textit{from its serving BS than from the probe BS that belong to tier $j$} \\
\hline
${\cal Z}_{{\rm{MT}}_i }$ & Event defined as: \textit{the interfering ${\rm MT_i}$ causes a level of interference less than $i_0$ to the probe BS} \\
\bottomrule
\end{tabular}
\end{table*}


We consider the UL of a HCN composed of two tiers, $j\in\mathcal{K}=\{1,2\}$, e.g., macro and small-cell BSs, which are spatially distributed according to two independent PPPs, $\Phi^{(j)}$, of intensities $\lambda^{(j)}$. Each transmitted signal goes through an independent multi-path fading channel with Rayleigh fading and log-normal shadowing. 
The path-loss is modeled by using a path-loss slope $\tau$ and a path-loss exponent $\alpha>2$\footnote{ {\color{black}The proposed framework can be generalized to account for a bounded path loss model; however, an unbounded path loss model has been used for the sake of mathematical tractability.}}.
The cell association among MTs and BSs is based on the weighted average received power criterion, similar to \cite{Singh15}, where the association weights are denoted by $t^{(j)}$ for tier $j\in\mathcal{K}$. Hence, the $i$th MT is associated with the $n$th BS of tier $j$ if the MT is in the weighted Voronoi cell of $\mathrm{BS}^{(j)}_n$ with respect to $\Phi  = \bigcup_{j \in {\cal K}} {{\Phi ^{\left( j \right)}}}$. With these assumptions, shadowing can be modeled as a random displacement \cite{Haenggi13} of $\Phi^{(j)}$ \cite{Dhillon14, Martin-Vega16}.

For ease of writing, we introduce the event $\mathcal{X}^{(j)}_{\mathrm{MT}_i}$ as follows.
\begin{definition}
The event $\mathcal{X}^{(j)}_{\mathrm{MT}_i}$ is defined as ``$\mathrm{MT}_i$ \textit{is associated with tier} $j$''.
\end{definition}

In mathematical terms, therefore, the association criterion can be formulated as follows:
%
{ 
\begin{equation}
\label{eq:Xj}
\mathcal{X}^{(j)}_{\mathrm{MT}_i} =
	\left\{ {t^{(j)}}{\left( \tau R^{(j)}_{\mathrm{MT}_i,(1)}
	 \right)^{-\alpha}}
	>  {t^{(\tilde{j})}}{\left( \tau R^{(\tilde{j})}_{\mathrm{MT}_i,(1)}
	\right)^{-\alpha}}   \right\}
\end{equation}
}
\noindent where $(\tau R_\mathrm{MT})^{-\alpha}$ is the path-loss at a distance%
\footnote{Throughout this paper, all the distances implicitly include shadowing.}
 $R_\mathrm{MT}$ from the transmitter, $\tilde{j} = \left\{k \in \mathcal{K}: k \neq j \right\}$ is the complementary tier of $j$, i.e., $\tilde{1}=2$ and $\tilde{2}=1$, ${R}^{(\tilde{j})}_{x,(q)}$ is the distance from $x$ to the $q$th nearest BS of tier $\tilde{j}$, i.e., ${R}^{(\tilde{j})}_{x,(1)}$ is the distance to the nearest BS. The association weights $t^{(1)}$ and $t^{(2)}$ allow us to model the GCA criterion, which encompasses the SPLA criterion for $t^{(1)}=t^{(2)}$.

Throughout this paper, the analysis is performed for the probe or typical MT, i.e., for a randomly chosen MT, which is denoted by $\mathrm{MT_0}$. Its serving BS is referred to as the probe BS.

\subsection{Scheduling}
\label{sec:Scheduling}
We consider full-frequency reuse, where all the BSs share the same bandwidth. Each BS has available a bandwidth of $b_w$ Hz that is shared among the MTs that are in its Voronoi cell. In practice, $b_w$ is divided in orthogonal RBs and each scheduled MT in each cell transmits in one (or several) of these RBs. Thus, no intra-cell interference is available. This implies that a single MT per BS can interfere with the probe MT. The set of active interfering MTs of tier $k$ that are scheduled for transmission in a given RB is denoted by $\Psi^{(k)}$. For tractability, we assume that the number of RBs is large enough to be regarded as a continuous resource by the scheduler.

Based on these assumptions, the scheduling process of every BS consists of two steps:
\begin{enumerate}
\item To determine the set of active MTs. The active transmitters are the MTs that, simultaneously, cause less interference than $i_0$ to any BSs and that transmit with less power than $p_{\mathrm{max}}$. The MTs that do not fulfill these two constraints are turned off (muted).
\item Resource allocation. Once the active MTs in each cell are identified, the bandwidth of each BS is equally divided among the active MTs associated with it. Let $N^\mathcal{A}_{\mathrm{BS}^{(j)}_n}$ be the number of MTs associated with BS $\mathrm{BS}^{(j)}_n$. Each of them is allocated a bandwidth $b_w/N^\mathcal{A}_{\mathrm{BS}^{(j)}_n}$%
\footnote{Although, in practice, the bandwidth is divided in RBs, we assume that it can be treated as a continuous resource and hence that it can be equally divided among the active MTs. This is assumed in \cite{Singh15} as well.}
 Hz.
\end{enumerate}

This scheduling process characterizes the IAM scheme and makes it different from the IAFPC scheme in \cite{Martin-Vega16}. In \cite{Martin-Vega16}, all the MTs are active and power control is responsible for controlling the level of interference, by making sure that the interference level at any BS is less than $i_0$.

To better understand the implications of interference awareness on turning off (muting) some MTs, we analyze the case study $i_0 \to \infty$ as well, which is referred to as Interference-Unaware Muting (IUM)\footnote{In the present paper, IUM and Interference Unaware FPC (IUFPC) schemes are similar but slightly different. IUM is referred to a setup where $i_0 \to \infty$ and $p_{{\rm{max}}} < \infty$. IUFPC is referred to a setup where $i_0 \to \infty$ and $p_{{\rm{max}}} \to \infty$. As for IUM, only the constraint on the maximum transmit power exists. As for IUFPC, there is no constraint on either the maximum transmit power or the maximum interference.}. 
%
For ease of writing, we introduce some definitions that are useful for mathematical analysis.
\begin{definition}
The event $\mathcal{Q}^{(m)}_{\mathrm{MT}_i}$ is defined as ``\textit{the most interfered BS of MT$_i$ belongs to tier m}''.
\end{definition}
\begin{definition}
The event $\mathcal{X}^{(j,m)}_{\mathrm{MT}_i} = \mathcal{X}^{(j)}_{\mathrm{MT}_i} \cap \mathcal{Q}^{(m)}_{\mathrm{MT}_i}$ is defined as ``\textit{MT$_i$ is associated with tier j and the most interfered BS of MT$_i$ belongs to tier m}''.
\end{definition}

In mathematical terms, $\mathcal{X}^{(j,m)}_{\mathrm{MT}_i}$ can be formulated as follows:
\ifTwoColumns
\begin{align}
\label{eq:Xjm}
	&\mathcal{X}^{(j,m)}_{\mathrm{MT}_i} = \mathcal{X}^{(j)}_{\mathrm{MT}_i} \cap
	\overbrace{ \left\{ R^{(j)}_{\mathrm{MT}_i,(2)} > R^{(m)}_{\mathrm{MT}_i,(1)} \right\}}
	^{\mathcal{Q}^{(m)}_{\mathrm{MT}_i}},\, \mathrm{if} j \neq m \nonumber \\	
	&\mathcal{X}^{(j,m)}_{\mathrm{MT}_i} = \mathcal{X}^{(j)}_{\mathrm{MT}_i} \cap
	\overbrace{ \left\{ R^{(j)}_{\mathrm{MT}_i,(2)} < R^{(\tilde{j})}_{\mathrm{MT}_i,(1)}
	\right\} }
	^{\mathcal{Q}^{(m)}_{\mathrm{MT}_i}} ,\, \mathrm{if} j = m
\end{align}
\else
\begin{align}
\label{eq:Xjm}
	&\mathcal{X}^{(j,m)}_{\mathrm{MT}_i} = \mathcal{X}^{(j)}_{\mathrm{MT}_i} \cap
	\overbrace{ \left\{ R^{(j)}_{\mathrm{MT}_i,(2)} > R^{(m)}_{\mathrm{MT}_i,(1)} \right\}}
	^{\mathcal{Q}^{(m)}_{\mathrm{MT}_i}}\, \mathrm{if} \; j \neq m;\,
	\mathcal{X}^{(j,m)}_{\mathrm{MT}_i} = \mathcal{X}^{(j)}_{\mathrm{MT}_i} \cap
	\overbrace{ \left\{ R^{(j)}_{\mathrm{MT}_i,(2)} < R^{(\tilde{j})}_{\mathrm{MT}_i,(1)}
	\right\} }
	^{\mathcal{Q}^{(m)}_{\mathrm{MT}_i}} \, \mathrm{if} \; j = m
\end{align}
\fi

According to IAM, the MTs that either cause higher interference than $i_0$ or transmit with higher power than $p_\mathrm{max}$ are kept silent. The set of active MTs is defined as follows.
\begin{definition}
The event $\mathcal{A}_{\mathrm{MT}_i}$ is defined as ``\textit{MT$_i$ is active}''.
\end{definition}

In mathematical terms, $\mathcal{A}_{\mathrm{MT}_i}$ can be formulated as follows:
\ifTwoColumns
\begin{align}
\label{eq:A_MTi}
	\mathcal{A}_{\mathrm{MT}_i} &=
	\left\{ \left( p_{\rm MT} \left( R_{\mathrm{MT}_i} \right)
	< i_0 \left( \tau U_{\mathrm{MT}_i} \right)^{\alpha} \right. \right.
	\nonumber \\
	&\cap \left. \left. p_{\rm MT} \left( R_{\mathrm{MT}_i} \right)
	< p_\mathrm{max} \right)  \right\}
\end{align}
\else
\begin{align}
\label{eq:A_MTi}
	\mathcal{A}_{\mathrm{MT}_i} =
	\left\{  \left( p_{\rm MT} \left(  R_{\mathrm{MT}_i} \right)
	< i_0 \left( \tau U_{\mathrm{MT}_i} \right)^{\alpha} \right)
	\cap \left( p_{\rm MT} \left(  R_{\mathrm{MT}_i} \right)
	< p_\mathrm{max}  \right) \right\}
\end{align}
\fi
%
%
\noindent where $p_{\rm MT}(r)$, $p_0$ and $\epsilon$ are related to power control and they are described in Table \ref{tab:Symbols},  
$R_{\mathrm{MT}_i}$ is the distance between $\mathrm{MT_i}$ and its serving BS, and $U_{\mathrm{MT}_i}$ is the distance between $\mathrm{MT_i}$ and its most interfered BS. If the probe MT is associated with tier $j$, i.e., the event $\mathcal{X}^{(j)}_{\mathrm{MT}_i}$ is true, then $R_{\mathrm{MT}_i} = R^{(j)}_{\mathrm{MT}_i,(1)}$. The distance $U_{\mathrm{MT}_i}$ depends, on the other hand, on the event $\mathcal{X}^{(j,m)}_{\mathrm{MT}_i}$. Accordingly, $U_{\mathrm{MT}_i} = R^{(m)}_{\mathrm{MT}_i,(1)}$ if $j \neq m$ and $U_{\mathrm{MT}_i} = R^{(j)}_{\mathrm{MT}_i,(2)}$ if $j = m$.
{\color{black} The aim of event $\mathcal{A}_{\rm MT_i}$ is to capture the spatial correlation between the position of a given MT, its serving BS and it most interfered BS, which follows from the muting process. }

As far as IAM is concerned, fractional power control is applied at the physical layer and is interference-unaware, i.e., the transmit power of the MTs that are not turned off depends only on path-loss and shadowing and it can be expressed as $p_\mathrm{MT} \left( R_{\mathrm{MT}_0} \right)$. If the MTs are muted, on the other hand, their transmit power is equal to zero. This implies that their associated ${\rm SINR}$, ${\rm BR}$, etc. are, by definition, equal to zero as well.

%
%

\subsection{SINR}
\label{sec:SINR}
The SINR of the typical active MT that is measured at the probe BS can be formulated as:
\begin{equation}
\label{eq:SINR}
	\mathrm{SINR}_{\mathrm{MT}_0} = \frac{ H_{\mathrm{MT}_0} \left(\tau R_{\mathrm{MT}_0} \right)^{-\alpha}
	p_\mathrm{MT} \left(R_{\mathrm{MT}_0} \right) }{I + \sigma_n^2 }
\end{equation}
where $H_{\mathrm{MT}_0}$ is the channel gain, $R_{\mathrm{MT}_0}$ is the distance from the serving BS, $p_\mathrm{MT} \left(R_{\mathrm{MT}_0} \right)$ is the transmit power, $I$ is the other-cell interference, and $\sigma_n^2$ is the noise power.

In the UL, as discussed in Section \ref{sec:Introduction}, the set of interfering MTs does not constitute a PPP, even though the MTs and BSs are distributed according to a PPP. Further details can be found in \cite{DiRenzo16b} and \cite{Martin-Vega16}. This makes the mathematical analysis intractable. In the present paper, the distinctive scheduling process of IAM negatively affects the mathematical tractability of the problem at hand even further. To make the analysis tractable, some approximations for modeling the set of active MTs are needed. In \cite{DiRenzo16b} and \cite{Martin-Vega16}, it is shown that a tractable approximation consists of assuming that the set of active MTs can still be modeled as a PPP, provided that appropriate spatial constraints on the locations of the MTs are introduced. Stated differently, the set of active MTs is modeled as a spatially-thinned PPP or equivalently as a non-homogeneous PPP.

{\color{black} Before introducing the approach to model interfering MTs' locations, the following events need to be defined:}
%
%
\begin{definition}
The event $\mathcal{O}^{(j,k)}_{\mathrm{MT}_i}$ is defined as ``\textit{the interfering MT$_i$ of tier $k$ receives higher weighted average power from its serving BS than from the probe BS that belongs to tier $j$}''.
\end{definition}

In mathematical terms, $\mathcal{O}^{(j,k)}_{\mathrm{MT}_i}$ can be formulated as follows:
\begin{equation}
\label{eq:O_MTi}
	\mathcal{O}^{(j,k)}_{\mathrm{MT}_i} = \left\{ t^{(k)}
	\left( \tau R_{\mathrm{MT}_i} \right)^{-\alpha} >
	t^{(j)} \left( \tau D_{\mathrm{MT}_i} \right)^{-\alpha} \right\}
\end{equation}

\begin{definition}
The event $\mathcal{Z}_{\mathrm{MT}_i}$ is defined as ``\textit{the interfering MT$_i$ causes a level of interference less than $i_0$ to the probe BS}''.
\end{definition}

In mathematical terms, $\mathcal{Z}_{\mathrm{MT}_i}$ can be formulated as follows:
\begin{equation}
\label{eq:ZMTi}
\mathcal{Z}_{\mathrm{MT}_i} =
    \left\{ p_{\mathrm{MT}} \left( R_{\mathrm{MT}_i}\right) \left( \tau D_{\mathrm{MT}_i} \right)^{-\alpha}
     < i_0 \right\}
\end{equation}

Hence, inspired by \cite{DiRenzo16b} and \cite{Martin-Vega16}, our mathematical framework is based on the following approximation.
\begin{assumption}
\label{assumpt: Interference}
The other-cell interference of the typical \textit{active} MT is approximated as \cite{Martin-Vega16}:
\begin{equation}
\label{eq:I}
	I \approx \sum_{k \in \mathcal{K}} \sum_{\mathrm{MT}_i \in \Psi^{(k)}}
	\frac{H_{\mathrm{MT}_i} p_{\mathrm{MT}} \left( R_{\mathrm{MT}_i} \right)}
	{ \left(\tau D_{\mathrm{MT}_i} \right)^{\alpha} }	
	\mathbf{1} \left( \mathcal{O}^{(j,k)}_{\mathrm{MT}_i} \right)
	\mathbf{1} \left( \mathcal{Z}_{\mathrm{MT}_i} \right)
\end{equation}
\noindent where $\Psi^{(k)}$ is a PPP of intensity $\lambda^{(k)}$ whose points constitute the locations of the interfering MTs that are scheduled for transmission in the same RB as that of the typical MT, the events $\mathcal{O}^{(j,k)}_{\mathrm{MT}_i}$ and $\mathcal{Z}_{\mathrm{MT}_i}$ take into account the necessary spatial constraints imposed by the cell association criterion and the maximum interference and power constraints, respectively, $R_{\mathrm{MT}_i}$ and $D_{\mathrm{MT}_i}$ are the distances from ${\mathrm{MT}_i}$ to its own serving BS and to the probe BS, respectively.
\end{assumption}

More specifically, i) the event $\mathcal{O}^{(j,k)}_{\mathrm{MT}_i}$ is necessary to account for the spatial correlation that exists between the locations of the probe BS, the interfering MTs and their serving BSs, since the interfering MTs must lie outside the Voronoi cell of the probe BS by definition of cell association, and ii) the event $\mathcal{Z}_{\mathrm{MT}_i}$ is necessary to account for the fact that the interfering MTs need to cause less interference than $i_0$ according to the IAM scheduling process. 

The next two sections provide mathematical expressions of the CCDF of the SINR and of the mean and variance of the other-cell interference for GCA and SPLA cell association criteria respectively. 
%
%

\section{General Cell Association Criterion}
\label{sec:General Association}
We start introducing some enabling results for proving the main theorems of this section.
\begin{proposition}
\label{prop:IAM Xjm}
The probability that the typical MT is active and is associated with tier $j$ is:
\ifTwoColumns
\begin{align}
\label{eq:Pr(Xjm and A)}
\Pr \left( {{\cal A}_{{\rm{MT}}_{\rm{0}} } } \right) & = \sum\limits_{j \in {\cal K}} {\int\limits_{v > 0} {{\bf{1}}\left( {v < \frac{1}{\tau }\left( {\frac{{p_{\max } }}{{p_0 }}} \right)^{\frac{1}{{\alpha }}} } \right)} }
\nonumber \\
& \times \left( {\nu ^{(j)} \left( v \right) + \eta ^{(j)} \left( v \right)} \right){\rm{d}}v
\end{align}
\else
\begin{align}
\label{eq:Pr(Xjm and A)}
\Pr \left( {{\cal A}_{{\rm{MT}}_{\rm{0}} } } \right) = \sum\limits_{j \in {\cal K}} {\int\limits_{v > 0} {{\bf{1}}\left( {v < \frac{1}{\tau }\left( {\frac{{p_{\max } }}{{p_0 }}} \right)^{\frac{1}{{\alpha }}} } \right)} } \left( {\nu ^{(j)} \left( v \right) + \eta ^{(j)} \left( v \right)} \right){\rm{d}}v
\end{align}
\fi

\noindent where $\nu^{(j)}(v)$ and $\eta^{(j)}(v)$ are defined in (\ref{eq:nu}) and (\ref{eq:kappa}), respectively.

\begin{figure*}[t]
\ifTwoColumns
\normalsize
\else
\small
\fi
\begin{align}
\label{eq:nu}
	& \nu^{(j)} (v) = 2 \pi v \lambda^{(j)} \Bigg( \mathrm{e}^{-\pi
	\lambda^{(j)} v^2 } \times
    \left( \mathrm{e}^{-\pi \lambda^{(\tilde{j})} \mathrm{max}^2 \left( \left( \frac{p_0}{i_0} \right)
    ^{\frac{1}{\alpha}} \frac{(\tau v)^\epsilon}{\tau},
    \left( \frac{t^{(\tilde{j})}}{t^{(j)}} \right)^{\frac{1}{\alpha}} v \right)}
    - \mathrm{e}^{-\pi \lambda^{(\tilde{j})} v^2}  \right) \times \nonumber \\
    & \mathbf{1} \left( v > \mathrm{max} \left( \left( \frac{p_0}{i_0} \right)
    ^{\frac{1}{\alpha}} \frac{(\tau v)^\epsilon}{\tau},
    \left( \frac{t^{(\tilde{j})}}{t^{(j)}} \right)^{\frac{1}{\alpha}} v \right) \right)
    + \frac{\lambda^{(\tilde{j})}}{\lambda^{(j)} + \lambda^{(\tilde{j})}}
    \mathrm{e}^{-\pi \left( \lambda^{(j)} + \lambda^{(\tilde{j})} \right)
    \mathrm{max}^2 \left(\left( \frac{p_0}{i_0} \right)
    ^{\frac{1}{\alpha}} \frac{(\tau v)^\epsilon}{\tau},
    \left( \frac{t^{(\tilde{j})}}{t^{(j)}} \right)^{\frac{1}{\alpha}} v, v \right) }   \Bigg)
\end{align}
\hrulefill
\begin{align}
\label{eq:kappa}
	& \eta^{(j)} (v) = 2 \pi v \lambda^{(j)} \Bigg( \mathrm{e}^{-\pi
	\lambda^{(\tilde{j})} \left( \frac{t^{(\tilde{j})}}{t^{(j)}}
	\right)^{\frac{2}{\alpha}} v^2 } \times
    \left( \mathrm{e}^{-\pi \lambda^{(j)} \mathrm{max}^2 \left( \left( \frac{p_0}{i_0} \right)
    ^{\frac{1}{\alpha}} \frac{(\tau v)^\epsilon}{\tau}, v \right)}
    - \mathrm{e}^{-\pi \lambda^{(j)}  \left( \frac{t^{(\tilde{j})}}{t^{(j)}}
	\right)^{\frac{2}{\alpha}} v^2 }   \right) \times \nonumber \\
    & \mathbf{1} \left( v >  \left( \frac{t^{(j)}}{t^{(\tilde{j})}}
	\right)^{\frac{1}{\alpha}} \mathrm{max} \left( \left( \frac{p_0}{i_0} \right)
    ^{\frac{1}{\alpha}} \frac{(\tau v)^\epsilon}{\tau}, v \right) \right)
    + \frac{\lambda^{(j)}}{\lambda^{(j)} + \lambda^{(\tilde{j})}}
    \mathrm{e}^{-\pi \left( \lambda^{(j)} + \lambda^{(\tilde{j})} \right)
    \mathrm{max}^2 \left( \left( \frac{p_0}{i_0} \right)
    ^{\frac{1}{\alpha}} \frac{(\tau v)^\epsilon}{\tau},
    v, \left( \frac{t^{(\tilde{j})}}{t^{(j)}} \right)^{\frac{1}{\alpha}} v \right) }   \Bigg)
\end{align}
\hrulefill
\end{figure*}
\end{proposition}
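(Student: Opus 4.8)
The plan is to evaluate $\Pr(\mathcal{A}_{\mathrm{MT}_0})$ directly from the geometry of the two BS processes $\Phi^{(1)},\Phi^{(2)}$: the active event depends only on the distances from $\mathrm{MT}_0$ to its serving and most-interfered BSs, so this probability is exact under the PPP model and needs no interference approximation. First I would split on the serving tier $j$ and the tier $m$ of the most-interfered BS,
\[
\Pr\left(\mathcal{A}_{\mathrm{MT}_0}\right) = \sum_{j\in\mathcal{K}}\sum_{m\in\mathcal{K}} \Pr\left(\mathcal{A}_{\mathrm{MT}_0}\cap\mathcal{X}^{(j,m)}_{\mathrm{MT}_0}\right),
\]
and condition on the serving distance $v=R^{(j)}_{\mathrm{MT}_0,(1)}$, whose density is $2\pi\lambda^{(j)}v\,\mathrm{e}^{-\pi\lambda^{(j)}v^2}$ --- the prefactor common to (\ref{eq:nu}) and (\ref{eq:kappa}). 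Using $p_{\mathrm{MT}}(r)=p_0(\tau r)^{\alpha\epsilon}$ together with (\ref{eq:Xj}), (\ref{eq:Xjm}) and (\ref{eq:A_MTi}), the defining conditions become: the power cap $p_{\mathrm{MT}}(v)<p_{\max}$, which I would pull out as the outer indicator of (\ref{eq:Pr(Xjm and A)}) since it depends on $v$ alone; the interference cap $U_{\mathrm{MT}_0}>(p_0/i_0)^{1/\alpha}(\tau v)^\epsilon/\tau$; and the association constraint $R^{(\tilde j)}_{\mathrm{MT}_0,(1)}>(t^{(\tilde j)}/t^{(j)})^{1/\alpha}v$. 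The summands $m=\tilde j$ and $m=j$ will produce $\nu^{(j)}$ and $\eta^{(j)}$ respectively.

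For $m=\tilde j$ the most-interfered BS lies in the complementary tier, so $U_{\mathrm{MT}_0}=R^{(\tilde j)}_{\mathrm{MT}_0,(1)}$ and, by (\ref{eq:Xjm}), $\mathcal{Q}^{(\tilde j)}$ imposes $R^{(j)}_{\mathrm{MT}_0,(2)}>R^{(\tilde j)}_{\mathrm{MT}_0,(1)}$. I would integrate over $u=R^{(\tilde j)}_{\mathrm{MT}_0,(1)}$, independent of $\Phi^{(j)}$ with density $2\pi\lambda^{(\tilde j)}u\,\mathrm{e}^{-\pi\lambda^{(\tilde j)}u^2}$, setting $M=\max\left((p_0/i_0)^{1/\alpha}(\tau v)^\epsilon/\tau,\,(t^{(\tilde j)}/t^{(j)})^{1/\alpha}v\right)$ for the combined lower bound. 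The event $R^{(j)}_{\mathrm{MT}_0,(2)}>u$ has conditional probability $1$ for $u<v$ and $\mathrm{e}^{-\pi\lambda^{(j)}(u^2-v^2)}$ for $u>v$, using the void property that, given $R^{(j)}_{\mathrm{MT}_0,(1)}=v$, the remaining points of $\Phi^{(j)}$ form a PPP outside the disk of radius $v$. Splitting the $u$-integral at $u=v$ then yields the two summands of $\nu^{(j)}$: the range $M<u<v$ gives $\mathrm{e}^{-\pi\lambda^{(j)}v^2}\left(\mathrm{e}^{-\pi\lambda^{(\tilde j)}M^2}-\mathrm{e}^{-\pi\lambda^{(\tilde j)}v^2}\right)\mathbf{1}(v>M)$, and $u>\max(M,v)$ gives the tail term $\tfrac{\lambda^{(\tilde j)}}{\lambda^{(j)}+\lambda^{(\tilde j)}}\mathrm{e}^{-\pi(\lambda^{(j)}+\lambda^{(\tilde j)})\max^2(M,v)}$.

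For $m=j$ the most-interfered BS lies in the serving tier, so $U_{\mathrm{MT}_0}=R^{(j)}_{\mathrm{MT}_0,(2)}$ and $\mathcal{Q}^{(j)}$ imposes $R^{(j)}_{\mathrm{MT}_0,(2)}<R^{(\tilde j)}_{\mathrm{MT}_0,(1)}$. Here I would integrate over $w=R^{(j)}_{\mathrm{MT}_0,(2)}$, whose conditional density given $R^{(j)}_{\mathrm{MT}_0,(1)}=v$ is $2\pi\lambda^{(j)}w\,\mathrm{e}^{-\pi\lambda^{(j)}(w^2-v^2)}$ for $w>v$; the constraints are $w>\max\left(v,(p_0/i_0)^{1/\alpha}(\tau v)^\epsilon/\tau\right)$ and $R^{(\tilde j)}_{\mathrm{MT}_0,(1)}>\max\left((t^{(\tilde j)}/t^{(j)})^{1/\alpha}v,\,w\right)$, the latter contributing $\mathrm{e}^{-\pi\lambda^{(\tilde j)}\max^2(\cdot)}$. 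Splitting the $w$-integral at $w=(t^{(\tilde j)}/t^{(j)})^{1/\alpha}v$ reproduces the two summands of $\eta^{(j)}$, with the indicator of (\ref{eq:kappa}) recording the ordering under which the first piece is nonempty. Summing over $m$ and $j$ and reinstating the power-cap indicator gives (\ref{eq:Pr(Xjm and A)}).

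The main obstacle is the bookkeeping of the nested maxima: in each sub-case the inner integral must be split according to the ordering of the three thresholds $(p_0/i_0)^{1/\alpha}(\tau v)^\epsilon/\tau$, $(t^{(\tilde j)}/t^{(j)})^{1/\alpha}v$ and $v$, and one must check that the resulting indicators and $\max^2(\cdot)$ arguments assemble into exactly (\ref{eq:nu}) and (\ref{eq:kappa}). The single fact that makes every piece close in elementary form is the conditional law of $R^{(j)}_{\mathrm{MT}_0,(2)}$ given $R^{(j)}_{\mathrm{MT}_0,(1)}=v$, so I would establish that first and reuse it in both sub-cases.
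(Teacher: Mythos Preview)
Your proposal is correct and follows essentially the same approach as the paper: decompose over $(j,m)$, condition on the serving distance $v=R^{(j)}_{\mathrm{MT}_0,(1)}$, and integrate out the nearest BS of the complementary tier and the second-nearest BS of the serving tier using the standard PPP distance laws. The only cosmetic difference is that the paper works directly with the joint density $f_{R^{(j)}_{(1)},R^{(j)}_{(2)}}(r_1,r_2)=4(\pi\lambda^{(j)})^2 r_1 r_2\,\mathrm{e}^{-\pi\lambda^{(j)}r_2^2}$ rather than your (equivalent) conditional formulation of $R^{(j)}_{(2)}$ given $R^{(j)}_{(1)}=v$; the integral splits at $u=v$ and $w=(t^{(\tilde j)}/t^{(j)})^{1/\alpha}v$ that you describe are exactly what produce the two summands of $\nu^{(j)}$ and $\eta^{(j)}$.
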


\begin{proof}
See Appendix \ref{app:Proof of Proposition IAM Xjm}.
\end{proof}

\textbf{Proposition \ref{prop:IAM Xjm}} is useful for understanding and quantifying the fairness of the IAM scheme. The higher $\Pr \left( {{\cal A}_{{\rm{MT}}_{\rm{0}} } } \right)$ is, in fact, the higher the probability that a randomly chosen MT is served in a given RB and, thus, the higher the fairness that it gets access to the available resources is\footnote{The system fairness can be defined in different ways. In \cite{Lin15}, e.g., it is defined based on the proportionally fair criterion and is obtained by computing the logarithm of the average rate. Our framework could be generalized for analyzing the system fairness based on this definition, but this study is outside the scope of the current paper and is postponed to future research.}.

\begin{lemma}
\label{prop:IAM f_RMT_0 cond XjmA}
The Probability Density Function (PDF) of the distance between the typical MT and its serving BS by conditioning on the event $\mathcal{X}^{(j,m)}_{\mathrm{MT}_0} \cap \mathcal{A}_{\mathrm{MT}_0}$ can be formulated as follows:
\ifTwoColumns
\begin{align}
\label{eq:f_RMT_0 cond XjmA IAM}
    f_{R_{\mathrm{MT}_{0}}} & \left( v|\mathcal{X}_{\mathrm{MT}_{0}}^{(j,m)},
    \mathcal{A}_{\mathrm{MT}_{0}} \right) = \nonumber \\
    &\begin{cases}
    \frac{\nu ^{(j)}\left( v \right)}{\Pr \left( \mathcal{X}_{
    \mathrm{MT}_{0}}^{(j,m)},\mathcal{A}_{\mathrm{MT}_{0}}^{(j,m)} \right)},
    v<\frac{1}{\tau }\left( \frac{p_{\mathrm{max} }}{p_{0}}
    \right)^{\frac{1}{\alpha \epsilon }}, &\quad \mathrm{if  }j\ne m \\
    \frac{\eta ^{(j)}\left( v \right)}{\Pr \left( \mathcal{X}_{
    \mathrm{MT}_{0}}^{(j,j)},\mathcal{A}_{\mathrm{MT}_{0}}^{(j,j)} \right)},
    v<\frac{1}{\tau }\left( \frac{p_{\mathrm{max} }}{p_{0}} \right)^{
    \frac{1}{\alpha \epsilon }}, &\quad \mathrm{if  }j=m
    \end{cases}
\end{align}
\else
\begin{align}
\label{eq:f_RMT_0 cond XjmA IAM}
    f_{R_{\mathrm{MT}_{0}}} & \left( v|\mathcal{X}_{\mathrm{MT}_{0}}^{(j,m)},
    \mathcal{A}_{\mathrm{MT}_{0}} \right) =
    \begin{cases}
    \frac{\nu ^{(j)}\left( v \right)}{\Pr \left( \mathcal{X}_{
    \mathrm{MT}_{0}}^{(j,m)},\mathcal{A}_{\mathrm{MT}_{0}}^{(j,m)} \right)} &\quad \mathrm{for  } \;
    0<v<\frac{1}{\tau }\left( \frac{p_{\mathrm{max} }}{p_{0}}
    \right)^{\frac{1}{\alpha \epsilon }} \; \mathrm{and  }  \; j\ne m \\
    \frac{\eta ^{(j)}\left( v \right)}{\Pr \left( \mathcal{X}_{
    \mathrm{MT}_{0}}^{(j,j)},\mathcal{A}_{\mathrm{MT}_{0}}^{(j,j)} \right)} &\quad \mathrm{for  } \;
    0<v<\frac{1}{\tau }\left( \frac{p_{\mathrm{max} }}{p_{0}} \right)^{
    \frac{1}{\alpha \epsilon }}\; \mathrm{and  }  \; j=m
    \end{cases}
\end{align}
\fi
\noindent where $\nu^{(j)}\left( v \right)$ and $\eta^{(j)}\left( v \right)$ are defined in (\ref{eq:nu}) and (\ref{eq:kappa}), respectively.
\end{lemma}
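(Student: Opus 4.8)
The plan is to obtain the conditional PDF directly from the definition of a conditional density, writing $f_{R_{\mathrm{MT}_0}}(v \mid \mathcal{X}^{(j,m)}_{\mathrm{MT}_0}, \mathcal{A}_{\mathrm{MT}_0})$ as the \emph{joint} density of the serving distance and the conditioning events, divided by the probability $\Pr(\mathcal{X}^{(j,m)}_{\mathrm{MT}_0}, \mathcal{A}_{\mathrm{MT}_0})$ of those events. Since $R_{\mathrm{MT}_0} = R^{(j)}_{\mathrm{MT}_0,(1)}$ on $\mathcal{X}^{(j)}_{\mathrm{MT}_0}$, the natural starting point is the nearest-neighbour density of the tier-$j$ PPP, $f_{R^{(j)}_{(1)}}(v) = 2\pi\lambda^{(j)} v\, e^{-\pi\lambda^{(j)} v^2}$. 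The task then reduces to computing the conditional probability, given $R^{(j)}_{(1)} = v$, that the association constraint (\ref{eq:Xj}), the most-interfered-tier constraint (\ref{eq:Xjm}) and the activity constraint (\ref{eq:A_MTi}) hold simultaneously, and to showing that the product of this probability with $f_{R^{(j)}_{(1)}}(v)$ equals $\nu^{(j)}(v)$ in (\ref{eq:nu}) when $j \neq m$ and $\eta^{(j)}(v)$ in (\ref{eq:kappa}) when $j = m$.

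Next I would translate the three events into constraints on the two relevant auxiliary distances: the nearest BS of the complementary tier $R^{(\tilde{j})}_{(1)}$ and the second-nearest BS of tier $j$, $R^{(j)}_{(2)}$. With $p_{\mathrm{MT}}(v) = p_0(\tau v)^{\alpha\epsilon}$, the association event requires $R^{(\tilde{j})}_{(1)} > (t^{(\tilde{j})}/t^{(j)})^{1/\alpha} v$, while the interference part of $\mathcal{A}_{\mathrm{MT}_0}$ requires the most-interfered distance $U_{\mathrm{MT}_0}$ to exceed $(p_0/i_0)^{1/\alpha}(\tau v)^{\epsilon}/\tau$; the maximum-power part contributes only the support bound $v < \tfrac{1}{\tau}(p_{\max}/p_0)^{1/(\alpha\epsilon)}$. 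The two cases differ in the identity of $U_{\mathrm{MT}_0}$: for $j \neq m$ one has $U_{\mathrm{MT}_0} = R^{(\tilde{j})}_{(1)}$ together with $R^{(j)}_{(2)} > R^{(\tilde{j})}_{(1)}$, whereas for $j = m$ one has $U_{\mathrm{MT}_0} = R^{(j)}_{(2)}$ together with $R^{(j)}_{(2)} < R^{(\tilde{j})}_{(1)}$. I would then evaluate the conditional probability by integrating over these auxiliary distances using the void probabilities $\Pr(R^{(\tilde{j})}_{(1)} > w) = e^{-\pi\lambda^{(\tilde{j})} w^2}$ and $\Pr(R^{(j)}_{(2)} > s \mid R^{(j)}_{(1)} = v) = e^{-\pi\lambda^{(j)}(s^2 - v^2)}$ for $s > v$, splitting each integral according to the ordering of $w$ (or $s$) against $v$ and against the thresholds above. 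The Gaussian-type integrals collapse into the two-term structure seen in (\ref{eq:nu}) and (\ref{eq:kappa}): a first term carrying the indicator that the thresholds fall below $v$, and a second term carrying the $\lambda/(\lambda^{(j)}+\lambda^{(\tilde{j})})$ factor with a triple-$\max$ exponent.

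Finally, the normalizing denominator is by construction $\Pr(\mathcal{X}^{(j,m)}_{\mathrm{MT}_0}, \mathcal{A}_{\mathrm{MT}_0}) = \int \nu^{(j)}(v)\,\mathrm{d}v$ (resp. $\int \eta^{(j)}(v)\,\mathrm{d}v$) over the admissible support, which is precisely the per-tier, per-case summand of $\Pr(\mathcal{A}_{\mathrm{MT}_0})$ in \textbf{Proposition \ref{prop:IAM Xjm}}; dividing the joint density by it yields (\ref{eq:f_RMT_0 cond XjmA IAM}) and guarantees that the PDF integrates to one on $0 < v < \tfrac{1}{\tau}(p_{\max}/p_0)^{1/(\alpha\epsilon)}$. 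In practice most of this computation is already performed inside the proof of \textbf{Proposition \ref{prop:IAM Xjm}}, so the lemma should follow by extracting the pre-summation integrand there and reading it as the numerator of a conditional density. I expect the main obstacle to be the careful bookkeeping of the nested $\max$ operators and indicator functions generated by the simultaneous association, most-interfered-tier and interference-threshold constraints, and in particular verifying that the domain split reproduces exactly the two additive terms and their supports in (\ref{eq:nu}) and (\ref{eq:kappa}) rather than an algebraically equivalent but differently grouped expression.
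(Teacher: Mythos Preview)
Your proposal is correct and essentially matches the paper's own proof, which simply says to repeat the steps of Appendix~\ref{app:Proof of Proposition IAM Xjm} to obtain the conditional CDF and then differentiate. You go directly to the joint density rather than through the CDF, but this is the same computation: the integrand $\nu^{(j)}(v)$ (resp.\ $\eta^{(j)}(v)$) produced in Appendix~\ref{app:Proof of Proposition IAM Xjm} is exactly the joint density you describe, and normalizing by $\Pr(\mathcal{X}^{(j,m)}_{\mathrm{MT}_0},\mathcal{A}_{\mathrm{MT}_0})$ yields the result.
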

\begin{proof}
The Cumulative Distribution Function (CDF) of the distance between the typical MT and its serving BS by conditioning on the MT being active and on $\mathcal{X}_{\text{MT}_{\text{0}}}^{(j,m)}$ is obtained by using steps similar to Appendix \ref{app:Proof of Proposition IAM Xjm}. The PDF is obtained from the CDF by computing the derivative.
\end{proof}

In the UL, an important performance metric to study is the average transmit power of the typical MT, which is related to its power consumption. Since some MTs may be turned off in the IAM scheme, this implies that some MTs may transmit zero power, which results in reducing their power consumption. The following proposition provides the average transmit power of the typical MT, by taking into account that the \textit{typical} MT may be a MT that is turned off as it does not fulfill either the maximum power constraint or the maximum interference constraint.
\begin{proposition}
\label{prop:IAM E(P_MT)}
The average transmit power of the typical MT can be formulated as follows:
\ifTwoColumns
\begin{align}
\label{eq:IAM E(P_MT)}
    \hspace{-0.25cm}\mathbb{E} & \left[ P_{\mathrm{MT}_{0}} \right]=\sum\limits_{j\in
    \mathcal{K}}{\sum\limits_{m\in \mathcal{K}}{\Pr \left(
    \mathcal{X}_{\mathrm{MT}_{0}}^{(j,m)},\mathcal{A}_{\mathrm{MT}_{0}} \right)}}
    \int\limits_{p=0}^{\infty } \frac{p}{\tau p_{0}\alpha \epsilon }
    \nonumber \\
    & \times \left( \frac{p}{p_{0}} \right)^{\frac{1}{\alpha \epsilon }-1}
    f_{R_{\mathrm{MT}_{0}}}
    \left( \frac{1}{\tau }\left( \frac{p}{p_{0}} \right)^{\frac{1}{
    \alpha \epsilon }}|\mathcal{X}_{
    \mathrm{MT}_{0}}^{(j,m)},\mathcal{A}_{\mathrm{MT}_{0}} \right) \mathrm{d}p
\end{align}
\else
\begin{align}
\label{eq:IAM E(P_MT)}
    \hspace{-0.5cm} \mathbb{E}  \left[ P_{\mathrm{MT}_{0}} \right]=\sum\limits_{j\in
    \mathcal{K}}{\sum\limits_{m\in \mathcal{K}}{\Pr \left(
    \mathcal{X}_{\mathrm{MT}_{0}}^{(j,m)},\mathcal{A}_{\mathrm{MT}_{0}} \right)}}
    \int\limits_{0}^{\infty } \frac{p}{\tau p_{0}\alpha \epsilon }
      \left( \frac{p}{p_{0}} \right)^{\frac{1}{\alpha \epsilon }-1}
    f_{R_{\mathrm{MT}_{0}}}
    \left( \frac{1}{\tau }\left( \frac{p}{p_{0}} \right)^{\frac{1}{
    \alpha \epsilon }}|\mathcal{X}_{
    \mathrm{MT}_{0}}^{(j,m)},\mathcal{A}_{\mathrm{MT}_{0}} \right) \mathrm{d}p
\end{align}
\fi
\noindent where $f_{R_{\mathrm{MT}_{0}}}\left(v |\mathcal{X}_{
    \mathrm{MT}_{0}}^{(j,m)},\mathcal{A}_{\mathrm{MT}_{0}}\right)$ is in (\ref{eq:f_RMT_0 cond XjmA IAM}) and ${\Pr \left(
    \mathcal{X}_{\mathrm{MT}_{0}}^{(j,m)},\mathcal{A}_{\mathrm{MT}_{0}} \right)}$ is defined in Appendix A.
\end{proposition}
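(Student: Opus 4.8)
The plan is to start from the operational definition of the transmit power. According to the IAM scheduling rule described in Section~\ref{sec:Scheduling}, the typical MT transmits with power $p_{\mathrm{MT}}(R_{\mathrm{MT}_0}) = p_0 (\tau R_{\mathrm{MT}_0})^{\alpha\epsilon}$ whenever it is active and transmits zero power whenever it is muted. Hence the random transmit power admits the compact representation $P_{\mathrm{MT}_0} = p_{\mathrm{MT}}(R_{\mathrm{MT}_0})\,\mathbf{1}\!\left(\mathcal{A}_{\mathrm{MT}_0}\right)$, and the entire proof reduces to evaluating the expectation of this product.

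First I would partition the sample space according to which tier serves the typical MT and which tier hosts its most interfered BS. Since the events $\left\{ \mathcal{X}^{(j,m)}_{\mathrm{MT}_0} \right\}_{j,m \in \mathcal{K}}$ are mutually exclusive and exhaustive (every MT is served by exactly one tier and its most interfered BS lies in exactly one tier), the law of total expectation gives
\begin{equation}
\mathbb{E}\!\left[ P_{\mathrm{MT}_0} \right] = \sum_{j \in \mathcal{K}} \sum_{m \in \mathcal{K}} \Pr\!\left( \mathcal{X}^{(j,m)}_{\mathrm{MT}_0}, \mathcal{A}_{\mathrm{MT}_0} \right) \mathbb{E}\!\left[ p_{\mathrm{MT}}(R_{\mathrm{MT}_0}) \,\big|\, \mathcal{X}^{(j,m)}_{\mathrm{MT}_0}, \mathcal{A}_{\mathrm{MT}_0} \right],
\end{equation}
where the indicator $\mathbf{1}(\mathcal{A}_{\mathrm{MT}_0})$ has been absorbed into the joint weight $\Pr( \mathcal{X}^{(j,m)}_{\mathrm{MT}_0}, \mathcal{A}_{\mathrm{MT}_0})$ and into the conditioning event, leaving a power expectation taken over the serving distance of an \emph{active} MT.

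Next I would evaluate each conditional expectation by integrating $p_{\mathrm{MT}}(v)$ against the conditional serving-distance density supplied by Lemma~\ref{prop:IAM f_RMT_0 cond XjmA}, namely $\mathbb{E}[p_{\mathrm{MT}}(R_{\mathrm{MT}_0})\,|\,\cdots] = \int_0^\infty p_0 (\tau v)^{\alpha\epsilon} f_{R_{\mathrm{MT}_0}}(v\,|\,\mathcal{X}^{(j,m)}_{\mathrm{MT}_0}, \mathcal{A}_{\mathrm{MT}_0})\,\mathrm{d}v$. The closing step is the change of variable from distance to power, $p = p_{\mathrm{MT}}(v) = p_0(\tau v)^{\alpha\epsilon}$, so that $v = \tau^{-1}(p/p_0)^{1/(\alpha\epsilon)}$ and $\mathrm{d}v = (\tau p_0 \alpha\epsilon)^{-1}(p/p_0)^{1/(\alpha\epsilon)-1}\,\mathrm{d}p$. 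Substituting this Jacobian and noting that the integrand $p_0(\tau v)^{\alpha\epsilon}$ equals $p$ reproduces exactly the integrand of (\ref{eq:IAM E(P_MT)}). The upper support limit $v < \tau^{-1}(p_{\max}/p_0)^{1/(\alpha\epsilon)}$ of the conditional density maps precisely to the power cap $p < p_{\max}$, so extending the $p$-integral to infinity is harmless because the conditional density vanishes beyond that limit.

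The computation is essentially routine, and the only delicate points are bookkeeping rather than analysis. First, one must confirm that $\left\{ \mathcal{X}^{(j,m)}_{\mathrm{MT}_0} \right\}$ genuinely partitions the probability space, so that the total-expectation decomposition is exact and not merely an upper bound. Second, one must verify that conditioning on $\mathcal{A}_{\mathrm{MT}_0}$ is already consistently encoded in the conditional density of Lemma~\ref{prop:IAM f_RMT_0 cond XjmA} (through both its support truncation and its normalization by $\Pr(\mathcal{X}^{(j,m)}_{\mathrm{MT}_0}, \mathcal{A}_{\mathrm{MT}_0})$), so that the muting constraint is not double-counted. The change of variables is the only nontrivial algebraic manipulation and is where the factor $(\tau p_0 \alpha\epsilon)^{-1}(p/p_0)^{1/(\alpha\epsilon)-1}$ emerges; I expect this Jacobian to be the main obstacle only in the sense of getting the exponents right, not in any conceptual difficulty.
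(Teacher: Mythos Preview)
Your proposal is correct and follows essentially the same approach as the paper: condition on the events $\mathcal{A}_{\mathrm{MT}_0}$ and $\mathcal{X}^{(j,m)}_{\mathrm{MT}_0}$, apply the total probability theorem, and then perform the change of variable from serving distance to transmit power. The paper's proof is terse and omits the explicit Jacobian computation that you spell out, but the underlying argument is identical.
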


\begin{proof}
It follows by computing the average transmit power by conditioning on the events $\mathcal{A}_{\mathrm{MT}_{0}}$ and $\mathcal{X}_{\text{MT}_{\text{0}}}^{(j,m)}$. The final result is obtained from the total probability theorem.
\end{proof}

\begin{remark}[Exact analysis]
\label{rem:Exact analysis}
The previous propositions and lemmas are exact, since they do not depend on the set of active interfering MTs but depend only on the locations of the BSs, which constitute a PPP, and on the typical MT. In other words, \textbf{Assumption \ref{assumpt: Interference}} is not applied.
\end{remark}

The next lemma provides the Laplace transform of the other-cell interference based on its mathematical formulation in (\ref{eq:I}), which exploits \textbf{Assumption \ref{assumpt: Interference}}.
\begin{lemma}
\label{prop:IAM LI}
Assume that the typical MT is associated with a BS of tier $j$. The Laplace transform of the (conditional) interference in (\ref{eq:I}) can be formulated as follows:
\begin{align}
\label{eq:IAM LI(s|Xj)}
    \mathcal{L}_{I}\left( s|\mathcal{X}_{\mathrm{MT}_{0}}^{(j)} \right)
    = \exp \left( \beta^{(j)} (s)  \right)
\end{align}
\noindent where $\beta^{(j)} (s)$ is defined as follows:
\ifTwoColumns
\begin{align}
\label{eq:IAM theta(s)}
    \beta^{(j)} (s)
    & = -\sum\limits_{k\in \mathcal{K}} {2\pi \lambda ^{(k)}
    \sum\limits_{n\in \mathcal{K}}{\Pr \left( \mathcal{Q}_{
    \mathrm{MT}_{i}}^{(n)}|\mathcal{X}_{\mathrm{MT}_{i}}^{(k)},\mathcal{A}_{
    \mathrm{MT}_{i}} \right)}}
    \nonumber \\
    & \int\limits_{0}^{\infty }
    {f_{R_{\mathrm{MT}_{i}}}\left( r|\mathcal{X}_{\mathrm{MT}_{i}}^{(k,n)}
    ,\mathcal{A}_{\mathrm{MT}_{i}} \right)} \chi \left( s,r \right)\mathrm{d}r,
\end{align}
\else
\begin{align}
\label{eq:IAM theta(s)}
    \beta^{(j)} (s)
     = -\sum\limits_{k\in \mathcal{K}} {2\pi \lambda ^{(k)}
    \sum\limits_{n\in \mathcal{K}}{\Pr \left( \mathcal{Q}_{
    \mathrm{MT}_{i}}^{(n)}|\mathcal{X}_{\mathrm{MT}_{i}}^{(k)},\mathcal{A}_{
    \mathrm{MT}_{i}} \right)}}
    & \int\limits_{0}^{\infty }
    {f_{R_{\mathrm{MT}_{i}}}\left( r|\mathcal{X}_{\mathrm{MT}_{i}}^{(k,n)}
    ,\mathcal{A}_{\mathrm{MT}_{i}} \right)} \chi \left( s,r \right)\mathrm{d}r,
\end{align}
\fi

\noindent $f_{R_{\mathrm{MT}_{i}}} \left( r|\mathcal{X}_{\mathrm{MT}_{i}}^{(k,n)},\mathcal{A}_{\mathrm{MT}_{i}} \right)$ is the PDF of the distance between the $i$th interfering MT and its serving BS, which is provided in Lemma \ref{prop:IAM f_RMT_0 cond XjmA}, $\chi(s,r)$ is defined as follows:
\ifTwoColumns
\begin{align}
\label{eq:chi(s,r)}
    \chi( s, & r )=\frac{p_{0}s\left( \tau r \right)^{\alpha \epsilon }
    \tau ^{-\alpha }}{\alpha -2}\mathrm{max}^{2-\alpha }\left( \left(
    \frac{t^{(j)}}{t^{(k)}} \right)^{\frac{1}{\alpha }}r,\left(
    \frac{p_{0}}{i_{0}} \right)^{\frac{1}{\alpha }}\frac{\left( \tau r \right)
    ^{\epsilon }}{\tau } \right) \nonumber \\
    & _{2}\mathrm{F}_{1}\left( 1,
    \frac{\alpha -2}{\alpha },2-\frac{2}{\alpha },-p_{0}s\left( \tau r \right)
    ^{\alpha \epsilon }\tau ^{-\alpha } \right.
    \nonumber \\
    & \left. \mathrm{max}^{-\alpha }\left( \left(
    \frac{t^{(j)}}{t^{(k)}} \right)^{\frac{1}{\alpha }}r,\left(
    \frac{p_{0}}{i_{0}} \right)^{\frac{1}{\alpha }}\frac{\left( \tau r \right)
    ^{\epsilon }}{\tau } \right) \right)
\end{align}
\else
\begin{align}
\label{eq:chi(s,r)}
    \chi( s, r ) &=\frac{p_{0}s\left( \tau r \right)^{\alpha \epsilon }
    \tau ^{-\alpha }}{\alpha -2}\mathrm{max}^{2-\alpha }\left( \left(
    \frac{t^{(j)}}{t^{(k)}} \right)^{\frac{1}{\alpha }}r,\left(
    \frac{p_{0}}{i_{0}} \right)^{\frac{1}{\alpha }}\frac{\left( \tau r \right)
    ^{\epsilon }}{\tau } \right) \nonumber \\
    & \times _{2}\mathrm{F}_{1}\left( 1,
    \frac{\alpha -2}{\alpha },2-\frac{2}{\alpha },-p_{0}s\left( \tau r \right)
    ^{\alpha \epsilon }\tau ^{-\alpha } \right.
     \left. \mathrm{max}^{-\alpha }\left( \left(
    \frac{t^{(j)}}{t^{(k)}} \right)^{\frac{1}{\alpha }}r,\left(
    \frac{p_{0}}{i_{0}} \right)^{\frac{1}{\alpha }}\frac{\left( \tau r \right)
    ^{\epsilon }}{\tau } \right) \right)
\end{align}
\fi
\noindent and $\Pr \left( \mathcal{Q}_{\mathrm{MT}_{i}}^{(n)}|\mathcal{X}_{\mathrm{MT}_{i}}^{(k)},\mathcal{A}_{\mathrm{MT}_{i}} \right)$ is defined as follows:
\ifTwoColumns
\begin{align}
\label{eq:IAM LI(s|Xj) Part 2}
    \Pr \left( \mathcal{Q}_{\mathrm{MT}_{i}}^{(n)}|\mathcal{X}_{\mathrm{MT}_{i}}^{(k)}
    ,\mathcal{A}_{\mathrm{MT}_{i}} \right)
    &= \frac{\Pr \left(
    \mathcal{X}_{\mathrm{MT}_{i}}^{(k,n)},\mathcal{A}_{\mathrm{MT}_{i}} \right)}
    {\Pr \left( \mathcal{X}_{\mathrm{MT}_{i}}^{(k)},\mathcal{A}_{\mathrm{MT}_{i}}
    \right)} \nonumber \\
    &=\frac{\Pr \left( \mathcal{X}_{\mathrm{MT}_{i}}^{(k,n)}
    ,\mathcal{A}_{\mathrm{MT}_{i}} \right)}{\sum\limits_{q\in \mathcal{K}}
    {\Pr \left( \mathcal{X}_{\mathrm{MT}_{i}}^{(k,q)},\mathcal{A}_{\mathrm{MT}_{i}} \right)}}
\end{align}
\else
\begin{align}
\label{eq:IAM LI(s|Xj) Part 2}
    \Pr \left( \mathcal{Q}_{\mathrm{MT}_{i}}^{(n)}|\mathcal{X}_{\mathrm{MT}_{i}}^{(k)}
    ,\mathcal{A}_{\mathrm{MT}_{i}} \right)
    = \frac{\Pr \left(
    \mathcal{X}_{\mathrm{MT}_{i}}^{(k,n)},\mathcal{A}_{\mathrm{MT}_{i}} \right)}
    {\Pr \left( \mathcal{X}_{\mathrm{MT}_{i}}^{(k)},\mathcal{A}_{\mathrm{MT}_{i}}
    \right)}
    =\frac{\Pr \left( \mathcal{X}_{\mathrm{MT}_{i}}^{(k,n)}
    ,\mathcal{A}_{\mathrm{MT}_{i}} \right)}{\sum\nolimits_{q\in \mathcal{K}}
    {\Pr \left( \mathcal{X}_{\mathrm{MT}_{i}}^{(k,q)},\mathcal{A}_{\mathrm{MT}_{i}} \right)}}
\end{align}
\fi
\end{lemma}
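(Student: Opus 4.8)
The plan is to treat the conditional-thinning model of Assumption~\ref{assumpt: Interference} as a genuine marked PPP and to compute $\mathcal{L}_I(s|\mathcal{X}^{(j)}_{\mathrm{MT}_0}) = \mathbb{E}[\mathrm{e}^{-sI}|\mathcal{X}^{(j)}_{\mathrm{MT}_0}]$ with the probability generating functional (PGFL). Since the interference in (\ref{eq:I}) is a sum over the mutually independent PPPs $\Psi^{(k)}$, the Laplace transform factorizes into a product over tiers, so that $\log\mathcal{L}_I = \sum_{k\in\mathcal{K}}\log\mathcal{L}_{I^{(k)}}$; this is exactly what produces the outer sum $\sum_{k\in\mathcal{K}}$ and the exponential form $\exp(\beta^{(j)}(s))$ in (\ref{eq:IAM LI(s|Xj)}). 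For each tier $k$ I would view every interfering terminal as a point of $\Psi^{(k)}$ carrying two independent marks, namely the multi-path gain $H_{\mathrm{MT}_i}$ and the serving-distance $R_{\mathrm{MT}_i}$ (which fixes the transmit power $p_{\mathrm{MT}}(R_{\mathrm{MT}_i})$). Applying the PGFL then gives a tier-$k$ log-contribution of the form $-\lambda^{(k)}\int_{\mathbb{R}^2}\mathbb{E}_{H,R}[1-\mathrm{e}^{-s\,g}]\,\mathrm{d}x$, where $g$ is the per-interferer term in (\ref{eq:I}) including both indicators, and where the conditioning on $\mathcal{X}^{(j)}_{\mathrm{MT}_0}$ enters only through the probe-tier index $j$.

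Next I would dispatch the randomness one layer at a time. First, because $H_{\mathrm{MT}_i}$ is unit-mean exponential, the fading expectation collapses to $\mathbb{E}_H[1-\mathrm{e}^{-aD^{-\alpha}H}] = aD^{-\alpha}/(1+aD^{-\alpha})$ with $a = s\,p_{\mathrm{MT}}(R)\,\tau^{-\alpha}$ and $D = D_{\mathrm{MT}_i}$ read off from the integration point $x$. Second, the two indicators are the key simplification: $\mathbf{1}(\mathcal{O}^{(j,k)}_{\mathrm{MT}_i})$ in (\ref{eq:O_MTi}) is equivalent to $D_{\mathrm{MT}_i} > (t^{(j)}/t^{(k)})^{1/\alpha}R_{\mathrm{MT}_i}$, and $\mathbf{1}(\mathcal{Z}_{\mathrm{MT}_i})$ in (\ref{eq:ZMTi}) is equivalent to $D_{\mathrm{MT}_i} > (p_0/i_0)^{1/\alpha}(\tau R_{\mathrm{MT}_i})^{\epsilon}/\tau$, so their product truncates the radial range to $D_{\mathrm{MT}_i} > \max(\cdot,\cdot)$, which is precisely the $\max$ appearing inside $\chi(s,r)$ in (\ref{eq:chi(s,r)}). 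Passing to polar coordinates, $\mathrm{d}x = 2\pi D\,\mathrm{d}D$, yields the prefactor $2\pi\lambda^{(k)}$ of (\ref{eq:IAM theta(s)}) and leaves, for each fixed mark $R=r$, the radial integral $\int_{D_{\min}(r)}^{\infty}\frac{aD^{-\alpha}}{1+aD^{-\alpha}}\,D\,\mathrm{d}D$, with $D_{\min}(r)$ the aforementioned $\max$.

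The closed-form step is to evaluate this radial integral. After the substitution $t = D/D_{\min}(r)$ it reduces to $a\,D_{\min}^{2-\alpha}(r)\int_1^\infty t\,(t^\alpha+b)^{-1}\,\mathrm{d}t$ with $b = a\,D_{\min}^{-\alpha}(r)$; a further substitution $v = t^{-\alpha}$ together with the Euler representation ${}_2\mathrm{F}_1(1,c,c+1,z) = c\int_0^1 u^{c-1}(1-zu)^{-1}\,\mathrm{d}u$, taken with $c = (\alpha-2)/\alpha$ and $z = -b$, reproduces exactly $\chi(s,r)$ as written in (\ref{eq:chi(s,r)}). The final ingredient is the law of the mark $R$. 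Because the per-interferer contribution depends on the most-interfered tier $n$ only through the distribution of $R_{\mathrm{MT}_i}$, I would condition on $\mathcal{X}^{(k,n)}_{\mathrm{MT}_i}$, insert the active-interferer PDF $f_{R_{\mathrm{MT}_i}}(r|\mathcal{X}^{(k,n)}_{\mathrm{MT}_i},\mathcal{A}_{\mathrm{MT}_i})$ from Lemma~\ref{prop:IAM f_RMT_0 cond XjmA}, and average over $n$ with the weights $\Pr(\mathcal{Q}^{(n)}_{\mathrm{MT}_i}|\mathcal{X}^{(k)}_{\mathrm{MT}_i},\mathcal{A}_{\mathrm{MT}_i})$, which are obtained by Bayes' rule as in (\ref{eq:IAM LI(s|Xj) Part 2}); this is merely the law of total probability over the most-interfered tier, and it assembles the inner $\sum_{n\in\mathcal{K}}\Pr(\cdot)\int f_{R_{\mathrm{MT}_i}}(\cdot)\,\chi(s,r)\,\mathrm{d}r$ of (\ref{eq:IAM theta(s)}).

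I expect the genuinely delicate point to be the mark bookkeeping for the interferers rather than any single integral. One must argue, within the conditional-thinning approximation, that an active interferer's position relative to the probe BS (hence $D_{\mathrm{MT}_i}$) and its serving-distance mark $R_{\mathrm{MT}_i}$ may be treated as independent, and that the correct marginal law of $R_{\mathrm{MT}_i}$ is precisely the conditional PDF of Lemma~\ref{prop:IAM f_RMT_0 cond XjmA} -- that is, the same spatial constraints $\mathcal{A}_{\mathrm{MT}_i}$ and $\mathcal{X}^{(k,n)}_{\mathrm{MT}_i}$ shaping the serving distance of the typical MT are inherited by each interferer. Keeping this conditioning consistent with the indicators $\mathcal{O}^{(j,k)}_{\mathrm{MT}_i}$ and $\mathcal{Z}_{\mathrm{MT}_i}$, which already couple $R$ and $D$, is where the argument needs the most care; once it is in place, the fading expectation, the $\max$ truncation, and the hypergeometric radial integral combine directly into (\ref{eq:IAM theta(s)})--(\ref{eq:IAM LI(s|Xj) Part 2}).
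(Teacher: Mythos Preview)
Your proposal is correct and follows essentially the same route as the paper's proof in Appendix~\ref{app:IAM LI}: factorize over tiers, apply the PGFL to $\Psi^{(k)}$, take the Rayleigh-fading expectation, use the indicators $\mathcal{O}^{(j,k)}_{\mathrm{MT}_i}$ and $\mathcal{Z}_{\mathrm{MT}_i}$ to truncate the radial integral at $\max\big((t^{(j)}/t^{(k)})^{1/\alpha}r,\,(p_0/i_0)^{1/\alpha}(\tau r)^{\epsilon}/\tau\big)$, condition on $\mathcal{Q}^{(n)}_{\mathrm{MT}_i}$ via total probability, and insert the PDF of Lemma~\ref{prop:IAM f_RMT_0 cond XjmA}. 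Your explicit hypergeometric reduction of the inner integral and your remarks on the independence-of-marks assumption are in fact more detailed than what the paper provides, which simply states that the inner integral yields $\chi(s,r)$.
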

\begin{proof}
See Appendix \ref{app:IAM LI}.
\end{proof}

From the Laplace transform in \eqref{eq:IAM LI(s|Xj)}, the moments of the interference can be obtained as shown in the next proposition. Of particular interest is the variance of the interference, since its affects the performance of AMC schemes \cite{Zhang12}: the smaller the variance is, the more robust and accurate the estimation of the SINR is, which makes easier the choice of the best MCS to use.
\begin{proposition}
\label{prop:IAM E(I) var(I)}
The mean and variance of the interference can be formulated as follows:
\ifTwoColumns
\begin{align}
\label{eq:IAM var(I)}
    \mathbb{E}\left[ I \right] &= -\sum\limits_{j\in \mathcal{K}}
    {\Pr \left(\mathcal{X}_{\mathrm{MT}_{0}}^{(j)} \right)\beta '^{(j)}\left( 0 \right)}
    \\
    \operatorname{var}\left( I \right) &=-\sum\limits_{j\in \mathcal{K}}{\Pr \left(
    \mathcal{X}_{\mathrm{MT}_{0}}^{(j)} \right)} \nonumber \\
    & \left( \beta ''^{(j)}\left( 0 \right)+\left( \beta '^{(j)}\left( 0 \right)
    \right)^{2}-\left( \mathbb{E}\left[ I \right] \right)^{2} \right)
\end{align}
\else
\begin{align}
\label{eq:IAM var(I)}
 &   \mathbb{E}\left[ I \right] = -\sum\limits_{j\in \mathcal{K}}
    {\Pr \left(\mathcal{X}_{\mathrm{MT}_{0}}^{(j)} \right)\beta '^{(j)}\left( 0 \right)}
    %
    ; \, \operatorname{var}\left( I \right) =-\sum\limits_{j\in \mathcal{K}}{\Pr \left(
    \mathcal{X}_{\mathrm{MT}_{0}}^{(j)} \right)}
     \left( \beta ''^{(j)}\left( 0 \right)+\left( \beta '^{(j)}\left( 0 \right)
    \right)^{2}-\left( \mathbb{E}\left[ I \right] \right)^{2} \right)
\end{align}
\fi
\noindent where the following definitions hold:
\ifTwoColumns
\begin{align}
  \beta'^{(j)}\left( 0 \right) &= -\sum\limits_{k\in \mathcal{K}}{2\pi
  \lambda ^{(k)}}\sum\limits_{n\in \mathcal{K}}{\Pr \left(
  \mathcal{Q}_{\mathrm{MT}_{i}}^{(n)}|\mathcal{X}_{\mathrm{MT}_{i}}^{(k)}
  ,\mathcal{A}_{\mathrm{MT}_{i}} \right) } \nonumber \\
  & \int\limits_{r=0}^{\infty }{f_{R_{\mathrm{MT}_{i}}}\left( r
  |\mathcal{X}_{\mathrm{MT}_{i}}^{(k,n)},\mathcal{A}_{\mathrm{MT}_{i}} \right)}
  \frac{p_{0}\left( \tau r \right)^{\alpha \epsilon } \tau^{-\alpha} }{\alpha -2} \nonumber \\
  & \mathrm{max} ^{2-\alpha }\left( \left( \frac{t^{(j)}}{t^{(k)}} \right)
  ^{\frac{1}{\alpha }}r,\left( \frac{p_{0}}{i_{0}} \right)^{\frac{1}{\alpha }}
  \frac{\left( \tau r \right)^{\epsilon }}{\tau } \right)\mathrm{d}r
\end{align}
\begin{align}
  \beta''^{(j)}\left( 0 \right) &= -\sum\limits_{k\in \mathcal{K}}{2\pi
  \lambda ^{(k)}}\sum\limits_{n\in \mathcal{K}}{\Pr \left( \mathcal{Q}_{
  \mathrm{MT}_{i}}^{(n)}|\mathcal{X}_{\mathrm{MT}_{i}}^{(k)},\mathcal{A}_{\mathrm{MT}_{i}}
  \right)\times }
  \nonumber \\
  & \int\limits_{r=0}^{\infty }{f_{R_{\mathrm{MT}_{i}}}
  \left( r|\mathcal{X}_{\mathrm{MT}_{i}}^{(k,n)},\mathcal{A}_{
  \mathrm{MT}_{i}} \right)}\frac{p_{0}^{2}\left( \tau r \right)^{2\alpha
  \epsilon } \tau^{-2\alpha} }{1-\alpha}
  \nonumber \\
  & \mathrm{max} ^{2\left( 1-\alpha  \right)}\left( \left( \frac{t^{(j)}}{t^{(k)}}
  \right)^{\frac{1}{\alpha }}r,\left( \frac{p_{0}}{i_{0}} \right)^{\frac{1}{
  \alpha }}\frac{\left( \tau r \right)^{\epsilon }}{\tau } \right)\mathrm{d}r
\end{align}
\else
\begin{align}
 & \beta'^{(j)}\left( 0 \right) = -\sum\limits_{k\in \mathcal{K}}{2\pi
  \lambda ^{(k)}}\sum\limits_{n\in \mathcal{K}}{\Pr \left(
  \mathcal{Q}_{\mathrm{MT}_{i}}^{(n)}|\mathcal{X}_{\mathrm{MT}_{i}}^{(k)}
  ,\mathcal{A}_{\mathrm{MT}_{i}} \right) }
  \nonumber \\
  & \times \int\limits_{0}^{\infty }{f_{R_{\mathrm{MT}_{i}}}\left( r
  |\mathcal{X}_{\mathrm{MT}_{i}}^{(k,n)},\mathcal{A}_{\mathrm{MT}_{i}} \right)}
  \frac{p_{0}\left( \tau r \right)^{\alpha \epsilon } \tau^{-\alpha} }{\alpha -2}
   \mathrm{max} ^{2-\alpha }\left( \left( \frac{t^{(j)}}{t^{(k)}} \right)
  ^{\frac{1}{\alpha }}r,\left( \frac{p_{0}}{i_{0}} \right)^{\frac{1}{\alpha }}
  \frac{\left( \tau r \right)^{\epsilon }}{\tau } \right)\mathrm{d}r \\
 & \beta''^{(j)}\left( 0 \right) = -\sum\limits_{k\in \mathcal{K}}{2\pi
  \lambda ^{(k)}}\sum\limits_{n\in \mathcal{K}}{\Pr \left( \mathcal{Q}_{
  \mathrm{MT}_{i}}^{(n)}|\mathcal{X}_{\mathrm{MT}_{i}}^{(k)},\mathcal{A}_{\mathrm{MT}_{i}}
  \right) }
  \nonumber \\
  & \times \int\limits_{0}^{\infty }{f_{R_{\mathrm{MT}_{i}}}
  \left( r|\mathcal{X}_{\mathrm{MT}_{i}}^{(k,n)},\mathcal{A}_{
  \mathrm{MT}_{i}} \right)}\frac{p_{0}^{2}\left( \tau r \right)^{2\alpha
  \epsilon } \tau^{-2\alpha} }{1-\alpha}
   \mathrm{max} ^{2\left( 1-\alpha  \right)}\left( \left( \frac{t^{(j)}}{t^{(k)}}
  \right)^{\frac{1}{\alpha }}r,\left( \frac{p_{0}}{i_{0}} \right)^{\frac{1}{
  \alpha }}\frac{\left( \tau r \right)^{\epsilon }}{\tau } \right)\mathrm{d}r
\end{align}
\fi

\end{proposition}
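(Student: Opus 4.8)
The plan is to read the moments of $I$ off its conditional Laplace transform from Lemma~\ref{prop:IAM LI} via the moment-generating identity $\mathbb{E}\!\left[I^n\mid\mathcal{X}^{(j)}_{\mathrm{MT}_0}\right]=(-1)^n\frac{\mathrm{d}^n}{\mathrm{d}s^n}\mathcal{L}_I\!\left(s\mid\mathcal{X}^{(j)}_{\mathrm{MT}_0}\right)\big|_{s=0}$, valid since $I\ge 0$. Because $\mathcal{L}_I(s\mid\mathcal{X}^{(j)}_{\mathrm{MT}_0})=\exp(\beta^{(j)}(s))$, I first note from \eqref{eq:chi(s,r)} that the prefactor $s$ forces $\chi(0,r)=0$ (the hypergeometric factor being finite at the origin), hence $\beta^{(j)}(0)=0$ and $\mathcal{L}_I(0\mid\mathcal{X}^{(j)}_{\mathrm{MT}_0})=1$, as required. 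Differentiating the exponential gives $\frac{\mathrm{d}}{\mathrm{d}s}\exp(\beta^{(j)}(s))=\beta'^{(j)}(s)\exp(\beta^{(j)}(s))$ and $\frac{\mathrm{d}^2}{\mathrm{d}s^2}\exp(\beta^{(j)}(s))=\left(\beta''^{(j)}(s)+(\beta'^{(j)}(s))^2\right)\exp(\beta^{(j)}(s))$, so that $\mathbb{E}[I\mid\mathcal{X}^{(j)}_{\mathrm{MT}_0}]=-\beta'^{(j)}(0)$ and $\mathbb{E}[I^2\mid\mathcal{X}^{(j)}_{\mathrm{MT}_0}]=\beta''^{(j)}(0)+(\beta'^{(j)}(0))^2$.

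The second step is to evaluate $\beta'^{(j)}(0)$ and $\beta''^{(j)}(0)$ by differentiating $\beta^{(j)}(s)$ under the integral sign, which reduces to computing $\partial_s\chi(s,r)$ and $\partial_s^2\chi(s,r)$ at $s=0$. Abbreviating $A=p_0(\tau r)^{\alpha\epsilon}\tau^{-\alpha}$, letting $M$ denote the $\mathrm{max}(\cdot)$ term in \eqref{eq:chi(s,r)}, and setting $c=-AM^{-\alpha}$ so that $\chi(s,r)=\frac{As}{\alpha-2}M^{2-\alpha}\,{}_{2}\mathrm{F}_{1}(1,\tfrac{\alpha-2}{\alpha},2-\tfrac{2}{\alpha},cs)$, the first derivative at $s=0$ needs only ${}_{2}\mathrm{F}_{1}(\cdot,0)=1$ and yields $\chi'(0,r)=\frac{A}{\alpha-2}M^{2-\alpha}$, reproducing the integrand displayed for $\beta'^{(j)}(0)$. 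For the second derivative I apply the product rule to $s\,{}_{2}\mathrm{F}_{1}(cs)$, using $\frac{\mathrm{d}}{\mathrm{d}z}{}_{2}\mathrm{F}_{1}(1,\tfrac{\alpha-2}{\alpha},2-\tfrac{2}{\alpha},z)\big|_{z=0}=\frac{\alpha-2}{2(\alpha-1)}$, to obtain $\chi''(0,r)=2c\,\frac{A}{\alpha-2}M^{2-\alpha}\cdot\frac{\alpha-2}{2(\alpha-1)}=\frac{A^2}{1-\alpha}M^{2(1-\alpha)}$, which matches the integrand displayed for $\beta''^{(j)}(0)$.

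The last step removes the conditioning on the serving tier via the total probability theorem, $\mathbb{E}[I]=\sum_{j\in\mathcal{K}}\Pr(\mathcal{X}^{(j)}_{\mathrm{MT}_0})\mathbb{E}[I\mid\mathcal{X}^{(j)}_{\mathrm{MT}_0}]$ and likewise for $\mathbb{E}[I^2]$, after which $\operatorname{var}(I)=\mathbb{E}[I^2]-(\mathbb{E}[I])^2$ is assembled; writing the subtracted $(\mathbb{E}[I])^2$ as $(\mathbb{E}[I])^2\sum_{j}\Pr(\mathcal{X}^{(j)}_{\mathrm{MT}_0})$ collects every term under a single tier sum to produce the compact stated form. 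I expect the only genuinely delicate point to be justifying the interchange of differentiation and integration in $\beta^{(j)}(s)$: I would invoke dominated convergence, bounding $\partial_s\chi$ and $\partial_s^2\chi$ uniformly for $s$ in a neighborhood of $0$ by integrable envelopes built from the radial PDF of Lemma~\ref{prop:IAM f_RMT_0 cond XjmA} and the power-law decay of $M^{2-\alpha}$ and $M^{2(1-\alpha)}$, with $\alpha>2$ ensuring convergence of the radial integrals. Everything else is routine bookkeeping of the two hypergeometric evaluations.
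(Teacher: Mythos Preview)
Your proposal is correct and follows exactly the route the paper takes: the paper's own proof is the single line ``it directly follows from the first and second derivative of \eqref{eq:IAM LI(s|Xj)} evaluated at $s=0$,'' and you have simply fleshed out those differentiations (computing $\chi'(0,r)$, $\chi''(0,r)$ via the hypergeometric series, then unconditioning with total probability). Your added remark on dominated convergence to justify differentiating under the integral is more than the paper provides, but the underlying strategy is identical.
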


\begin{proof}
It directly follows from the first and second derivative of \eqref{eq:IAM LI(s|Xj)} evaluated at $s=0$.
\end{proof}

\begin{remark}[Impact of $i_0$]
\label{rem:IA vs Non IA}
By inspection of \textbf{Propositions \ref{prop:IAM E(P_MT)} {\textnormal{and}} \ref{prop:IAM E(I) var(I)}}, we evince that the average transmit power, the mean and variance of the interference decrease by decreasing $i_0$. Since the interference-unaware setup is obtained by setting $i_0 \to \infty$, this implies that IAM is beneficial in terms of reducing the power consumption of the MTs and of implementing AMC schemes. The system fairness may, however, be negatively affected if $i_0$ decreases, as more MTs are muted.
\end{remark}

The next theorem provides a tractable expression of the coverage probability of HCNs.
\begin{theorem}
\label{prop:IAM F_SINR}
The CCDF of the SINR of the typical MT can be formulated as follows:
\ifTwoColumns
\begin{align}
\label{eq:IAM ccdf SINR}
  \bar{F}_{\mathrm{SINR}} & \left( \gamma  \right)=\sum\limits_{j\in \mathcal{K}}{\sum\limits_{m\in \mathcal{K}}{\Pr \left( \mathcal{X}_{\mathrm{MT}_{0}}^{(j,m)},\mathcal{A}_{\mathrm{MT}_{0}} \right)}} \nonumber \\
 & \int\limits_{v=0}^{\infty }{f_{R_{\mathrm{MT}_{0}}}\left( v|\mathcal{X}_{\mathrm{MT}_{0}}^{(j,m)},\mathcal{A}_{\mathrm{MT}_{0}} \right)} \nonumber \\
 & \mathrm{e}^{-\gamma \sigma _{n}^{2}\left( \tau v \right)^{\alpha \left( 1-\epsilon  \right)}p_{0}^{-1}}\mathcal{L}_{I}\left( \gamma \left( \tau v \right)^{\alpha \left( 1-\epsilon  \right)}p_{0}^{-1}|\mathcal{X}_{\mathrm{MT}_{0}}^{(j)} \right)\mathrm{d}v
\end{align}
\else
\begin{align}
\label{eq:IAM ccdf SINR}
  \bar{F}_{\mathrm{SINR}} & \left( \gamma  \right)=\sum\limits_{j\in \mathcal{K}}{\sum\limits_{m\in \mathcal{K}}{\Pr \left( \mathcal{X}_{\mathrm{MT}_{0}}^{(j,m)},\mathcal{A}_{\mathrm{MT}_{0}} \right)}} \nonumber \\
 & \times \int\nolimits_{0}^{\infty }{f_{R_{\mathrm{MT}_{0}}}\left( v|\mathcal{X}_{\mathrm{MT}_{0}}^{(j,m)},\mathcal{A}_{\mathrm{MT}_{0}} \right)}
  \mathrm{e}^{-\gamma \sigma _{n}^{2}\left( \tau v \right)^{\alpha \left( 1-\epsilon  \right)}p_{0}^{-1}}\mathcal{L}_{I}\left( \gamma \left( \tau v \right)^{\alpha \left( 1-\epsilon  \right)}p_{0}^{-1}|\mathcal{X}_{\mathrm{MT}_{0}}^{(j)} \right)\mathrm{d}v
\end{align}
\fi
\end{theorem}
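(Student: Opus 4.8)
The plan is to compute $\bar{F}_{\mathrm{SINR}}(\gamma)=\Pr\left(\mathrm{SINR}_{\mathrm{MT}_0}>\gamma\right)$ by conditioning on the serving distance $R_{\mathrm{MT}_0}=v$ and on the association events, exploiting the exponential law of the fading power gain $H_{\mathrm{MT}_0}$ to carry out the fading average in closed form, and finally deconditioning through the total probability theorem over tiers. First I would insert the power-control law $p_{\mathrm{MT}}(r)=p_0(\tau r)^{\alpha\epsilon}$ from Table \ref{tab:Symbols} into \eqref{eq:SINR}, so that the received signal power becomes $H_{\mathrm{MT}_0}p_0(\tau R_{\mathrm{MT}_0})^{-\alpha(1-\epsilon)}$. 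Conditioned on $R_{\mathrm{MT}_0}=v$ and on $\mathcal{X}^{(j,m)}_{\mathrm{MT}_0}\cap\mathcal{A}_{\mathrm{MT}_0}$, the event $\{\mathrm{SINR}_{\mathrm{MT}_0}>\gamma\}$ is equivalent to $\{H_{\mathrm{MT}_0}>\gamma(I+\sigma_n^2)(\tau v)^{\alpha(1-\epsilon)}/p_0\}$.

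Since $H_{\mathrm{MT}_0}$ is exponentially distributed with unit mean and independent of the interference, $\Pr(H>x)=\mathrm{e}^{-x}$ lets me take the conditional expectation over $I$ and split off the noise contribution, giving
\[
\Pr\left(\mathrm{SINR}_{\mathrm{MT}_0}>\gamma \mid R_{\mathrm{MT}_0}=v,\mathcal{X}^{(j,m)}_{\mathrm{MT}_0},\mathcal{A}_{\mathrm{MT}_0}\right) = \mathrm{e}^{-s\sigma_n^2}\,\mathbb{E}\left[\mathrm{e}^{-sI}\right],\quad s=\frac{\gamma(\tau v)^{\alpha(1-\epsilon)}}{p_0},
\]
where the remaining factor is, by definition, the Laplace transform of $I$ evaluated at $s$.

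The key reduction is to recognize that, once the serving tier $j$ is fixed, this Laplace transform depends neither on the most-interfered tier $m$ nor on the serving distance $v$. Under \textbf{Assumption \ref{assumpt: Interference}}, the interfering MTs form a PPP of intensity $\lambda^{(k)}$ thinned by the events $\mathcal{O}^{(j,k)}_{\mathrm{MT}_i}$ and $\mathcal{Z}_{\mathrm{MT}_i}$, both of which are referenced to the probe BS (the serving BS of $\mathrm{MT}_0$) and therefore involve the typical MT only through its serving-tier weight $t^{(j)}$. Hence $\mathbb{E}[\mathrm{e}^{-sI}]$ equals $\mathcal{L}_I\left(s\mid\mathcal{X}^{(j)}_{\mathrm{MT}_0}\right)$ as provided by Lemma \ref{prop:IAM LI}, and the dependence on $(m,v)$ enters only through the scaling factor $s$ and the noise exponent. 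I would then decondition: weigh the conditional complementary CDF by the distance PDF $f_{R_{\mathrm{MT}_0}}\left(v\mid\mathcal{X}^{(j,m)}_{\mathrm{MT}_0},\mathcal{A}_{\mathrm{MT}_0}\right)$ from Lemma \ref{prop:IAM f_RMT_0 cond XjmA}, integrate over $v>0$, and sum over $j,m\in\mathcal{K}$ weighted by $\Pr\left(\mathcal{X}^{(j,m)}_{\mathrm{MT}_0},\mathcal{A}_{\mathrm{MT}_0}\right)$, yielding \eqref{eq:IAM ccdf SINR}.

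The fading average itself is routine for Rayleigh fading, so the main obstacle I anticipate is the rigorous justification that, under the conditional-thinning approximation, the interference is conditionally independent of both the most-interfered tier $m$ and the serving distance $v$ given the serving tier $j$; this conditional independence is precisely what permits pulling $\mathcal{L}_I$ outside the distance integral with an argument that depends on $v$ only through $s$. A secondary point of care is the normalization: because a muted MT has $\mathrm{SINR}=0$ by construction, the weights $\Pr\left(\mathcal{X}^{(j,m)}_{\mathrm{MT}_0},\mathcal{A}_{\mathrm{MT}_0}\right)$ sum to $\Pr\left(\mathcal{A}_{\mathrm{MT}_0}\right)$ rather than to one, so that $\bar{F}_{\mathrm{SINR}}(0)=\Pr\left(\mathcal{A}_{\mathrm{MT}_0}\right)$; I would verify this limit as a consistency check on the final expression.
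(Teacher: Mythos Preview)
Your proposal is correct and follows essentially the same route as the paper: decompose via the total probability theorem over $(j,m)$ and the active event, condition on $R_{\mathrm{MT}_0}=v$, exploit the exponential law of $H_{\mathrm{MT}_0}$ to obtain $\mathrm{e}^{-s\sigma_n^2}\mathcal{L}_I(s\mid\mathcal{X}^{(j)}_{\mathrm{MT}_0})$ with $s=\gamma(\tau v)^{\alpha(1-\epsilon)}/p_0$, and then decondition. Your explicit justification that the Laplace transform depends only on the serving tier $j$ under \textbf{Assumption~\ref{assumpt: Interference}}, and your consistency check $\bar{F}_{\mathrm{SINR}}(0)=\Pr(\mathcal{A}_{\mathrm{MT}_0})$, are in fact more detailed than the paper's own proof, which simply states that the result follows ``by computing the two remaining expectations.''
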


\begin{proof}
With the aid of the total probability theorem, we have:
\ifTwoColumns
\begin{align}
\label{eq:IAM ccdf SINR Part 1}
& \bar{F}_{\mathrm{SINR}}\left( \gamma  \right)
= \bar{F}_{\mathrm{SINR}}\left( \gamma |\mathcal{A}_{\mathrm{MT}_{0}} \right)\Pr \left( \mathcal{A}_{\mathrm{MT}_{0}} \right)+0\times \Pr \left( \overline{\mathcal{A}_{\mathrm{MT}_{0}}} \right)
\nonumber \\
& \; =
\sum\limits_{j\in \mathcal{K}}{\sum\limits_{m\in \mathcal{K}}{\Pr \left( \mathcal{X}_{\mathrm{MT}_{0}}^{(j,m)},\mathcal{A}_{\mathrm{MT}_{0}} \right)}}
 \bar{F}_{\mathrm{SINR}}\left( \gamma |\mathcal{X}_{\mathrm{MT}_{0}}^{(j,m)},\mathcal{A}_{\mathrm{MT}_{0}} \right)
\nonumber \\
& \;  =\sum\limits_{j\in \mathcal{K}}{\sum\limits_{m\in \mathcal{K}}{\Pr \left( \mathcal{X}_{\mathrm{MT}_{0}}^{(j,m)},\mathcal{A}_{\mathrm{MT}_{0}} \right)}}
\mathbb{E}_{R_{\mathrm{MT}_{0}}} \mathbb{E}_{I} \Bigg[
\nonumber \\ & \;
\Pr \left( H_{\mathrm{MT}_{0}}>\frac{\gamma}{p_0} \left( I+\sigma _{n}^{2} \right)\left( \tau R_{\mathrm{MT}_{0}} \right)^{\alpha \left( 1-\epsilon
\right)}
  |\mathcal{X}_{\mathrm{MT}_{0}}^{(j,m)},\mathcal{A}_{\mathrm{MT}_{0}} \right) \Bigg]
\end{align}
\else
\begin{align}
\label{eq:IAM ccdf SINR Part 1}
\bar{F}_{\mathrm{SINR}}&\left( \gamma  \right)
= \bar{F}_{\mathrm{SINR}}\left( \gamma |\mathcal{A}_{\mathrm{MT}_{0}} \right)\Pr \left( \mathcal{A}_{\mathrm{MT}_{0}} \right)+0\times \Pr \left( \overline{\mathcal{A}_{\mathrm{MT}_{0}}} \right)
\nonumber \\
& =
\sum\limits_{j\in \mathcal{K}}{\sum\limits_{m\in \mathcal{K}}{\Pr \left( \mathcal{X}_{\mathrm{MT}_{0}}^{(j,m)},\mathcal{A}_{\mathrm{MT}_{0}} \right)}}
 \bar{F}_{\mathrm{SINR}}\left( \gamma |\mathcal{X}_{\mathrm{MT}_{0}}^{(j,m)},\mathcal{A}_{\mathrm{MT}_{0}} \right) =\sum\limits_{j\in \mathcal{K}}{\sum\limits_{m\in \mathcal{K}}{\Pr \left( \mathcal{X}_{\mathrm{MT}_{0}}^{(j,m)},\mathcal{A}_{\mathrm{MT}_{0}} \right)}}
\nonumber \\
& \times \mathbb{E}_{R_{\mathrm{MT}_{0}}}
 \mathbb{E}_{I}\left[ \Pr \left( H_{\mathrm{MT}_{0}}>\frac{\gamma}{p_0} \left( I+\sigma _{n}^{2} \right)\left( \tau R_{\mathrm{MT}_{0}} \right)^{\alpha \left( 1-\epsilon
\right)}
  |\mathcal{X}_{\mathrm{MT}_{0}}^{(j,m)},\mathcal{A}_{\mathrm{MT}_{0}} \right) \right]
\end{align}
\fi

\noindent The proof follows by computing the two remaining expectations.
\end{proof}


\begin{corollary}
Assume $\epsilon = 1$, i.e., the active MTs apply a power control scheme based on full channel inversion. The CCDF in \textbf{Theorem \ref{prop:IAM F_SINR}} simplifies as follows:
\label{cor:F_SINR}
\ifTwoColumns
\begin{align}
\label{eq:IAM ccdf SINR epsilon=1}
\bar{F}_{\mathrm{SINR}}\left( \gamma  \right) &=\sum\limits_{j\in \mathcal{K}}{\sum\limits_{m\in \mathcal{K}}{\Pr \left( \mathcal{X}_{\mathrm{MT}_{0}}^{(j,m)},\mathcal{A}_{\mathrm{MT}_{0}} \right)}} \nonumber \\
 & \mathrm{e}^{-\gamma \sigma _{n}^{2}/p_{0}}\mathcal{L}_{I}\left( \gamma /p_{0}|\mathcal{X}_{\mathrm{MT}_{0}}^{(j)} \right)
\end{align}
\else
\begin{align}
\label{eq:IAM ccdf SINR epsilon=1}
\bar{F}_{\mathrm{SINR}}\left( \gamma  \right) =\sum\limits_{j\in \mathcal{K}}{\sum\limits_{m\in \mathcal{K}}{\Pr \left( \mathcal{X}_{\mathrm{MT}_{0}}^{(j,m)},\mathcal{A}_{\mathrm{MT}_{0}} \right)}}
  \mathrm{e}^{-\gamma \sigma _{n}^{2}/p_{0}}\mathcal{L}_{I}\left( \gamma /p_{0}|\mathcal{X}_{\mathrm{MT}_{0}}^{(j)} \right)
\end{align}
\fi
\end{corollary}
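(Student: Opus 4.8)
The plan is to obtain Corollary \ref{cor:F_SINR} as a direct specialization of Theorem \ref{prop:IAM F_SINR} to the case $\epsilon = 1$. The crucial structural observation is that, in the general CCDF \eqref{eq:IAM ccdf SINR}, the integration variable $v$ enters the noise-attenuation exponential and the Laplace transform of the interference exclusively through the common factor $(\tau v)^{\alpha(1-\epsilon)}$. Setting $\epsilon = 1$ forces the exponent $\alpha(1-\epsilon)$ to vanish, so that $(\tau v)^{\alpha(1-\epsilon)} = 1$ identically in $v$. As a consequence, the argument of the exponential collapses to the constant $\gamma \sigma_n^2 / p_0$ and the argument of $\mathcal{L}_I(\cdot | \mathcal{X}_{\mathrm{MT}_0}^{(j)})$ collapses to the constant $\gamma / p_0$, neither of which depends on $v$ any longer.

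Next I would factor these two $v$-independent quantities out of the inner integral in \eqref{eq:IAM ccdf SINR}. The integrand then reduces to the conditional density $f_{R_{\mathrm{MT}_0}}(v \,|\, \mathcal{X}_{\mathrm{MT}_0}^{(j,m)}, \mathcal{A}_{\mathrm{MT}_0})$ alone, and by Lemma \ref{prop:IAM f_RMT_0 cond XjmA} this is a properly normalized probability density function, whose integral over its entire support equals unity. Replacing the integral by $1$ and reinserting the extracted factors yields exactly \eqref{eq:IAM ccdf SINR epsilon=1}.

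I do not anticipate any genuine obstacle, as the argument is purely a matter of substitution followed by the normalization of a density. The single point deserving attention is to verify that $f_{R_{\mathrm{MT}_0}}(\cdot \,|\, \mathcal{X}_{\mathrm{MT}_0}^{(j,m)}, \mathcal{A}_{\mathrm{MT}_0})$ is normalized to one, and not merely to the (sub-unit) active-association probability; this is guaranteed by the division by $\Pr(\mathcal{X}_{\mathrm{MT}_0}^{(j,m)}, \mathcal{A}_{\mathrm{MT}_0})$ built into its definition in \eqref{eq:f_RMT_0 cond XjmA IAM}. Once that normalization is invoked, the collapse of the inner integral to unity is immediate and the corollary follows.
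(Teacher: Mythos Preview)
Your proposal is correct and follows exactly the approach the paper intends: the paper's own proof simply states that the result follows from \eqref{eq:IAM ccdf SINR} by setting $\epsilon=1$ and some algebra, which is precisely the substitution-and-normalization argument you describe. In fact you supply more detail than the paper does, including the explicit check that the conditional density in \eqref{eq:f_RMT_0 cond XjmA IAM} integrates to one.
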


\begin{proof}
It follows from (\ref{eq:IAM ccdf SINR}) by setting $\epsilon=1$ and some algebra.
\end{proof}

\begin{remark}[Operating regimes as a function of $i_0$]
\label{rem:Operation regimes of IAM under GA}
By direct inspection of \textbf{Corollary \ref{cor:F_SINR}}, three operating regimes as a function of $i_0$ can be identified: i) interference-unaware, where the CCDF of the SINR is independent of $i_0$. This occurs if $i_0 > p_0$ and $p_0/i_0 < \min \left(t^{1}/t^{(2)}, t^{(2)}/t^{(1)} \right)$, ii) interference-aware and cell association independent, where the CCDF of the SINR depends on $i_0$ but does not depend on the cell association weights $t^{(1)}$ and $t^{(2)}$. This occurs if $i_0 < p_0$ and $p_0/i_0 > \max \left(t^{(1)}/t^{(2)}, t^{(2)}/t^{(1)} \right)$, iii) interference-aware and cell association dependent, where the CCDF of the SINR depends on $i_0$ and $t^{(\tilde j)}/t^{(j)}, \, \forall j\in \mathcal{K}$. This occurs if the conditions above are not satisfied. The same operating regimes can be identified from \textbf{Propositions \ref{prop:IAM f_RMT_0 cond XjmA}} and \textbf{\ref{prop:IAM E(P_MT)}}.
\end{remark}

\begin{proof}
It follows by direct inspection of $\Pr \left( {{\cal X}_{{\rm{MT}}_{\rm{0}} }^{(j,m)} ,{\cal A}_{{\rm{MT}}_{\rm{0}} } } \right)$, $\nu^{(j)}(v)$ and $\eta^{(j)}(v)$.
\end{proof}

The second operating regime, i.e., the performance is independent of the cell association weights, is of particular interest for making the design of HCNs easier: it implies that, for some system parameters, optimizing the DL results in optimizing the UL as well.

It is worth mentioning, in addition, that the conditions that identify the three operating regimes in \textbf{Remark \ref{rem:Operation regimes of IAM under GA}} can be conveniently formulated in dB as well, which provides further information for system design. More precisely, regime i) emerges if $i_0 > p_0$ dB and $t^{(1)}/t^{(2)} \in [-i_0/p_0,i_0/p_0]$ dB and regime ii) emerges if $i_0 < p_0$ dB and $t^{(1)}/t^{(2)} \in [-p_0/i_0,p_0/i_0]$ dB.

\section{Smallest Path-Loss Association}
\label{sec:Smallest Path Loss Association}
In this section, tractable mathematical frameworks under the SPLA scheme are provided. In this case, the condition $t^{(1)}=t^{(2)}$ holds and simplified formulas can be obtained. Under the assumption that the path-loss exponents of all the tiers of BSs are the same, in fact, multi-tier HCNs reduce to an equivalent single-tier cellular network of intensity $\lambda  = \sum_{j \in {\cal K}} {{\lambda ^{\left( j \right)}}}$ \cite{Singh15}.

\begin{proposition}
\label{cor:SPLA PrA}
The probability that the typical MT is active can be formulated as follows:
\begin{equation}
\label{eq:SPLA PrA}
\Pr \left( {{{\cal A}_{{\rm{M}}{{\rm{T}}_{\rm{0}}}}}} \right) = \int\nolimits_{{r_1} = 0}^{{1 \over \tau }{{\left( {{{{p_{\max }}} \over {{p_0}}}} \right)}^{{1 \over \alpha }}}} {2\pi \lambda {r_1}} {e^{ - \pi \lambda {{\max }^2}\left( {{r_1},{{\left( {{{{p_0}} \over {{i_0}}}} \right)}^{{1 \over \alpha }}}{{{{\left( {\tau {r_1}} \right)}^\epsilon}} \over \tau }} \right)}}d{r_1}
\end{equation}
\end{proposition}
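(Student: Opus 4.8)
The plan is to exploit the single-tier reduction that holds under SPLA. Since $t^{(1)}=t^{(2)}$ and the path-loss exponent is common to both tiers, the superposition $\Phi=\bigcup_{j\in\mathcal{K}}\Phi^{(j)}$ is a homogeneous PPP of intensity $\lambda=\sum_{j\in\mathcal{K}}\lambda^{(j)}$, and cell association collapses to associating the typical MT with its nearest BS. Denote by $r_1=R_{\mathrm{MT}_0}$ the distance to this serving (nearest) BS and by $r_2=U_{\mathrm{MT}_0}$ the distance to the most interfered BS. Because the interference $p_{\mathrm{MT}}(r_1)(\tau d)^{-\alpha}$ seen at a BS at distance $d$ is monotonically decreasing in $d$, the most interfered non-serving BS is precisely the second-nearest BS, so $r_1<r_2$ holds by construction. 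Substituting $p_{\mathrm{MT}}(r)=p_0(\tau r)^{\alpha\epsilon}$ into the two constraints defining $\mathcal{A}_{\mathrm{MT}_0}$ in (\ref{eq:A_MTi}), the power constraint $p_{\mathrm{MT}}(r_1)<p_{\max}$ becomes the deterministic bound $r_1<\tfrac{1}{\tau}(p_{\max}/p_0)^{1/(\alpha\epsilon)}$, which I carry as the upper limit of integration, while the interference constraint $p_{\mathrm{MT}}(r_1)<i_0(\tau r_2)^{\alpha}$ rearranges to $r_2>(p_0/i_0)^{1/\alpha}(\tau r_1)^{\epsilon}/\tau$.

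Next I would compute $\Pr(\mathcal{A}_{\mathrm{MT}_0})$ by conditioning on $r_1$ and invoking standard nearest-neighbour statistics of a homogeneous PPP. The distance to the nearest BS has density $f_{R_{(1)}}(r_1)=2\pi\lambda r_1\,\mathrm{e}^{-\pi\lambda r_1^2}$. Conditioned on $R_{(1)}=r_1$, the restriction of $\Phi$ to the exterior of the disk of radius $r_1$ is again a PPP of intensity $\lambda$, whence the second-nearest distance obeys the conditional void law $\Pr(R_{(2)}>r_2\mid R_{(1)}=r_1)=\mathrm{e}^{-\pi\lambda(r_2^2-r_1^2)}$ for $r_2>r_1$. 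This is the only probabilistic ingredient required; moreover Remark \ref{rem:Exact analysis} applies, so no recourse to Assumption \ref{assumpt: Interference} is needed and the result is exact.

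It then remains to assemble the pieces. Since $r_2>r_1$ is automatic, the interference constraint bites only when $(p_0/i_0)^{1/\alpha}(\tau r_1)^{\epsilon}/\tau$ exceeds $r_1$, so the conditional probability that $\mathcal{A}_{\mathrm{MT}_0}$ holds given $r_1$ equals $\Pr(R_{(2)}>\max(r_1,(p_0/i_0)^{1/\alpha}(\tau r_1)^{\epsilon}/\tau)\mid R_{(1)}=r_1)=\mathrm{e}^{-\pi\lambda(\max^2(r_1,(p_0/i_0)^{1/\alpha}(\tau r_1)^{\epsilon}/\tau)-r_1^2)}$, where the $\max$ simultaneously encodes the ordering $r_2>r_1$ and the interference threshold, and the $-r_1^2$ arises from the annulus normalisation. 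Multiplying by $f_{R_{(1)}}(r_1)$ cancels the two $\pi\lambda r_1^2$ exponent terms, leaving the integrand $2\pi\lambda r_1\,\mathrm{e}^{-\pi\lambda\max^2(r_1,(p_0/i_0)^{1/\alpha}(\tau r_1)^{\epsilon}/\tau)}$; integrating $r_1$ over the range permitted by the power constraint yields (\ref{eq:SPLA PrA}). The only point demanding care is the bookkeeping of the two simultaneous lower bounds on $r_2$: one must verify that collapsing the ordering event and the threshold event into a single $\max$ and using the exterior-PPP independence is legitimate, which it is precisely because both events constrain the same region outside radius $r_1$. As a consistency check, the same expression follows by specialising Proposition \ref{prop:IAM Xjm} to $t^{(1)}=t^{(2)}$ and merging the two tiers.
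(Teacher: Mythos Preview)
Your proof is correct and follows essentially the same route as the paper: reduce the two-tier SPLA network to a single homogeneous PPP of intensity $\lambda$, identify the serving BS as the nearest and the most interfered BS as the second nearest, then integrate the joint law of $(R_{(1)},R_{(2)})$ against the power and interference constraints. The only cosmetic difference is that the paper works directly with the joint density $f_{R_{(1)},R_{(2)}}(r_1,r_2)=4(\pi\lambda)^2 r_1 r_2\,\mathrm{e}^{-\pi\lambda r_2^2}$ and integrates out $r_2$, whereas you factor this as the marginal of $R_{(1)}$ times the conditional void probability for $R_{(2)}$; the cancellation of the $\pi\lambda r_1^2$ terms you describe is exactly that factorisation. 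As an aside, your upper limit $\tfrac{1}{\tau}(p_{\max}/p_0)^{1/(\alpha\epsilon)}$ is the correct one; the exponent $1/\alpha$ printed in the statement appears to be a typo, as Proposition~\ref{prop:IAM Xjm} and Lemma~\ref{prop:IAM f_RMT_0 cond XjmA} both carry $1/(\alpha\epsilon)$.
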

\begin{proof}
The proof is similar to that of \textbf{Proposition \ref{prop:IAM Xjm}}. The difference is that only the joint PDF of the distance of nearest and second nearest BSs needs to be used (see Appendix A).
\end{proof}

\begin{corollary}
If $\epsilon=1$, $\Pr \left( {{{\cal A}_{{\rm{M}}{{\rm{T}}_{\rm{0}}}}}} \right)$ in \eqref{eq:SPLA PrA} simplifies as follows:

\ifTwoColumns
\begin{equation}
\label{eq:SPLA PrA epsilon1}
\Pr \left( {{{\cal A}_{{\rm{M}}{{\rm{T}}_{\rm{0}}}}}} \right) = {{1 - {{\rm{e}}^{ - {\pi  \over {{\tau ^2}}}{{\left( {{{{p_{\max }}} \over {{p_0}}}} \right)}^{{2 \over \alpha }}}\lambda \max \left( {1,{{\left( {{{{p_0}} \over {{i_0}}}} \right)}^{{2 \over \alpha }}}} \right)}}} \over {\max \left( {1,{{\left( {{{{p_0}} \over {{i_0}}}} \right)}^{{2 \over \alpha }}}} \right)}}
\end{equation}
\else
\begin{equation}
\label{eq:SPLA PrA epsilon1}
\Pr \left( {{{\cal A}_{{\rm{M}}{{\rm{T}}_{\rm{0}}}}}} \right) = 
{\left(1 - {{\rm{e}}^{ - {\pi  \over {{\tau ^2}}}{{\left( {{{{p_{\max }}} \over {{p_0}}}} \right)}^{{2 \over \alpha }}}\lambda \max \left( {1,{{\left( {{{{p_0}} 
\over {{i_0}}}} \right)}^{{2 \over \alpha }}}} \right)}}\right)} 
\Bigg/{\max \left( {1,{{\left( {{{{p_0}} 
\over {{i_0}}}} \right)}^{{2 \over \alpha }}}} \right)}
\end{equation}
\fi
\end{corollary}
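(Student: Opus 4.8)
The plan is to start from the integral representation of $\Pr\left(\mathcal{A}_{\mathrm{MT}_0}\right)$ given in \textbf{Proposition \ref{cor:SPLA PrA}}, i.e. \eqref{eq:SPLA PrA}, and simply specialize it to $\epsilon=1$. The first step is to observe what happens to the second argument of the maximum inside the exponential. For general $\epsilon$ this argument is $\left(p_0/i_0\right)^{1/\alpha}(\tau r_1)^{\epsilon}/\tau$; setting $\epsilon=1$ collapses $(\tau r_1)^{\epsilon}/\tau$ to $r_1$, so the maximum becomes $\max\left(r_1,\left(p_0/i_0\right)^{1/\alpha}r_1\right)$. The crucial observation — really the only non-mechanical step — is that both arguments are now \emph{linear} in $r_1$, so $r_1$ factors out of the maximum and the squared maximum reduces to $r_1^2\,c$, where $c=\max\left(1,\left(p_0/i_0\right)^{2/\alpha}\right)$ is a constant independent of the integration variable.

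With this simplification the integral in \eqref{eq:SPLA PrA} becomes
\begin{equation}
\Pr\left(\mathcal{A}_{\mathrm{MT}_0}\right)=\int_{0}^{R}2\pi\lambda r_1\,\mathrm{e}^{-\pi\lambda c\, r_1^{2}}\,\mathrm{d}r_1,
\end{equation}
where $R=\tfrac{1}{\tau}\left(p_{\max}/p_0\right)^{1/\alpha}$ is the (unchanged) upper limit and $c$ is the constant above. The next step is to evaluate this elementary Gaussian-type integral, e.g. via the substitution $u=r_1^{2}$, which turns it into $\int_{0}^{R^{2}}\pi\lambda\,\mathrm{e}^{-\pi\lambda c\,u}\,\mathrm{d}u=\left(1-\mathrm{e}^{-\pi\lambda c\,R^{2}}\right)/c$. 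Finally, substituting $R^{2}=\tfrac{1}{\tau^{2}}\left(p_{\max}/p_0\right)^{2/\alpha}$ into the exponent and writing $c$ out explicitly yields exactly \eqref{eq:SPLA PrA epsilon1}.

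There is no real obstacle here: once $\epsilon=1$ is imposed, the homogeneity of the two arguments of the maximum in $r_1$ removes the only source of difficulty, and everything downstream is a standard integral. If I were to flag the one place that deserves care, it would be keeping track of the two exponents of $\alpha$ correctly — the factor $\left(p_0/i_0\right)^{1/\alpha}$ inside the maximum becomes $\left(p_0/i_0\right)^{2/\alpha}$ after squaring, and likewise the upper limit contributes $\left(p_{\max}/p_0\right)^{2/\alpha}$ — so that the $\max$ term appears both in the numerator's exponent and as the denominator of the final expression. Verifying that these powers match the claimed closed form completes the argument.
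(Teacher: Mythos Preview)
Your proposal is correct and matches the paper's approach, which simply states that the result ``directly follows from \eqref{eq:SPLA PrA} by setting $\epsilon=1$ and computing the integral.'' You have in fact supplied more detail than the paper does, and the key observation you flag --- that for $\epsilon=1$ both arguments of the maximum become linear in $r_1$, allowing $r_1$ to factor out --- is precisely what makes the integral elementary.
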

\begin{proof}
It directly follows from \eqref{eq:SPLA PrA} by setting $\epsilon=1$ and computing the integral.
\end{proof}
\begin{remark}[{Operating regimes as a function of $i_0$}]
\label{rem: Interference unaware regime}
From (\ref{eq:SPLA PrA epsilon1}), two operating regimes can be identified: i) interference-unaware, i.e., $\Pr \left( {{{\cal A}_{{\rm{M}}{{\rm{T}}_{\rm{0}}}}}} \right)$ is independent of $i_0$, which occurs if $i_0 > p_0$ and ii) interference-aware, i.e., $\Pr \left( {{{\cal A}_{{\rm{M}}{{\rm{T}}_{\rm{0}}}}}} \right)$ depends on $i_0$, which occurs if $i_0 < p_0$.
\end{remark}

\begin{remark}[{Unlimited transmit power of the MTs}]
\label{rem: power law PrA}
Assume $p_\mathrm{max} \to \infty$, i.e., the MTs have no maximum transmit power constraint. From (\ref{eq:SPLA PrA epsilon1}), the following holds: i) under the interference-unaware regime ($i_0 > p_0$), $\Pr \left( {{{\cal A}_{{\rm{M}}{{\rm{T}}_{\rm{0}}}}}} \right) \to 1$, and ii) under the interference-aware regime ($i_0 < p_0$), $\Pr \left( {{{\cal A}_{{\rm{M}}{{\rm{T}}_{\rm{0}}}}}} \right) = {\left( {{{{i_0}} / {{p_0}}}} \right)^{{2 \over \alpha }}}$. In both regimes, $\Pr \left( {{{\cal A}_{{\rm{M}}{{\rm{T}}_{\rm{0}}}}}} \right)$ is independent of the density of BSs $\lambda$.
\end{remark}

\begin{lemma}
\label{cor: SPLA pdf dist serving}
The PDF of the distance between the typical MT and its serving BS is as follows:
\ifTwoColumns
\begin{align}
\label{eq: SPLA pdf dist serving}
& {f_{{R_{{\rm{M}}{{\rm{T}}_{\rm{0}}}}}}}\left( {v|{{\cal A}_{{\rm{M}}{{\rm{T}}_{\rm{0}}}}}} \right) = {{2\pi \lambda v{{\rm{e}}^{ - \pi \lambda {{\max }^2}\left( {v,{{\left( {{{{p_0}} \over {{i_0}}}} \right)}^{{1 \over \alpha }}}{{{{\left( {\tau v} \right)}^\epsilon }} \over \tau }} \right)}}} \over {\Pr \left( {{{\cal A}_{{\rm{M}}{{\rm{T}}_{\rm{0}}}}}} \right)}}
\nonumber \\ & \quad \times
{\bf 1} \left( 0< v < {1 \over \tau }{\left( {{{{p_{\max }}} \over {{p_0}}}} \right)^{{1 \over \alpha }}} \right)
\end{align}
\else
\begin{align}
\label{eq: SPLA pdf dist serving}
{f_{{R_{{\rm{M}}{{\rm{T}}_{\rm{0}}}}}}}\left( {v|{{\cal A}_{{\rm{M}}{{\rm{T}}_{\rm{0}}}}}} \right) = {{2\pi \lambda v{{\rm{e}}^{ - \pi \lambda {{\max }^2}\left( {v,{{\left( {{{{p_0}} \over {{i_0}}}} \right)}^{{1 \over \alpha }}}{{{{\left( {\tau v} \right)}^\epsilon }} \over \tau }} \right)}}} \over {\Pr \left( {{{\cal A}_{{\rm{M}}{{\rm{T}}_{\rm{0}}}}}} \right)}}
{\bf 1} \left( 0< v < {1 \over \tau }{\left( {{{{p_{\max }}} \over {{p_0}}}} \right)^{{1 \over \alpha }}} \right)
\end{align}
\fi
\end{lemma}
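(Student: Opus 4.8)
The plan is to exploit the single-tier reduction noted at the beginning of this section: under SPLA with a common path-loss exponent, the superposition $\Phi=\bigcup_{j\in\mathcal{K}}\Phi^{(j)}$ is a homogeneous PPP of intensity $\lambda=\sum_{j\in\mathcal{K}}\lambda^{(j)}$, and the smallest-path-loss criterion makes the serving BS the \emph{nearest} point of $\Phi$, so that $R_{\mathrm{MT}_0}=R_{(1)}$. Since the transmit power $p_{\mathrm{MT}}(R_{(1)})=p_0(\tau R_{(1)})^{\alpha\epsilon}$ is fixed once the serving distance is known, the interference $p_{\mathrm{MT}}(R_{(1)})(\tau D)^{-\alpha}$ that the typical MT inflicts on a BS at distance $D$ is largest for the closest \emph{non-serving} BS; hence the most interfered BS is the second-nearest point of $\Phi$, i.e., $U_{\mathrm{MT}_0}=R_{(2)}$. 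I would make this identification explicit first, as everything else rests on it.

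Next I would rewrite the active event $\mathcal{A}_{\mathrm{MT}_0}$ in (\ref{eq:A_MTi}) as a pair of inequalities on $(R_{(1)},R_{(2)})$. The maximum-power constraint $p_{\mathrm{MT}}(R_{(1)})<p_{\max}$ becomes $R_{(1)}<\frac{1}{\tau}(p_{\max}/p_0)^{1/(\alpha\epsilon)}$, which merely bounds the support in $v$, while the maximum-interference constraint $p_{\mathrm{MT}}(R_{(1)})<i_0(\tau R_{(2)})^{\alpha}$ becomes the lower bound $R_{(2)}>(p_0/i_0)^{1/\alpha}(\tau R_{(1)})^{\epsilon}/\tau$. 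I would then invoke the joint PDF of the two nearest distances of a planar PPP, $f_{R_{(1)},R_{(2)}}(r_1,r_2)=(2\pi\lambda)^2 r_1 r_2\,\mathrm{e}^{-\pi\lambda r_2^2}$ for $0<r_1<r_2$, fix $R_{(1)}=v$, and integrate out $r_2$. The ordering constraint $r_2>r_1=v$ and the interference lower bound merge into a single lower limit $L(v)=\max(v,(p_0/i_0)^{1/\alpha}(\tau v)^{\epsilon}/\tau)$, and the elementary integral $\int_{L(v)}^{\infty}2\pi\lambda r_2\,\mathrm{e}^{-\pi\lambda r_2^2}\,\mathrm{d}r_2=\mathrm{e}^{-\pi\lambda L(v)^2}$ yields the joint density $2\pi\lambda v\,\mathrm{e}^{-\pi\lambda\max^{2}(v,\,(p_0/i_0)^{1/\alpha}(\tau v)^{\epsilon}/\tau)}$ of the event $\{R_{\mathrm{MT}_0}=v\}\cap\mathcal{A}_{\mathrm{MT}_0}$, restricted to the admissible support in $v$.

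Finally, the conditional PDF follows by dividing this joint density by $\Pr(\mathcal{A}_{\mathrm{MT}_0})$, given in \textbf{Proposition \ref{cor:SPLA PrA}}; as a consistency check, integrating the joint density over $v$ reproduces (\ref{eq:SPLA PrA}) exactly, which confirms the normalization. The main obstacle is not the computation, which is a one-dimensional Gaussian-type integral, but rather the two modelling identifications in the first step: that SPLA selects the nearest BS as the server, and that the most interfered BS is the second-nearest point of $\Phi$. Once these are justified, the interference constraint maps cleanly onto a lower bound on $R_{(2)}$, the ordering and interference constraints collapse into the single $\max$ in the exponent, and the claimed expression drops out immediately.
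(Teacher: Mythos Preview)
Your proposal is correct and follows essentially the same route as the paper: reduce to a single PPP of intensity $\lambda$ under SPLA, identify the serving and most-interfered BSs with the first and second nearest points, use the joint density of $(R_{(1)},R_{(2)})$ from Appendix~\ref{app:Proof of Proposition IAM Xjm}, integrate out $R_{(2)}$ over $[\max(v,(p_0/i_0)^{1/\alpha}(\tau v)^\epsilon/\tau),\infty)$, and normalize by $\Pr(\mathcal{A}_{\mathrm{MT}_0})$. The paper's proof merely points to \textbf{Lemma~\ref{prop:IAM f_RMT_0 cond XjmA}} and Appendix~\ref{app:Proof of Proposition IAM Xjm} without writing out these steps, so your version is in fact more explicit; note also that your derived support bound $v<\tfrac{1}{\tau}(p_{\max}/p_0)^{1/(\alpha\epsilon)}$ is the correct one (consistent with Appendix~\ref{app:Proof of Proposition IAM Xjm} and \textbf{Lemma~\ref{prop:IAM f_RMT_0 cond XjmA}}), whereas the exponent $1/\alpha$ in the stated formula appears to be a typographical slip.
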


\begin{proof}
The proof is similar to that of \textbf{Lemma \ref{prop:IAM f_RMT_0 cond XjmA}}. The difference is that only the joint PDF of the distance of nearest and second nearest BSs needs to be used (see Appendix A).
\end{proof}

\begin{remark}[{Interference-awareness is equivalent to network densification if $p_{\max} \to \infty$}]
\label{rem: Equivalent BS density}
If the system operates in the interference-aware regime ($i_0 < p_0$) and $p_{\max} \to \infty$, (\ref{eq: SPLA pdf dist serving}) reduces to:
\begin{equation}
\label{eq: SPLA pdf dist serving IA case}
{f_{{R_{{\rm{M}}{{\rm{T}}_{\rm{0}}}}}}}\left( {v|{{\cal A}_{{\rm{M}}{{\rm{T}}_{\rm{0}}}}}} \right) = 2\pi \lambda {\left( {{{{p_0}} \over {{i_0}}}} \right)^{{2 \over \alpha }}}v{{\rm{e}}^{ - \pi \lambda {{\left( {{{{p_0}} \over {{i_0}}}} \right)}^{{2 \over \alpha }}}{v^2}}}
\end{equation}

\noindent This implies that IAM's impact is equivalent to increasing the density of BSs from $\lambda$ to $ \lambda {\left( {{p_0}/{i_0}} \right)^{{2 \over \alpha }}}$, since the PDF of the distance from the nearest BS in Poisson cellular networks is $2\pi \lambda v{{\rm{e}}^{ - \pi \lambda {v^2}}}$. Hence, the distance between probe MT and probe BS is reduced, resulting in better performance.
\end{remark}

\begin{proposition}
\label{cor:SPLA avP}
If $\epsilon=1$, the average transmit power of the typical MT is as follows:
\ifTwoColumns
\begin{align}
\label{eq:SPLA avP}
& \mathbb{E}\left[ {p\left( {{R_{{\rm{M}}{{\rm{T}}_{\rm{0}}}}}} \right)} \right] = {{{p_0}{\tau ^\alpha }} \over {{{\left( {\pi \lambda } \right)}^{{\alpha  \over 2}}}\max \left( {1,{{\left( {{{{p_0}} \over {{i_0}}}} \right)}^{{2 \over \alpha } + 1}}} \right)}}
\Bigg( \Gamma \left( {1 + {\alpha  \over 2}} \right)
\nonumber \\ & \quad
- \Gamma \left( {{{2 + \alpha } \over 2},{\lambda \pi  \over {{\tau ^2}}}{{\left( {{{{p_{\max }}} \over {{p_0}}}} \right)}^{{2 \over \alpha }}} \max \left( {1,{{\left( {{{{p_0}} \over {{i_0}}}} \right)}^{{2 \over \alpha }}}} \right)} \right) \Bigg)
\end{align}
\else
\begin{align}
\label{eq:SPLA avP}
& \mathbb{E}\left[ {p_{{{\rm{M}}{{\rm{T}}}}}\left( {{R_{{\rm{M}}{{\rm{T}}_{\rm{0}}}}}} \right)} \right] 
  = {{{p_0}{\tau ^\alpha \Bigg( \Gamma \left( {1 + {\alpha  \over 2}} \right)
- \Gamma \left( {{{2 + \alpha } \over 2},{\lambda \pi  \over {{\tau ^2}}}{{\left( {{{{p_{\max }}} \over {{p_0}}}} \right)}^{{2 \over \alpha }}} \max \left( {1,{{\left( {{{{p_0}} \over {{i_0}}}} \right)}^{{2 \over \alpha }}}} \right)} \right) \Bigg) }} \over {{{\left( {\pi \lambda } \right)}^{{\alpha  \over 2}}}\max \left( {1,{{\left( {{{{p_0}} \over {{i_0}}}} \right)}^{{2 \over \alpha } + 1}}} \right)}} 
\end{align}
\fi
\end{proposition}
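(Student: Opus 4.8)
The plan is to specialize the general formula of \textbf{Proposition \ref{prop:IAM E(P_MT)}} to the single–effective–tier SPLA setting and then evaluate the resulting integral in closed form. The conceptual starting point is that a muted MT transmits zero power, so the quantity of interest is the \emph{unconditional} average
$\mathbb{E}[P_{\mathrm{MT}_0}] = \mathbb{E}\left[ p_{\mathrm{MT}}(R_{\mathrm{MT}_0})\,\mathbf{1}(\mathcal{A}_{\mathrm{MT}_0}) \right] = \Pr(\mathcal{A}_{\mathrm{MT}_0}) \int_0^\infty p_{\mathrm{MT}}(v)\, f_{R_{\mathrm{MT}_0}}(v\mid \mathcal{A}_{\mathrm{MT}_0})\,\mathrm{d}v$, where the conditional density is the one given in \textbf{Lemma \ref{cor: SPLA pdf dist serving}}. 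Setting $\epsilon=1$ gives $p_{\mathrm{MT}}(v)=p_0(\tau v)^\alpha$ and, importantly, collapses the argument of the exponential in \textbf{Lemma \ref{cor: SPLA pdf dist serving}}: since $(\tau v)^\epsilon/\tau = v$, one has $\max\!\left(v,(p_0/i_0)^{1/\alpha}v\right)=v\,\max\!\left(1,(p_0/i_0)^{1/\alpha}\right)$, so the squared maximum equals $M v^2$ after denoting $M=\max\!\left(1,(p_0/i_0)^{2/\alpha}\right)$.

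Next I would substitute the density. The normalizing factor $\Pr(\mathcal{A}_{\mathrm{MT}_0})$ appearing in the denominator of \textbf{Lemma \ref{cor: SPLA pdf dist serving}} (and computed in \textbf{Proposition \ref{cor:SPLA PrA}}) cancels against the prefactor above, leaving the single integral $\mathbb{E}[P_{\mathrm{MT}_0}] = 2\pi\lambda p_0 \tau^\alpha \int_0^{v_{\max}} v^{\alpha+1}\, \mathrm{e}^{-\pi\lambda M v^2}\,\mathrm{d}v$, where $v_{\max}=\tau^{-1}(p_{\max}/p_0)^{1/\alpha}$ is the upper limit of the support enforced by the indicator $\mathbf{1}\!\left(0<v<\tau^{-1}(p_{\max}/p_0)^{1/\alpha}\right)$.

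I would then evaluate this integral with the change of variable $u=\pi\lambda M v^2$, which maps it to $\tfrac{1}{2}(\pi\lambda M)^{-(1+\alpha/2)}\int_0^{u_{\max}} u^{\alpha/2}\,\mathrm{e}^{-u}\,\mathrm{d}u$ with $u_{\max}=\pi\lambda M v_{\max}^2 = \tfrac{\lambda\pi}{\tau^2}(p_{\max}/p_0)^{2/\alpha} M$. Recognizing the remaining integral as a lower incomplete Gamma function and writing it as $\Gamma\!\left(1+\tfrac{\alpha}{2}\right)-\Gamma\!\left(\tfrac{2+\alpha}{2},u_{\max}\right)$ reproduces the bracketed term in \eqref{eq:SPLA avP}. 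Collecting constants leaves a factor $(\pi\lambda)^{-\alpha/2}M^{-(1+\alpha/2)}$, and the final step is to rewrite $M^{1+\alpha/2}$: since $x\mapsto x^{1+\alpha/2}$ is increasing on $(0,\infty)$ it commutes with the maximum, so $M^{1+\alpha/2}=\max\!\left(1,(p_0/i_0)^{2/\alpha}\right)^{1+\alpha/2}=\max\!\left(1,(p_0/i_0)^{2/\alpha+1}\right)$, which matches the stated denominator, and the same monotonicity reproduces the $M$ inside $u_{\max}$.

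I expect that no step is analytically hard; the genuinely delicate point is the bookkeeping at the very beginning, namely recognizing that the average transmit power is an unconditional average in which muted MTs contribute as exact zeros, so that the $\Pr(\mathcal{A}_{\mathrm{MT}_0})$ factor multiplies—and ultimately cancels—the conditional density of \textbf{Lemma \ref{cor: SPLA pdf dist serving}}. Beyond that, the only care needed is to retain the finite upper limit $v_{\max}$ (which is why an \emph{upper} incomplete Gamma, rather than a complete one, appears) and to invoke monotonicity to push the exponent through the maximum.
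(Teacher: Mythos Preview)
Your proposal is correct and follows exactly the approach indicated in the paper, which simply states that the result follows from \textbf{Proposition~\ref{prop:IAM E(P_MT)}} by setting $\epsilon=1$ and computing the integral. You have supplied the details the paper omits---the collapse of the $\max(\cdot)$ under $\epsilon=1$, the cancellation of $\Pr(\mathcal{A}_{\mathrm{MT}_0})$, the substitution $u=\pi\lambda M v^2$, and the monotonicity argument pushing the exponent through the maximum---and each step is sound.
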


\begin{proof}
If follows from \textbf{Proposition \ref{prop:IAM E(P_MT)}}, by setting $\epsilon=1$ and computing the integral.
\end{proof}

\begin{remark}[Impact of interference-awareness]
\label{rem: power law for the average power}
If $p_\mathrm{max} \to \infty$ and $i_0 < p_0$ (interference-aware regime), (\ref{eq:SPLA avP}) simplifies as follows:
\begin{equation}
\label{eq:SPLA avP IA regime}
\mathbb{E}\left[ {p_{{{\rm{M}}{{\rm{T}}}}}\left( {{R_{{\rm{M}}{{\rm{T}}_{\rm{0}}}}}} \right)} \right] = {{{\tau ^\alpha }\Gamma \left( {1 + {\alpha  \over 2}} \right)} \over {{{\left( {\pi \lambda } \right)}^{{\alpha  \over 2}}}{p_0}^{{2 \over \alpha }}}}{i_0}^{^{{2 \over \alpha } + 1}}
\end{equation}
\noindent which implies that the average power consumption of the MTs scales polynomially with exponent $2/\alpha+1$, as a function of the maximum interference constraint $i_0$.
\end{remark}

\begin{lemma}
\label{lem: SPLA LI 1}
Assume $\epsilon=1$. The Laplace transform of the aggregate interference can be formulated as ${{\cal L}_I}\left( s \right) = \exp \left( {\beta (s)} \right)$, where $\beta (s) =  - 2\pi \lambda \theta \mu \left( s \right)$ and the following holds:
\ifTwoColumns
\begin{align}
\label{eq: SPLA theta}
& \theta  = \Bigg( 1 - \left( {1 + {\pi  \over {{\tau ^2}}}\lambda {{\left( {{{{p_{\max }}} \over {{p_0}}}} \right)}^{{2 \over \alpha }}}} \right)
\nonumber \\ & \; \times
{e^{ - {\pi  \over {{\tau ^2}}}\lambda {{\left( {{{{p_{\max }}} \over {{p_0}}}} \right)}^{{2 \over \alpha }}}\max \left( {1,{{\left( {{{{p_0}} \over {{i_0}}}} \right)}^{{2 \over \alpha }}}} \right)}} \Bigg)
{\left( {\pi \lambda \max \left( {1,{{\left( {{{{p_0}} \over {{i_0}}}} \right)}^{{2 \over \alpha }}}} \right)} \right)^{ - 1}}
\end{align}
\else
\begin{align}
\label{eq: SPLA theta}
\hspace{-0.5cm} \theta  = \Bigg( 1 - \left( {1 + {\pi  \over {{\tau ^2}}}\lambda {{\left( {{{{p_{\max }}} \over {{p_0}}}} \right)}^{{2 \over \alpha }}}} \right)
{e^{ - {\pi  \over {{\tau ^2}}}\lambda {{\left( {{{{p_{\max }}} \over {{p_0}}}} \right)}^{{2 \over \alpha }}}\max \left( {1,{{\left( {{{{p_0}} \over {{i_0}}}} \right)}^{{2 \over \alpha }}}} \right)}} \Bigg)
{\left( {\pi \lambda \max \left( {1,{{\left( {{{{p_0}} \over {{i_0}}}} \right)}^{{2 \over \alpha }}}} \right)} \right)^{ - 1}}
\end{align}
\fi
\ifTwoColumns
\begin{align}
\label{eq: SPLA mu}
& \mu \left( s \right) = {{{p_0}s} \over {\alpha  - 2}}{\rm{ma}}{{\rm{x}}^{2 - \alpha }}\left( {1,{{\left( {{{{p_0}} \over {{i_0}}}} \right)}^{{1 \over \alpha }}}} \right)
\nonumber \\ & \; \times
{_2}{{\rm{F}}_1}\left( {1,{{\alpha  - 2} \over \alpha },2 - {2 \over \alpha }, - {p_0}s} \right.\left. {{\rm{ma}}{{\rm{x}}^{ - \alpha }}\left( {1,{{\left( {{{{p_0}} \over {{i_0}}}} \right)}^{{1 \over \alpha }}}} \right)} \right)
\end{align}
\else
\begin{equation}
\label{eq: SPLA mu}
\hspace{-0.00cm} \mu \left( s \right) = {{{p_0}s} \over {\alpha  - 2}}{\rm{ma}}{{\rm{x}}^{2 - \alpha }}\left( {1,{{\left( {{{{p_0}} \over {{i_0}}}} \right)}^{{1 \over \alpha }}}} \right){_2}{{\rm{F}}_1}\left( {1,{{\alpha  - 2} \over \alpha },2 - {2 \over \alpha }, - {p_0}s} \right.\left. {{\rm{ma}}{{\rm{x}}^{ - \alpha }}\left( {1,{{\left( {{{{p_0}} \over {{i_0}}}} \right)}^{{1 \over \alpha }}}} \right)} \right)
\end{equation}
\fi
\end{lemma}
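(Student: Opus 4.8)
The plan is to specialize the general-association Laplace transform of \textbf{Lemma \ref{prop:IAM LI}} to the single-tier network induced by the SPLA criterion, and then to exploit the scale invariance that the full channel-inversion choice $\epsilon=1$ introduces. First I would invoke the standing observation that, when $t^{(1)}=t^{(2)}$ and the two tiers share the same path-loss exponent, the superposition $\Phi=\Phi^{(1)}\cup\Phi^{(2)}$ behaves as a single homogeneous PPP of intensity $\lambda=\lambda^{(1)}+\lambda^{(2)}$. Consequently the tier sums $\sum_{k}$ and $\sum_{n}$ in \eqref{eq:IAM theta(s)} collapse, $\sum_{n}\Pr(\mathcal{Q}^{(n)}_{\mathrm{MT}_i}\mid\mathcal{X}^{(k)}_{\mathrm{MT}_i},\mathcal{A}_{\mathrm{MT}_i})=1$ by the total probability theorem, and the conditional serving-distance density reduces to the single-tier active law of \textbf{Lemma \ref{cor: SPLA pdf dist serving}}. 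This already delivers the form $\mathcal{L}_I(s)=\exp(\beta(s))$ with $\beta(s)$ an integral of $\chi(s,r)$ in \eqref{eq:chi(s,r)} against that density.

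The crux is the observation that, for $\epsilon=1$ and equal weights, both arguments of the $\max$ in \eqref{eq:chi(s,r)} are proportional to $r$, since $\max\!\big((t^{(j)}/t^{(k)})^{1/\alpha}r,(p_0/i_0)^{1/\alpha}(\tau r)^{\epsilon}/\tau\big)=r\max(1,(p_0/i_0)^{1/\alpha})$. Hence the prefactor $p_0 s(\tau r)^{\alpha\epsilon}\tau^{-\alpha}\max^{2-\alpha}(\cdot)$ scales exactly as $r^{2}$, while the hypergeometric argument $-p_0 s(\tau r)^{\alpha\epsilon}\tau^{-\alpha}\max^{-\alpha}(\cdot)$ becomes independent of $r$. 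Therefore $\chi(s,r)=r^{2}\mu(s)$, with $\mu(s)$ precisely the $r$-free quantity of \eqref{eq: SPLA mu}. This factorization is the workhorse: it pulls $\mu(s)$ out of the integral and reduces $\beta(s)$ to $-2\pi\lambda\,\theta\,\mu(s)$, where $\theta$ is a second-moment-type integral of the active serving distance.

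It then remains to evaluate $\theta$ in closed form. With $\epsilon=1$ the $\max$ in the distance law collapses to $\max(1,(p_0/i_0)^{1/\alpha})$ and the support is $0<r<\tau^{-1}(p_{\max}/p_0)^{1/\alpha}$, so the integral is an elementary Gaussian moment on a finite interval; I would evaluate it via the substitution $u=r^{2}$ together with the lower incomplete gamma function (equivalently, by one integration by parts), and then fold in the activeness normalization through $\Pr(\mathcal{A}_{\mathrm{MT}_0})$ from \textbf{Proposition \ref{cor:SPLA PrA}} and \eqref{eq:SPLA PrA epsilon1}. Simplifying the resulting $\big(1-(1+\cdot)\mathrm{e}^{-\cdot}\big)$ structure yields the stated $\theta$ in \eqref{eq: SPLA theta}.

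The step I expect to be the main obstacle is the careful bookkeeping of the three coupled constraints on the interfering MT --- association ($\mathcal{O}$), interference cap ($\mathcal{Z}$), and muting/activeness ($\mathcal{A}$, governed by the second-nearest BS) --- when passing to the single-tier serving-distance law and the radial integral over the probe-BS distance. In particular, one must check that $\mathcal{O}\cap\mathcal{Z}$ produces the clean lower limit $D>r\max(1,(p_0/i_0)^{1/\alpha})$ that feeds the $\max$ in $\chi$, and that the empty-annulus event underlying activeness is consistently reflected in the normalization of $\theta$; the factorization $\chi(s,r)=r^{2}\mu(s)$ hinges on this alignment, and once it is settled the remaining calculus is routine.
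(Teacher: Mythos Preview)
Your approach is correct and essentially identical to the paper's own proof: the paper also specializes $\chi(s,r)$ in \eqref{eq:chi(s,r)} to $t^{(1)}=t^{(2)}$ and $\epsilon=1$, observes the factorization $\chi(s,r)=r^{2}\mu(s)$, and then writes $\beta(s)=-2\pi\lambda\,\mu(s)\,\theta$ with $\theta=\mathbb{E}\bigl[R_{\mathrm{MT}_i}^{2}\mid\mathcal{A}\bigr]$. Your additional discussion of how to evaluate $\theta$ from the single-tier active serving-distance law of \textbf{Lemma~\ref{cor: SPLA pdf dist serving}} together with $\Pr(\mathcal{A}_{\mathrm{MT}_0})$ from \eqref{eq:SPLA PrA epsilon1} simply spells out what the paper leaves implicit.
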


\begin{proof}
The proof follows from $\chi (s,r)$ in (\ref{eq:chi(s,r)}), by setting  $t^{(1)}=t^{(2)}$ and formulating it as $\chi (s,r)=r^2 \mu (s)$. Hence, $\beta (s) =  - 2\pi \lambda \mu \left( s \right)\theta $, where $\theta  = \mathbb{E}\left[ {R_{{\rm{M}}{{\rm{T}}_{\rm{i}}}}^2|{A_{{\rm{M}}{{\rm{T}}_{\rm{0}}}}}} \right]$.
\end{proof}

\begin{proposition}
\label{cor:avI varI}
Assume $\epsilon=1$. The mean and variance of the interference can be expressed as:
\ifTwoColumns
\begin{align}
\label{eq: SPLA avI varI}
\mathbb{E}\left[ I \right] = 2\pi \lambda \theta {{{p_0}{\rm{ma}}{{\rm{x}}^{2 - \alpha }}\left( {1,{{\left( {{{{p_0}} \over {{i_0}}}} \right)}^{{1 \over \alpha }}}} \right)} \over {\alpha  - 2}}
\nonumber \\
{\mathop{\rm var}} \left( I \right) = 2\pi \lambda \theta {{p_0^2{\rm{ma}}{{\rm{x}}^{2 - 2\alpha }}\left( {1,{{\left( {{{{p_0}} \over {{i_0}}}} \right)}^{{1 \over \alpha }}}} \right)} \over {\alpha  - 1}}
\end{align}
\else
\begin{align}
\label{eq: SPLA avI varI}
\mathbb{E}\left[ I \right] = 2\pi \lambda \theta {{{p_0}{\rm{ma}}{{\rm{x}}^{2 - \alpha }}\left( {1,{{\left( {{{{p_0}} \over {{i_0}}}} \right)}^{{1 \over \alpha }}}} \right)} \over {\alpha  - 2}}
; \quad {\mathop{\rm var}} \left( I \right) = 2\pi \lambda \theta {{p_0^2{\rm{ma}}{{\rm{x}}^{2 - 2\alpha }}\left( {1,{{\left( {{{{p_0}} \over {{i_0}}}} \right)}^{{1 \over \alpha }}}} \right)} \over {\alpha  - 1}}
\end{align}
\fi
\end{proposition}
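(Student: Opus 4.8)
The plan is to obtain both moments by differentiating the Laplace transform of \textbf{Lemma~\ref{lem: SPLA LI 1}} at the origin, following the same method used for \textbf{Proposition~\ref{prop:IAM E(I) var(I)}} but now specialised to the single tier $t^{(1)}=t^{(2)}$, for which a single function $\beta(s)$ suffices. Writing $\mathcal{L}_{I}(s)=\exp(\beta(s))$ with $\beta(s)=-2\pi\lambda\theta\mu(s)$, I would first observe that $\mu(0)=0$, since $\mu(s)$ in \eqref{eq: SPLA mu} carries an explicit prefactor $s$; hence $\beta(0)=0$ and $\mathcal{L}_{I}(0)=1$, as a valid transform requires. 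Differentiating $\mathcal{L}_{I}(s)=\exp(\beta(s))$ gives $\mathcal{L}_{I}'(s)=\beta'(s)\exp(\beta(s))$ and $\mathcal{L}_{I}''(s)=(\beta''(s)+\beta'(s)^2)\exp(\beta(s))$, so that at $s=0$ one has $\mathbb{E}[I]=-\mathcal{L}_{I}'(0)=-\beta'(0)$ and $\mathbb{E}[I^2]=\mathcal{L}_{I}''(0)=\beta''(0)+\beta'(0)^2$. The useful simplification is that the $\beta'(0)^2$ term cancels in $\operatorname{var}(I)=\mathbb{E}[I^2]-\mathbb{E}[I]^2$, leaving $\operatorname{var}(I)=\beta''(0)=-2\pi\lambda\theta\,\mu''(0)$, while $\mathbb{E}[I]=-\beta'(0)=2\pi\lambda\theta\,\mu'(0)$.

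It then remains to evaluate $\mu'(0)$ and $\mu''(0)$. For this I would insert the Maclaurin series of the Gauss hypergeometric function,
\begin{equation*}
{_2}{\rm F}_1\left(1,b,d,z\right)=1+\frac{b}{d}\,z+\frac{b(b+1)}{d(d+1)}\,z^2+\cdots,
\end{equation*}
with $b=(\alpha-2)/\alpha$, $d=2-2/\alpha$ and $z=-p_0 s\,\mathrm{max}^{-\alpha}(1,(p_0/i_0)^{1/\alpha})$. Abbreviating $M=\mathrm{max}(1,(p_0/i_0)^{1/\alpha})$, the right-hand side of \eqref{eq: SPLA mu} becomes $\tfrac{p_0 M^{2-\alpha}}{\alpha-2}\,s\,\bigl[\,1-\tfrac{b}{d}p_0 M^{-\alpha}s+O(s^2)\,\bigr]$, from which $\mu'(0)=p_0 M^{2-\alpha}/(\alpha-2)$ (the coefficient of $s$) and $\mu''(0)=-2\tfrac{b}{d}p_0^2 M^{2-2\alpha}/(\alpha-2)$ (twice the coefficient of $s^2$) follow. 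Substituting $\mu'(0)$ into $\mathbb{E}[I]=2\pi\lambda\theta\mu'(0)$ reproduces the first identity in \eqref{eq: SPLA avI varI}.

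For the variance, using $b/d=(\alpha-2)/(2(\alpha-1))$ shows that the factor $\alpha-2$ in $\mu''(0)$ cancels against the one hidden in $b/d$, giving $\operatorname{var}(I)=-2\pi\lambda\theta\mu''(0)=2\pi\lambda\theta\,p_0^2 M^{2-2\alpha}/(\alpha-1)$, i.e.\ the second identity in \eqref{eq: SPLA avI varI}. The only delicate point---more a matter of bookkeeping than of genuine difficulty---is tracking the exponents of the $\mathrm{max}(\cdot)$ term through the products $M^{2-\alpha}\cdot M^{-\alpha}=M^{2-2\alpha}$ (so that the mean carries the power $2-\alpha$ while the variance carries $2-2\alpha$) and confirming the cancellation of $\alpha-2$; the remainder is a two-term Taylor expansion. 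Note that $\theta$ need not be re-expanded, as it enters both moments only as the common prefactor $\mathbb{E}\bigl[R_{\mathrm{MT}_i}^2\mid\mathcal{A}_{\mathrm{MT}_0}\bigr]$ already evaluated in \eqref{eq: SPLA theta}.
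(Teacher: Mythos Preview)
Your proposal is correct and follows exactly the approach indicated in the paper's one-line proof, namely evaluating the derivatives of the Laplace transform of \textbf{Lemma~\ref{lem: SPLA LI 1}} at $s=0$. You have simply supplied the details the paper omits: the cumulant identities $\mathbb{E}[I]=-\beta'(0)$ and $\operatorname{var}(I)=\beta''(0)$, and the two-term Maclaurin expansion of ${_2}{\rm F}_1$ needed to read off $\mu'(0)$ and $\mu''(0)$.
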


\begin{proof}
It follows from {\textbf{Lemma \ref{lem: SPLA LI 1}}} evaluating the derivatives of the Laplace transform at zero.
\end{proof}

\begin{remark}[Trends of mean and variance of the interference as a function of $i_0$]
\label{rem: power law for the avI}
Assume ${p_{\max }} \to \infty $ and consider the interference-aware regime, i.e., $i_0 < p_0$. Then, \eqref{eq: SPLA theta} simplifies to $\theta  = {1 \over {\pi \lambda }}{\left( {{{{i_0}} \over {{p_0}}}} \right)^{{4 \over \alpha }}}$ and the mean and variance of the interference can be formulated as follows:
\ifTwoColumns
\begin{align}
\label{eq: SPLA avI varI pmax inf}
\mathbb{E}\left[ I \right] = {2 \over {\alpha  - 2}}{ p_0^{ - {2 \over \alpha }}}{i_0^{{{\alpha  + 2} \over \alpha }}}
; \, 
{\mathop{\rm var}} \left( I \right) = {2 \over {\alpha  - 1}}
p_0^{ - {2 \over \alpha }}{i_0^{{{2\left( {\alpha  + 1} \right)} \over \alpha }}}
\end{align}

\else
\begin{align}
\label{eq: SPLA avI varI pmax inf}
\mathbb{E}\left[ I \right] = {2 \over {\alpha  - 2}}{\left( {{p_0}} \right)^{ - {2 \over \alpha }}}{\left( {{i_0}} \right)^{{{\alpha  + 2} \over \alpha }}}
; \quad \quad
{\mathop{\rm var}} \left( I \right) = {2 \over {\alpha  - 1}}{\left( {{p_0}} \right)^{ - {2 \over \alpha }}}{\left( {{i_0}} \right)^{{{2\left( {\alpha  + 1} \right)} \over \alpha }}}
\end{align}

\fi

\noindent which implies that the mean and variance of the interference scale polynomially with exponents $\alpha+2/\alpha$ and $2\left( {\alpha  + 1} \right)/\alpha$ as a function of $i_0$, respectively, {\color{black} and they do not depend on the BSs' density}.
\end{remark}

Finally, the following theorem provides the coverage probability under the SPLA criterion.
\begin{theorem}
\label{cor: SPLA ccdf of the SINR}
Assume $\epsilon=1$, $p_{\rm max} \to \infty$ and that the system operates in the interference-aware regime ($i_0 < p_0$). The CCDF of the SINR can be formulated as follows:
\ifTwoColumns
\begin{align}
\label{eq: SPLA ccdf of the SINR}
& {\bar F_{{\rm{SINR}}}}\left( {\gamma |{{\cal A}_{{\rm{M}}{{\rm{T}}_0}}}} \right) = \exp \Bigg(  - {{\gamma \sigma _n^2} \over {{p_0}}} - 2{\gamma  \over {\alpha  - 2}}{{\left( {{{{i_0}} \over {{p_0}}}} \right)}^{{{\alpha  + 2} \over \alpha }}}
\nonumber \\ & \; \times
{_2}{{\rm{F}}_1}\left( {1,{{\alpha  - 2} \over \alpha },2 - {2 \over \alpha }, - \gamma \left( {{{{i_0}} \over {{p_0}}}} \right)} \right) \Bigg)
\end{align}
\else
\begin{align}
\label{eq: SPLA ccdf of the SINR}
{\bar F_{{\rm{SINR}}}}\left( {\gamma |{{\cal A}_{{\rm{M}}{{\rm{T}}_0}}}} \right) = \exp \left( { - {{\gamma \sigma _n^2} \over {{p_0}}} - 2{\gamma  \over {\alpha  - 2}}{{\left( {{{{i_0}} \over {{p_0}}}} \right)}^{{{\alpha  + 2} \over \alpha }}}{_2}{{\rm{F}}_1}\left( {1,{{\alpha  - 2} \over \alpha },2 - {2 \over \alpha }, - \gamma \left( {{{{i_0}} \over {{p_0}}}} \right)} \right)} \right)
\end{align}
\fi
\end{theorem}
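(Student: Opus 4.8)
The plan is to specialize the single-tier coverage expression to full channel inversion and then insert the closed-form Laplace transform already available for this regime. First I would exploit the fact that, with $\epsilon=1$, the useful received power at the probe BS is $H_{\mathrm{MT}_0}(\tau R_{\mathrm{MT}_0})^{-\alpha}p_{\mathrm{MT}}(R_{\mathrm{MT}_0}) = p_0 H_{\mathrm{MT}_0}$, which is \emph{independent} of the serving distance $R_{\mathrm{MT}_0}$. Hence, conditioning on $\mathcal{A}_{\mathrm{MT}_0}$, the SINR reduces to $p_0 H_{\mathrm{MT}_0}/(I+\sigma_n^2)$, and since $H_{\mathrm{MT}_0}$ is exponential with unit mean, averaging over the fading gives $\bar{F}_{\mathrm{SINR}}(\gamma|\mathcal{A}_{\mathrm{MT}_0}) = \mathbb{E}_I[\exp(-\gamma(I+\sigma_n^2)/p_0)] = e^{-\gamma\sigma_n^2/p_0}\mathcal{L}_I(\gamma/p_0)$. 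Equivalently, this follows from \textbf{Corollary~\ref{cor:F_SINR}} by dividing by $\Pr(\mathcal{A}_{\mathrm{MT}_0})$ and observing that under SPLA the two tiers merge into a single equivalent tier of density $\lambda$, so the conditional Laplace transform no longer depends on $j$.

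Next I would substitute the SPLA Laplace transform $\mathcal{L}_I(s)=\exp(-2\pi\lambda\theta\mu(s))$ from \textbf{Lemma~\ref{lem: SPLA LI 1}}, evaluated at $s=\gamma/p_0$, and simplify $\theta$ and $\mu(\gamma/p_0)$ under the two standing hypotheses. In the interference-aware regime $i_0<p_0$ one has $(p_0/i_0)^{1/\alpha}>1$, so every factor $\max(1,(p_0/i_0)^{1/\alpha})$ collapses to $(p_0/i_0)^{1/\alpha}$; in particular the hypergeometric argument becomes $-p_0 s\,\mathrm{max}^{-\alpha}(\cdot) = -p_0(\gamma/p_0)(i_0/p_0) = -\gamma(i_0/p_0)$, matching the argument in \eqref{eq: SPLA ccdf of the SINR}. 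For the second moment of the serving distance I would use the $p_{\max}\to\infty$, interference-aware value $\theta=(1/\pi\lambda)(i_0/p_0)^{4/\alpha}$ given in \textbf{Remark~\ref{rem: power law for the avI}}, which is in turn consistent with the densified distance PDF of \textbf{Remark~\ref{rem: Equivalent BS density}}.

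Finally I would collect the powers of $i_0/p_0$. The coefficient of the hypergeometric function in $-2\pi\lambda\theta\mu(\gamma/p_0)$ is the product of $2\pi\lambda\theta = 2(i_0/p_0)^{4/\alpha}$ and $\frac{\gamma}{\alpha-2}(p_0/i_0)^{(2-\alpha)/\alpha}$; here the $\pi\lambda$ cancels (reflecting the density-independence already noted) and the exponents combine as $4/\alpha+(\alpha-2)/\alpha = (\alpha+2)/\alpha$, yielding exactly $-\frac{2\gamma}{\alpha-2}(i_0/p_0)^{(\alpha+2)/\alpha}\,{}_2\mathrm{F}_1(1,\frac{\alpha-2}{\alpha},2-\frac{2}{\alpha},-\gamma(i_0/p_0))$. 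Adding the noise term $-\gamma\sigma_n^2/p_0$ inside the exponential then recovers \eqref{eq: SPLA ccdf of the SINR}. The only delicate point is the careful bookkeeping of the fractional exponents of $i_0/p_0$ arising separately from $\theta$ and from the $\mathrm{max}^{2-\alpha}$ factor in $\mu$, together with checking that the $\pi\lambda$ factors cancel so that the result is indeed independent of the BS density; the remaining manipulations are the mechanical substitution of already-established closed forms.
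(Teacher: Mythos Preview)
Your proposal is correct and follows essentially the same route as the paper's own proof, which simply says the result follows from \textbf{Theorem~\ref{prop:IAM F_SINR}} with $t^{(1)}=t^{(2)}$, $\epsilon=1$, together with \textbf{Lemma~\ref{lem: SPLA LI 1}} under $p_{\max}\to\infty$ and $i_0<p_0$. You spell out the same specialization in more detail: reducing the conditional CCDF to $e^{-\gamma\sigma_n^2/p_0}\mathcal{L}_I(\gamma/p_0)$ via \textbf{Corollary~\ref{cor:F_SINR}} (plus the division by $\Pr(\mathcal{A}_{\mathrm{MT}_0})$ to pass to the active-MT conditional), inserting the SPLA Laplace transform, and then combining the $(i_0/p_0)$ exponents from $\theta$ and from the $\max^{2-\alpha}$ factor in $\mu$ to obtain the $(\alpha+2)/\alpha$ power---exactly the algebra the paper leaves implicit.
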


\begin{proof}
The proof follows from \textbf{Theorem \ref{prop:IAM F_SINR}} by setting $t^{(1)}=t^{(2)}$ and $\epsilon=1$, and from \textbf{Lemma \ref{lem: SPLA LI 1}} by letting $p_{\rm{max}} \to \infty$ and considering $i_0<p_0$.
\end{proof}

\begin{remark}[SINR invariance as a function of $\lambda$]
\label{rem:Scaling invariance}
From (\ref{eq: SPLA ccdf of the SINR}), we evince that the CCDF of the SINR is independent of $\lambda$, but it depends on the ratio $i_0/p_0$ and the path-loss exponent $\alpha$.
\end{remark}

{\color{black} Interestingly the SINR in such a setup is invariant with the BSs' density. Intuitively, this means that both the desired received power and the interference does not vary with the BSs' density. On the one hand, the desired power does not vary thanks to full channel inversion power control ($\epsilon=1$, $p_{\rm max}\to \infty$). On the other hand, although the distances towards nearest interfering MTs decrease with $\lambda$, their transmit power also decrease with $\lambda$, making received interference invariant with $\lambda$, as it can be observed from its moments in eq. (\ref{eq: SPLA avI varI pmax inf}). This density invariance has been also reported in \cite{Andrews11, Singh15} for the case of the SIR.}

\section{Spectral Efficiency and Binary Rate}
\label{sec:Binary Rate and Spectral Efficiency}
This section is focused on the analysis of SE and BR. Unlike the vast majority of papers on stochastic geometry modeling of HCNs that evaluate these key performance indicators based on the Shannon formula, we provide a mathematical formulation that is more useful for current cellular deployments based on practical AMC schemes and, thus, provides estimates of SE and BR that can be achieved at a finite target value of the Block Error Rate (BLER) rather than their theoretically achievable counterparts under the assumptions of unlimited decoding complexity and arbitrarily small BLER. We show, remarkably, that more tractable expressions of SE and BR can be provided, compared to those that can be obtained based on the Shannon definition. As mentioned in Section I, the BR accounts for the amount of bandwidth allocated to the typical MT by the scheduler and, thus, accounts for the BS's load, i.e., the number of MTs that need to be simultaneously served in the cell to which the typical MT belongs to. Accordingly, SE and BR provide different information on the advantages and limitations of transmission schemes and, as such, are both employed for assessing the performance of practical LTE systems \cite{Sesia09}.

SE and BR, however, are related to each other and, in mathematical terms, we have:
\begin{equation}
\label{eq:Shannon and AMC BR}
\mathrm{BR}_{\mathrm{MT}_{0}}=\frac{b_{w}}{N_{\mathcal{B}_{\mathrm{MT}_{0}}}^{\mathcal{A}}}\mathrm{SE}_{\mathrm{MT}_{0}}
\quad \left( \mathrm{bps} \right)
\end{equation}
where $b_w$ is the available bandwidth per BS and $N^{\mathcal{A}}_{\mathcal{B}_{\mathrm{MT}_{0}}}$ denotes the number of active MTs associated with the probe BS, which is commonly referred to as the cell load \cite{DiRenzo16a}.

As extensively discussed in, e.g., \cite{DiRenzo16a}, \cite{Singh14b} \cite{Ferenc07}, the distribution of $N^{\mathcal{A}}_{\mathcal{B}_{\mathrm{MT}_{0}}}$ is not available for cell association criteria that are not based on the shortest distance, and, thus approximations need to be used. For mathematical tractability, but without loosing in accuracy, we exploit the approximation in \cite{Singh14b} which, for the convenience of the readers, is reported in what follows.
\begin{assumption}
The Probability Mass Function (PMF) of the number of active MTs, $N_{\mathcal{B}_{\mathrm{MT}_{0}}}^{\mathcal{A}}$, associated with a BS of tier $j$ is approximated as follows:
\ifTwoColumns
\begin{align}
\label{eq:n^A_BMT0}
\Pr & \left( N_{\mathcal{B}_{\mathrm{MT}_{0}}}^{\mathcal{A}}=n|\mathcal{X}_{\mathrm{MT}_{0}}^{(j)},\mathcal{A}_{\mathrm{MT}_{0}} \right)
 \approx \frac{3.5^{3.5}}{\left( n-1 \right)!}\frac{\Gamma \left( n+3.5 \right)}{\Gamma \left( 3.5 \right)}
\nonumber \\
& \quad \left( \frac{\lambda _{\mathrm{MT}}\Pr \left( \mathcal{X}_{\mathrm{MT}_{0}}^{(j)},\mathcal{A}_{\mathrm{MT}_{0}} \right)}{\lambda ^{(j)}} \right)^{n-1}
\nonumber \\
& \quad  \left( 3.5+\frac{\lambda _{\mathrm{MT}}\Pr \left( \mathcal{X}_{\mathrm{MT}_{0}}^{(j)},\mathcal{A}_{\mathrm{MT}_{0}} \right)}{\lambda ^{(j)}} \right)
\end{align}
\else
\begin{align}
\label{eq:n^A_BMT0}
\Pr \left( N_{\mathcal{B}_{\mathrm{MT}_{0}}}^{\mathcal{A}}=n|\mathcal{X}_{\mathrm{MT}_{0}}^{(j)},\mathcal{A}_{\mathrm{MT}_{0}} \right)
& \approx \frac{3.5^{3.5}}{\left( n-1 \right)!}\frac{\Gamma \left( n+3.5 \right)}{\Gamma \left( 3.5 \right)}
\left( \frac{\lambda _{\mathrm{MT}} \cdot  p}{\lambda ^{(j)}} \right)^{n-1}
\left( 3.5+\frac{\lambda _{\mathrm{MT}} \cdot p}{\lambda ^{(j)}} \right)
\end{align}
\fi
\end{assumption}
\noindent where, for notational simplicity, the short-hand $p=\Pr(\mathcal{X}^{(j)}_{\mathrm{MT}_0},\mathcal{A}_{\mathrm{MT}_{0}})$ is used.

\subsection{Adaptive Modulation and Coding}
In modern cellular systems \cite{Sesia09}, AMC is aimed to adapt the MCS to be used to the channel conditions. This is needed for maximizing the BR while providing a BLER below a desired threshold BLER$_T$. In practice, AMC is implemented as follows. In the UL, the MTs transmit sounding reference signals that are used by the BSs for estimating the SINR. Based on these estimates, the BSs choose the MCS to use (usually identified by an index), which corresponds to a given Channel Quality Indicator (CQI), $i_\mathrm{CQI}\in [1,n_\mathrm{CQI}]$, that maximizes the SE while maintaining the BLER below BLER$_T$. The choice of the best MCS to use is made based on lookup tables that provide the SINR thresholds, $\gamma_{i_\mathrm{CQI}}$, associated to each value of CQI. Finally, the BSs inform each scheduled MT of the MCS index to use for its subsequent transmission. To reduce the reporting overhead associated with the CQIs, the LTE standard assumes that the number of bits used for reporting the CQI is equal to $4$, which implies $n_\mathrm{CQI}=15$.

Based on this working principle, the BR can be obtained from (\ref{eq:Shannon and AMC BR}) and the SE is as follows:
\begin{equation}
\label{eq:AMC SE}
\mathrm{SE}_{\mathrm{MT}_{0}}=\sum\limits_{i_{\mathrm{CQI}}=1}^{n_{\mathrm{CQI}}}{\mathrm{SE}_{i_{\mathrm{CQI}}}}\mathbf{1}\left( \mathrm{SINR}_{\mathrm{MT}_{0}}\in [\gamma _{i_{\mathrm{CQI}}},\gamma _{i_{\mathrm{CQI}+1}}) \right)
\end{equation}
where $\gamma _{1}<\cdots <\gamma _{n_{\text{CQI}}}$, $i_\mathrm{CQI}=0$ if no transmission,
$\bigcap\nolimits_{i_{\mathrm{CQI}}=1}^{n_{\mathrm{CQI}}}{[\gamma _{i_{\mathrm{CQI}}},\gamma _{i_{\mathrm{CQI}+1}})=\emptyset }$, $\gamma _{n_{\mathrm{CQI}+1}} \rightarrow \infty$.

Based on \eqref{eq:AMC SE}, the spatially-average SE can be obtained from the CCDF of the SINR provided in Sections III and IV for GCA and SPLA criteria, respectively. More precisely, we have:
\ifTwoColumns
\begin{align}
\label{eq:AMC E(SE)}
& \mathbb{E}\left[ \mathrm{SE_{\mathrm{MT}_{0}}} \right]=\sum\limits_{j\in \mathcal{K}}{\sum\limits_{m\in \mathcal{K}}{\Pr \left( \mathcal{X}_{\mathrm{MT}_{0}}^{(j,m)},\mathcal{A}_{\mathrm{MT}_{0}} \right)}}
\nonumber \\
& \quad  \sum\limits_{i_{\mathrm{CQI}}=1}^{n_{\mathrm{CQI}}}{\mathrm{SE}_{i_{\mathrm{CQI}}}}
\Bigg( \bar{F}_{\mathrm{SINR}}\left( \gamma _{i_{\mathrm{CQI}}}|\mathcal{X}_{\mathrm{MT}_{0}}^{(j,m)},\mathcal{A}_{\mathrm{MT}_{0}} \right)
\nonumber \\
& \quad -\bar{F}_{\mathrm{SINR}}\left( \gamma _{i_{\mathrm{CQI}}+1}|\mathcal{X}_{\mathrm{MT}_{0}}^{(j,m)},\mathcal{A}_{\mathrm{MT}_{0}} \right) \Bigg)
\end{align}
\else
\begin{align}
\label{eq:AMC E(SE)}
\mathbb{E}\left[ \mathrm{SE_{\mathrm{MT}_{0}}} \right]&=\sum\limits_{j\in \mathcal{K}}{\sum\limits_{m\in \mathcal{K}}{\Pr \left( \mathcal{X}_{\mathrm{MT}_{0}}^{(j,m)},\mathcal{A}_{\mathrm{MT}_{0}} \right)}}
\nonumber \\
& \hspace{-0.5cm} \times \sum\limits_{i_{\mathrm{CQI}}=1}^{n_{\mathrm{CQI}}}{\mathrm{SE}_{i_{\mathrm{CQI}}}}
\Bigg( \bar{F}_{\mathrm{SINR}}\left( \gamma _{i_{\mathrm{CQI}}}|\mathcal{X}_{\mathrm{MT}_{0}}^{(j,m)},\mathcal{A}_{\mathrm{MT}_{0}} \right)
-\bar{F}_{\mathrm{SINR}}\left( \gamma _{i_{\mathrm{CQI}}+1}|\mathcal{X}_{\mathrm{MT}_{0}}^{(j,m)},\mathcal{A}_{\mathrm{MT}_{0}} \right) \Bigg)
\end{align}
\fi

With similar arguments, the average BR of the probe MT can be written as follows:
\ifTwoColumns
\begin{align}
\label{eq:AMC E(BR)}
& \mathbb{E}\left[ \mathrm{BR_{\mathrm{MT}_{0}}} \right]=\sum\limits_{j\in \mathcal{K}}{\sum\limits_{m\in \mathcal{K}}{\sum\limits_{n>0}{\Pr \left( \mathcal{X}_{\mathrm{MT}_{0}}^{(j,m)},\mathcal{A}_{\mathrm{MT}_{0}} \right)}}}
\nonumber \\
& \quad \Pr \left( N_{\mathcal{B}_{\mathrm{MT}_{0}}}^{\mathcal{A}}=n|\mathcal{X}_{\mathrm{MT}_{0}}^{(j)},\mathcal{A}_{\mathrm{MT}_{0}} \right)
\nonumber \\
& \quad \sum\limits_{i_{\mathrm{CQI}}=1}^{n_{\mathrm{CQI}}}{\frac{b_{w}}{n}\mathrm{SE}_{i_{\mathrm{CQI}}}}\Bigg(\bar{F}_{\mathrm{SINR}}\left( \gamma _{i_{\mathrm{CQI}}}|\mathcal{X}_{\mathrm{MT}_{0}}^{(j,m)},\mathcal{A}_{\mathrm{MT}_{0}} \right)
\nonumber  \\
& \quad -\bar{F}_{\mathrm{SINR}}\left( \gamma _{i_{\mathrm{CQI}}+1}|\mathcal{X}_{\mathrm{MT}_{0}}^{(j,m)},\mathcal{A}_{\mathrm{MT}_{0}} \right) \Bigg)
%
\nonumber \\ & \quad
 \overset{\text{(a)}}{\mathop{=}} \sum\limits_{j \in {\cal K}} \sum\limits_{m \in {\cal K}} \sum\limits_{i_{{\rm{CQI}}}  = 1}^{n_{{\rm{CQI}}} } \Pr \left( {{\cal X}_{{\rm{MT}}_0 }^{(j,m)} ,{\cal A}_{{\rm{MT}}_0 } } \right)
 \nonumber \\ & \quad \times
{\rm{SE}}_{i_{{\rm{CQI}}} }
 \frac{{3.5^{3.5} b_w \left( {3.5\lambda ^{(j)}  + \lambda _{{\rm{MT}}} p} \right)\left( {1 - \left( {1 - \frac{{\lambda _{{\rm{MT}}} p}}{{\lambda ^{(j)} }}} \right)^{3.5} } \right)}}{{\lambda _{{\rm{MT}}} p\left( {1 - \frac{{\lambda _{{\rm{MT}}} p}}{{\lambda ^{(j)} }}} \right)^{3.5} }}
\nonumber \\ & \quad \times
 \Bigg(\bar F_{{\rm{SINR}}} \left( {\gamma _{i_{{\rm{CQI}}} } |{\cal X}_{{\rm{MT}}_0 }^{(j,m)} ,{\cal A}_{{\rm{MT}}_0 } } \right)
\nonumber \\ & \quad
 - \bar F_{{\rm{SINR}}} \left( {\gamma _{i_{{\rm{CQI}}}  + 1} |{\cal X}_{{\rm{MT}}_0 }^{(j,m)} ,{\cal A}_{{\rm{MT}}_0 } } \right)\Bigg)
\end{align}
\else
\begin{align}
\label{eq:AMC E(BR)}
& \mathbb{E}\left[ \mathrm{BR_{\mathrm{MT}_{0}}} \right]=\sum\limits_{j\in \mathcal{K}}{\sum\limits_{m\in \mathcal{K}}{\sum\limits_{n>0}{\Pr \left( \mathcal{X}_{\mathrm{MT}_{0}}^{(j,m)},\mathcal{A}_{\mathrm{MT}_{0}} \right)}}}
\Pr \left( N_{\mathcal{B}_{\mathrm{MT}_{0}}}^{\mathcal{A}}=n|\mathcal{X}_{\mathrm{MT}_{0}}^{(j)},\mathcal{A}_{\mathrm{MT}_{0}} \right)
\nonumber \\
&  \quad \times \sum\limits_{i_{\mathrm{CQI}}=1}^{n_{\mathrm{CQI}}}{\frac{b_{w}}{n}\mathrm{SE}_{i_{\mathrm{CQI}}}}\Bigg(\bar{F}_{\mathrm{SINR}}\left( \gamma _{i_{\mathrm{CQI}}}|\mathcal{X}_{\mathrm{MT}_{0}}^{(j,m)},\mathcal{A}_{\mathrm{MT}_{0}} \right)
-\bar{F}_{\mathrm{SINR}}\left( \gamma _{i_{\mathrm{CQI}}+1}|\mathcal{X}_{\mathrm{MT}_{0}}^{(j,m)},\mathcal{A}_{\mathrm{MT}_{0}} \right) \Bigg)
\nonumber \\
& \quad \overset{\text{(a)}}{\mathop{=}} \sum\limits_{j \in {\cal K}} {\sum\limits_{m \in {\cal K}} {\sum\limits_{i_{{\rm{CQI}}}  = 1}^{n_{{\rm{CQI}}} } {\Pr \left( {{\cal X}_{{\rm{MT}}_0 }^{(j,m)} ,{\cal A}_{{\rm{MT}}_0 } } \right){\rm{SE}}_{i_{{\rm{CQI}}} } } } }
 \frac{{3.5^{3.5} b_w \left( {3.5\lambda ^{(j)}  + \lambda _{{\rm{MT}}} p} \right)\left( {1 - \left( {1 - \frac{{\lambda _{{\rm{MT}}} p}}{{\lambda ^{(j)} }}} \right)^{3.5} } \right)}}{{\lambda _{{\rm{MT}}} p\left( {1 - \frac{{\lambda _{{\rm{MT}}} p}}{{\lambda ^{(j)} }}} \right)^{3.5} }}
\nonumber \\
& \quad \quad \quad \quad \quad \quad \quad \times
  \left(\bar F_{{\rm{SINR}}} \left( {\gamma _{i_{{\rm{CQI}}} } |{\cal X}_{{\rm{MT}}_0 }^{(j,m)} ,{\cal A}_{{\rm{MT}}_0 } } \right) - \bar F_{{\rm{SINR}}} \left( {\gamma _{i_{{\rm{CQI}}}  + 1} |{\cal X}_{{\rm{MT}}_0 }^{(j,m)} ,{\cal A}_{{\rm{MT}}_0 } } \right)\right)
\end{align}
\fi
\noindent where (a) is obtained by computing the summation over $n=N_{\mathcal{B}_{\mathrm{MT}_{0}}}^{\mathcal{A}}$ in closed-form with the aid of the PMF in (\ref{eq:n^A_BMT0}). 

The mathematical expressions of SE and BR of AMC schemes are easier to compute than the corresponding formulas obtained from the Shannon definition of SE, since the latter definition requires an extra integral to be computed \cite{DiRenzo16a}. This is remarkable, since the SE and BR in \eqref{eq:AMC E(SE)} and \eqref{eq:AMC E(BR)} account for feedback's overhead and limited-complexity receivers.

In the present paper, as a sensible case study, we consider the range of CQI values and a target BLER equal to $10\%$, as recommended by LTE specifications \cite{Sesia09}. The SINR thresholds $\gamma _{i_{\mathrm{CQI}}}$ are obtained from link-level simulations conducted with an accurate LTE simulator \cite{wimo14,Martin-Vega13}. More precisely, the considered simulator assumes MTs of limited computational complexity, where decoding is performed by using a 1-tap zero forcing equalizer and a turbo decoder based on the soft output Viterbi algorithm. Numerical illustrations are reported in Section \ref{sec:Numerical Results}. For completeness, Table \ref{tab:AMC} reports the input parameters that are needed for computing the SE and BR in \eqref{eq:AMC E(SE)} and \eqref{eq:AMC E(BR)}. It is worth emphasizing, however, that \eqref{eq:AMC E(SE)} and \eqref{eq:AMC E(BR)} are general enough for being used for analyzing different wireless standards and receiver implementations.

\begin{table*}
\renewcommand{\arraystretch}{1.0}
\ifTwoColumns
\caption{SINR thresholds and SE values obtained from the LTE link-level simulator in \cite{wimo14,Martin-Vega13}.}
\else
\caption{SINR thresholds and SE values obtained from the LTE link-level simulator in \cite{wimo14,Martin-Vega13}. \vspace{-0.5cm}}
\fi
\label{tab:AMC}
\ifTwoColumns
\else
\scriptsize
\fi
\centering
\begin{tabular}{ c c c c c c c c c c c c c c c c}
\toprule
$i_\mathrm{CQI}$ & 1 & 2 & 3 & 4 & 5 & 6 & 7 & 8 & 9 & 10 & 11 & 12 & 13 & 14 & 15 \\
\toprule
$\mathrm{SE}_{i_\mathrm{CQI}}$ [bps/Hz] & 0.15 & 0.23 & 0.38 & 0.60 & 0.88 & 1.18 & 1.48 & 1.91 &
2.41 & 2.73 & 3.32 & 3.90 & 4.52 & 5.11 & 5.55 \\
\hline
$\gamma_{i_\mathrm{CQI}}$ [dB] &
-3.65 & -1.60 & 0.00 & 2.25 & 3.75 & 4.75 & 9.00 & 10.50 &
12.35 & 15.40 & 17.18 & 18.85 & 20.70 & 24.0 & 25.0 \\
\bottomrule
\end{tabular}
\end{table*}

\section{Numerical Results}
\label{sec:Numerical Results}

\ifTwoColumns
\begin{table}
\renewcommand{\arraystretch}{1.1}
\ifTwoColumns
\caption{Simulation setup.}
\else
\caption{Simulation setup.  \vspace{-1.5cm}}
\fi
\label{tab:Simulation Parameters}
\scriptsize
\centering
\begin{tabular}{ c c c c }
\toprule
Parameter & Value & Parameter & Value \\
\toprule
$f_c$ (MHz) & $2 \times 10^{3}$ & $h_{\mathrm{BS}}$ (m) & $10$ \\
\hline
$b_w$ (MHz) & $9$ & $t^{(1)}/t^{(2)}$ (dB) & $\{9,0\}$ \\
\hline
$\lambda^{(1)}$ (points/m$^2$) & $2 \times 10^{-6}$ & $\lambda^{(2)}$ (points/m$^2$) & $4 \times 10^{-6}$ \\
\hline
$\lambda_{\mathrm{MT}}$ (points/m$^2$) & $80 \times 10^{-6}$ & $n_\mathrm{thermal}$ (dBm/Hz) & $-174$  \\
\hline
$n_\mathrm{F}$ (dB) & $9$ & $\sigma_s$ (dB) & $4$  \\
\hline
$p_0$ (dBm) & $-70$ & $p_\mathrm{max}$ (dBm) & $\{\infty,5 \}$  \\
\hline
$i_0$ (dBm) & $[-120,-60 ]$ & $\epsilon$ & $[0,1]$  \\
\bottomrule
\end{tabular}
\end{table}
\else
\begin{table}
\renewcommand{\arraystretch}{1.0}
\caption{Simulation parameters.  \vspace{-0.5cm}}
\label{tab:Simulation Parameters}
\scriptsize
\centering
\begin{tabular}{ c c c c c c c c}
\toprule
Parameter & Value & Parameter & Value & Parameter & Value & Parameter & Value\\
\toprule
$\{\tau,\alpha\}$ (MHz) & $\{2.6,3.8\}$ & $n_\mathrm{thermal}$ (dBm/Hz) & $-174$
& $i_0$ (dBm) & $[-120,-60 ]$ & $\epsilon$ & $[0,1]$ \\
\hline
$b_w$ (MHz) & $9$ & $t^{(1)}/t^{(2)}$ (dB) & $\{9,0\}$
& $p_0$ (dBm) & $-70$ & $p_\mathrm{max}$ (dBm) & $\{\infty,5 \}$\\
\hline
$\{\lambda^{(1)},\lambda^{(2)}\}$ (points/km$^2$) & $\{2 ,4 \}$ & $\lambda_{\mathrm{MT}}$ (points/km$^2$) & $80 $
& $n_\mathrm{F}$ (dB) & $9$ & $\sigma_s$ (dB) & $4$ \\
\bottomrule
\end{tabular}
\end{table}
\fi

In this section, we validate the mathematical frameworks and findings derived in the previous sections with the aid of Monte Carlo simulations, as well as compare the IAM scheme against IAFPC and IUFPC schemes. The following setup compliant with LTE specifications is considered. The bandwidth is equal to $10$ MHz, which implies $b_w=9$ MHz by excluding the guard bands. The noise power spectral density is $n_\mathrm{thermal}=-174$ dBm/Hz and the noise figure of the receiver is $n_\mathrm{F}=9$ dB. Both GCA and SPLA criteria are studied, and the association weights are, unless otherwise stated, $t^{(1)}/t^{(2)}=9$ dB and $t^{(1)}/t^{(2)}=0$ dB, respectively.
The case study $t^{(1)}/t^{(2)}=9$ dB is related to a cell association based on the average DL received power criterion, where the first tier of BSs (macro) has transmit power equal to $46$ dBm and the second tier of BSs (small-cell) has transmit power equal to $37$ dBm, which agrees with \cite[Annex A: Simulation Model]{3gpp36872}.
Other simulation parameters are provided in Table \ref{tab:Simulation Parameters}. As far as Monte Carlo simulations are concerned, they are obtained by considering $10^{4}$ realizations of channels and network topologies. In all the figures, analytical and Monte Carlo simulation results are represented with solid lines and markers, respectively.

\subsection{Average Transmit Power, Probability of Being Active, Mean \& Variance of the Interference}
\label{sec:E(P), E(I) and var(I)}
In this section, we analyze the average transmit power of the MTs, the probability that the typical MT is active, which provides information on the system fairness, and the mean and variance of the interference.
%
%
Figures \ref{fig:comparativa_i0_iafpc_pmax_avPMT}-\ref{fig:comparativa_i0_iafpc_pmax_varI} confirm the conclusions drawn in \textbf{Remark \ref{rem:Exact analysis}}, i.e., the mathematical frameworks of average transmit power and probability of being active are exact while those of mean and variance of the interference are approximations that exploit \textbf{Assumption \ref{assumpt: Interference}}. 
{\color{black} Such an assumption considers that the position of interfering MTs can be modeled as a conditionally thinned (i.e., non-homogeneous) PPP. 
The difference between such a non-homogeneous PPP, and the actual point process, which is on the other hand not tractable, explains also the difference between simulation and analytical results in all the metrics that depend on the interference (SINR, SE, BR).}
The conclusions drawn in \textbf{Remark \ref{rem:IA vs Non IA}} are confirmed as well: the mean and variance of the interference decrease by decreasing $i_0$, which provide important advantages for implementing AMC schemes. 
%
\ifTwoColumns
    \begin{figure}[t]
    \centering
    \includegraphics[width=\@figSize]{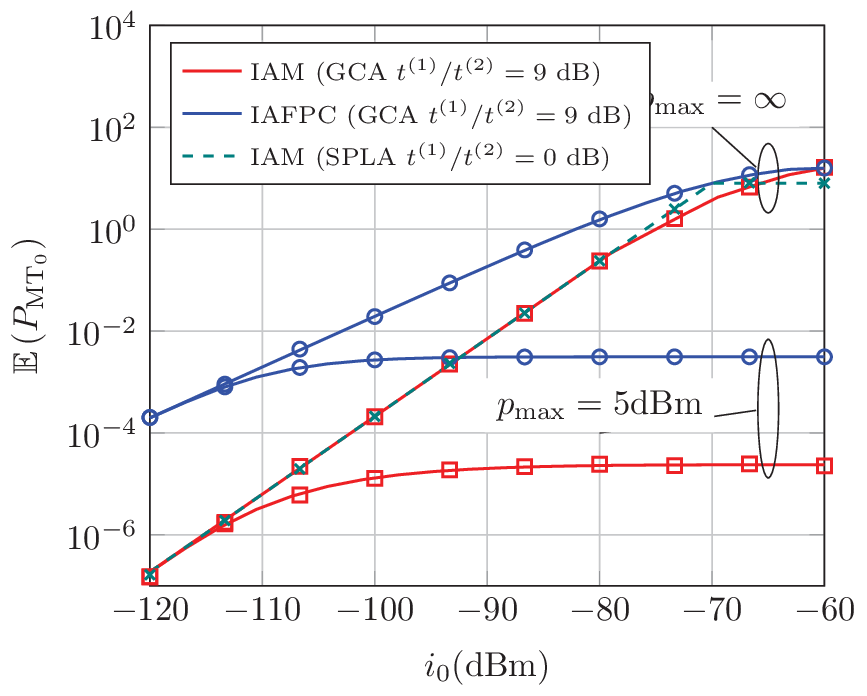}
    \caption{Average transmit power versus $i_0$ for IAM and IAFPC methods with $\epsilon=1$,
     $p_\mathrm{max} \rightarrow \infty$ and $p_\mathrm{max} = 5$ dBm.}
    \label{fig:comparativa_i0_iafpc_pmax_avPMT}
    \end{figure}

    \begin{figure}[t]
    \centering
    \includegraphics[width=\@figSize]{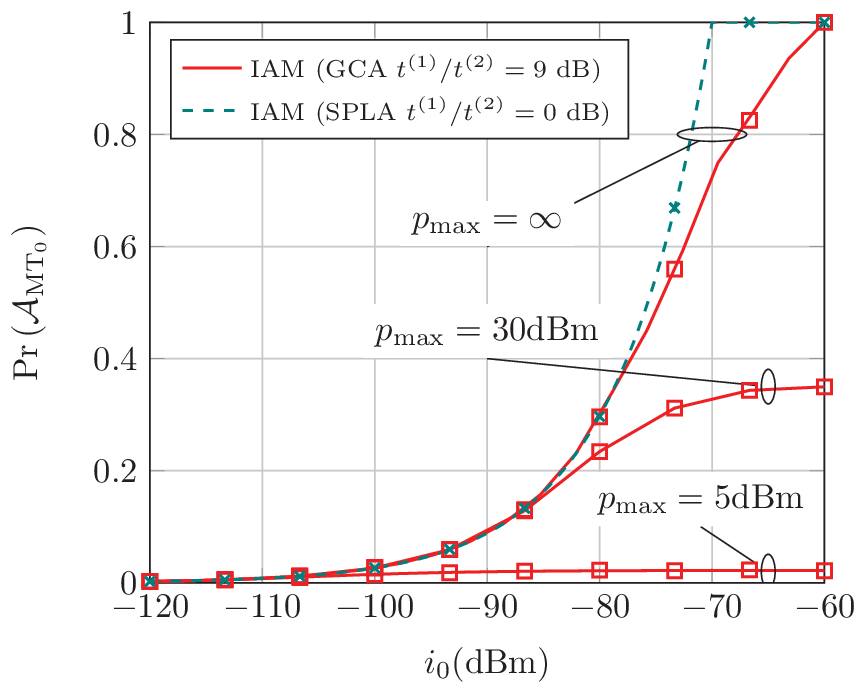}
    \caption{Probability of being active of the typical MT for IAM with $\epsilon=1$,
    $p_\mathrm{max} \rightarrow \infty$, $p_\mathrm{max} = 30$ dBm and
    $p_\mathrm{max} = 5$ dBm.}
    \label{fig:comparativa_i0_hia_pmax_PrA}
    \end{figure}

    \begin{figure}[t]
    \centering
    \includegraphics[width=\@figSize]{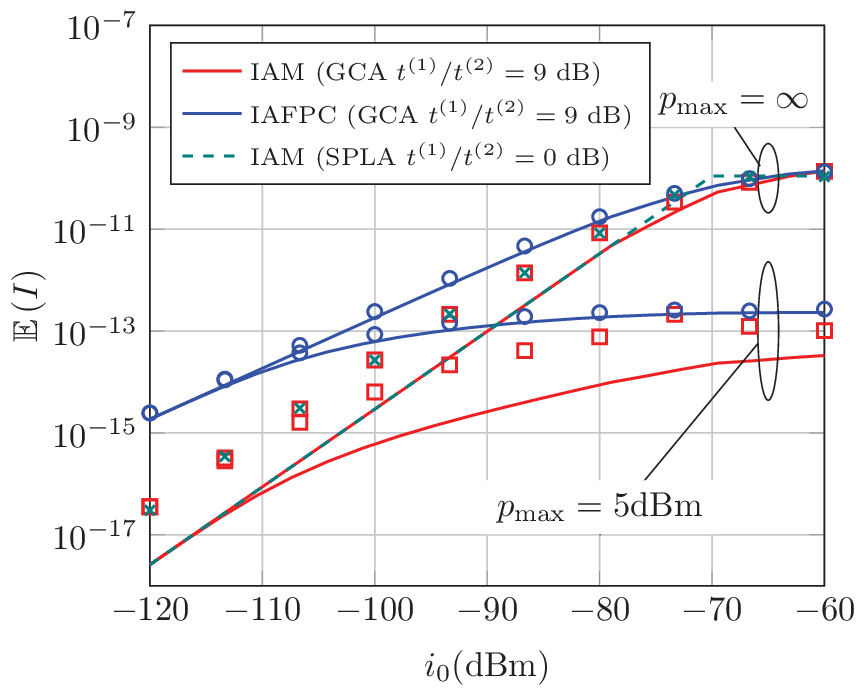}
    \caption{Mean of the interference versus $i_0$ for IAM and IAFPC methods with $\epsilon=1$,
     $p_\mathrm{max} \rightarrow \infty$ and $p_\mathrm{max} = 5$ dBm.}
    \label{fig:comparativa_i0_iafpc_pmax_avI}
    \end{figure}

\else
\begin{figure}[!tbp]
  \centering
  \begin{minipage}[b]{0.32\textwidth}
    \includegraphics[width=\textwidth]{Fig5}
    \caption{Average transmit power versus $i_0$ for IAM and IAFPC methods with $\epsilon=1$,
     $p_\mathrm{max} \rightarrow \infty$ and $p_\mathrm{max} = 5$ dBm.}
     \label{fig:comparativa_i0_iafpc_pmax_avPMT}
  \end{minipage}
  \hfill
  \begin{minipage}[b]{0.32\textwidth}
    \includegraphics[width=\textwidth]{Fig6}
    \caption{Probability of being active of the typical MT for IAM with $\epsilon=1$,
    $p_\mathrm{max} \rightarrow \infty$, $p_\mathrm{max} = 30$ dBm and
    $p_\mathrm{max} = 5$ dBm.}
    \label{fig:comparativa_i0_hia_pmax_PrA}
  \end{minipage}
  \hfill
  \begin{minipage}[b]{0.32\textwidth}
    \includegraphics[width=\textwidth]{Fig7}
    \caption{Mean of the interference versus $i_0$ for IAM and IAFPC methods with $\epsilon=1$,
     $p_\mathrm{max} \rightarrow \infty$ and $p_\mathrm{max} = 5$ dBm.}
    \label{fig:comparativa_i0_iafpc_pmax_avI}
  \end{minipage}
\end{figure}
\fi

In the figures, IAM and IAFPC are compared as well. We observe that IAM reduces the average transmit power and the mean and variance of the interference. 

Consider the SPLA criterion, which is illustrated with dashed lines in the figures. We observe that the findings in \textbf{Remark \ref{rem: Interference unaware regime}} are confirmed: the system is interference-aware and interference-unaware if $i_0<p_0$ and $i_0>p_0$, respectively. As expected, the crossing point occurs at $p_0=-70$ dBm based on the simulation parameters used. In addition, the scaling laws of average transmit power and average interference are in agreement with the findings in \textbf{Remark \ref{rem: power law for the average power}}, \textbf{Remark \ref{rem: power law for the avI}}.

All in all, the numerical illustrations reported in Figs. \ref{fig:comparativa_i0_iafpc_pmax_avPMT}-\ref{fig:comparativa_i0_iafpc_pmax_varI} confirm all the conclusions and performance trends discussed in the previous sections and highlight the advantages of IAM.

\ifTwoColumns
    \begin{figure}[t]
    \centering
    \includegraphics[width=\@figSize]{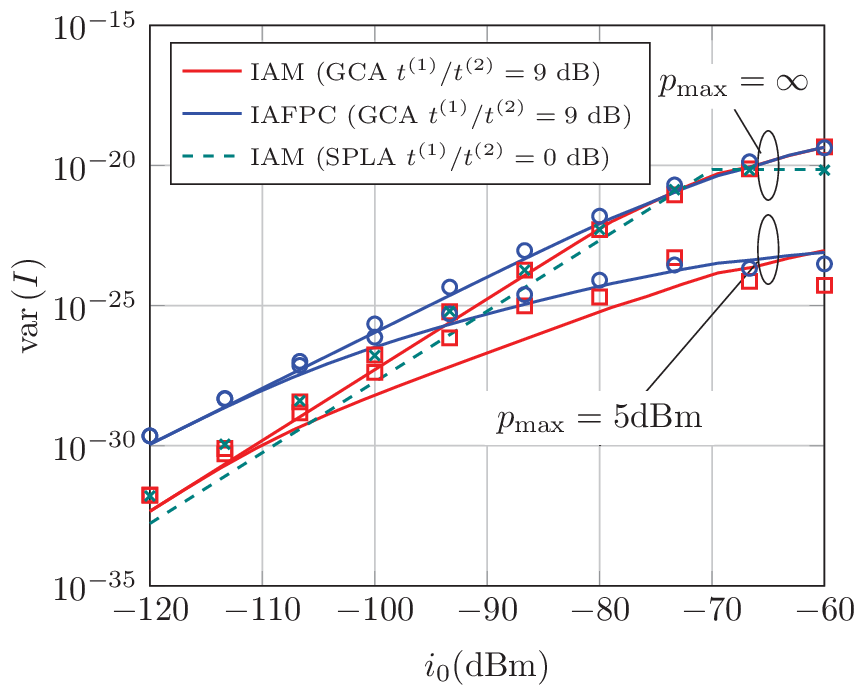}
    \caption{Variance of the interference versus $i_0$ for IAM and IAFPC schemes with $\epsilon=1$,
    $p_\mathrm{max} \rightarrow \infty$ and $p_\mathrm{max} = 5$ dBm.}
    \label{fig:comparativa_i0_iafpc_pmax_varI}	
    \end{figure}

    \begin{figure}[t]
    \centering
    \includegraphics[width=\@figSize]{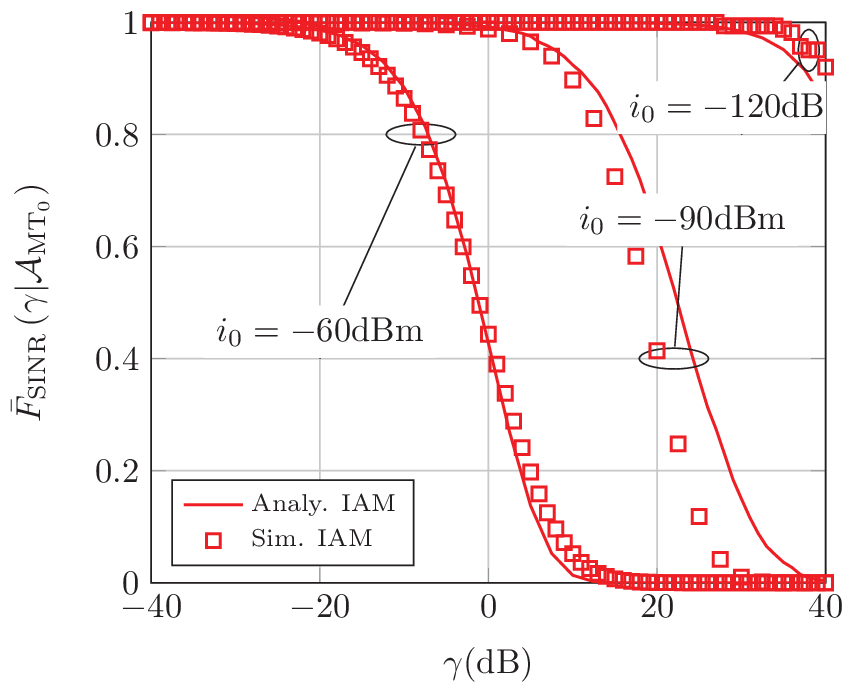}
    \caption{CCDF of the SINR for the typical MT conditioned on being active for IAM with $\epsilon=1$,
     $t^{(1)}/t^{(2)}=9$ dB, $p_{\max} \to \infty$ and $i_0=\{-120, -90, -60\}$ dBm.}
    \label{fig:comparativa_hia_active_epsilon1}
    \end{figure}

    \begin{figure}[t]
    \centering
    \includegraphics[width=\@figSize]{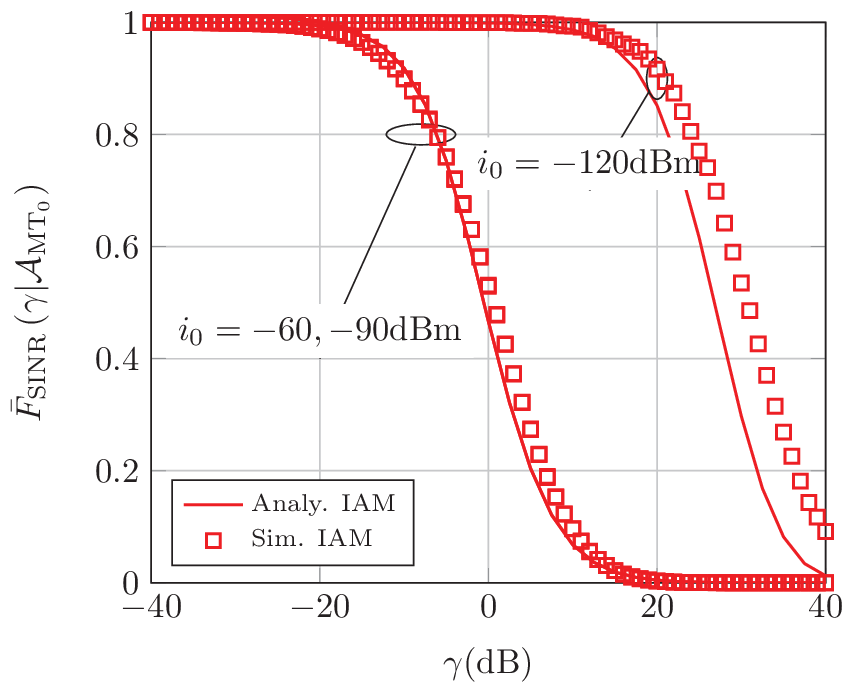}
    \caption{CCDF of the SINR for the typical MT conditioned on being active for IAM with $\epsilon=0.75$,
     $t^{(1)}/t^{(2)}=9$ dB, $p_{\max} \to \infty$ and $i_0=\{-120, -90, -60\}$ dBm.}
    \label{fig:comparativa_hia_active_epsilon0p75}	
    \end{figure}

\else
\begin{figure}[!tbp]
  \centering
  \begin{minipage}[b]{0.32\textwidth}
    \includegraphics[width=\textwidth]{Fig8}
    \caption{Variance of the interference versus $i_0$ for IAM and IAFPC schemes with $\epsilon=1$,
    $p_\mathrm{max} \rightarrow \infty$ and $p_\mathrm{max} = 5$ dBm.}
    \label{fig:comparativa_i0_iafpc_pmax_varI}	
  \end{minipage}
  \hfill
  \begin{minipage}[b]{0.32\textwidth}
    \includegraphics[width=\textwidth]{Fig12}
    \caption{CCDF of the SINR for the typical MT conditioned on being active for IAM with $\epsilon=1$,
     $t^{(1)}/t^{(2)}=9$ dB, $p_{\max} \to \infty$ and $i_0=\{-120, -90, -60\}$ dBm.}
    \label{fig:comparativa_hia_active_epsilon1}
  \end{minipage}
  \hfill
  \begin{minipage}[b]{0.32\textwidth}
    \includegraphics[width=\textwidth]{Fig14}
    \caption{CCDF of the SINR for the typical MT conditioned on being active for IAM with $\epsilon=0.75$,
     $t^{(1)}/t^{(2)}=9$ dB, $p_{\max} \to \infty$ and $i_0=\{-120, -90, -60\}$ dBm.}
    \label{fig:comparativa_hia_active_epsilon0p75}	
  \end{minipage}
\end{figure}
\fi

%
%

\subsection{Complementary Cumulative Distribution Function of the SINR}
\label{sec:ccdf of the SINR}
In this section, we analyze the coverage probability (CCDF of the SINR) of the active MTs. 
The results are illustrated in Figs. \ref{fig:comparativa_hia_active_epsilon1} and \ref{fig:comparativa_hia_active_epsilon0p75} for $\epsilon=1$ and $\epsilon=0.75$, respectively, and by assuming $p_\mathrm{max} \rightarrow \infty$.

In both figures, we observe a good agreement between mathematical frameworks and Monte Carlo simulations. 
In particular, the figures confirm, once again, that the coverage probability of IAM increases as $i_0$ decreases. In Fig. \ref{fig:comparativa_hia_active_epsilon1}, for example, almost all the active MTs have a SINR greater than $20$ dB if $i_0=-120$ dBm. This good SINR is obtained because IAM keeps under control the interference by muting the MTs that create more interference. Based on Fig. \ref{fig:comparativa_i0_hia_pmax_PrA}, in fact, we note that only a small fraction of the MTs are allowed to be active for $i_0=-120$ dBm. The active MTs, however, better exploit the available bandwidth. Similar conclusions can be drawn for $\epsilon=0.75$ shown in Fig. \ref{fig:comparativa_hia_active_epsilon0p75}. The main difference is that, in this latter figure, IAM provides almost the same coverage probability for $i_0=-60$ dBm and $i_0=-90$ dBm. The reason is that the MTs transmit with less power if $\epsilon=0.75$ and, thus, there is almost no difference between the two interference constraints. 
This brings to our attention that the design of the UL of HCNs requires to jointly optimize $i_0$, $p_0$, $p_\mathrm{max}$ and $\epsilon$, in order to identify the desired operating regime that fulfills the requirements in terms of system fairness and interference mitigation. The proposed mathematical frameworks can be used to this end.

\subsection{Spectral Efficiency and Binary Rate}
\label{sec:Binary rate and spectral efficiency}
In this section, the average SE and average BR are analyzed, as well as the IAFPC%
\footnote{It is worth noting that the average SE of the IAFPC scheme based on the Shannon formula was analyzed in \cite{Martin-Vega16}. In the present paper, we focus our attention of the more practical definition provided in Section VI.} and IAM schemes are compared against each other for several system setups.
%
%
\ifTwoColumns
    \begin{figure}[t]
    \centering
    \includegraphics[width=\@figSize]{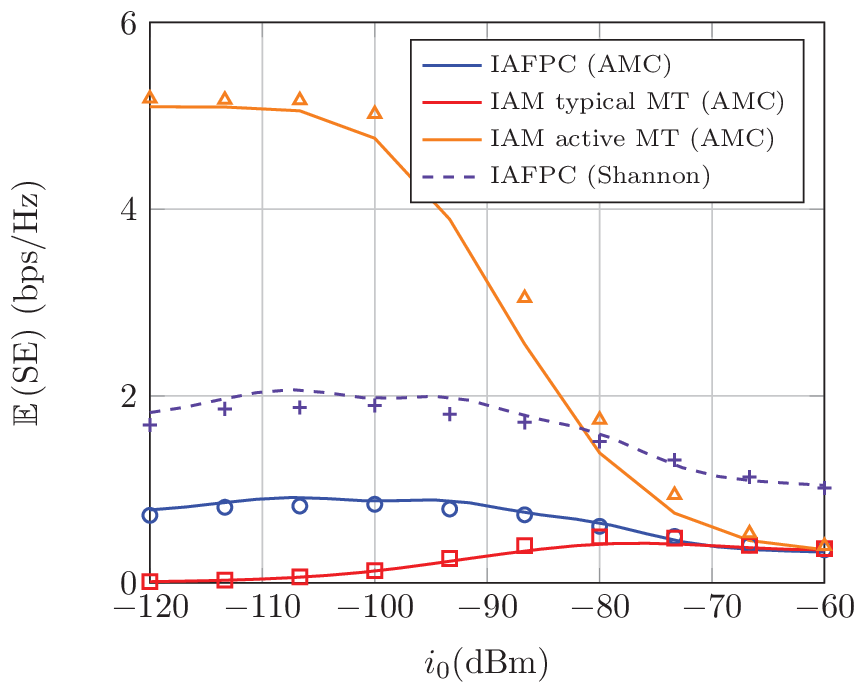}
    \caption{Comparison of average SE of IAFPC and IAM for $\epsilon=1$,
    $t^{(1)}/t^{(2)}=9$ dB and $p_{\mathrm{max}} \rightarrow \infty$.
    As for IAFPC, the average SE based on the Shannon formula is shown as well.}
    \label{fig:comparativa_i0_pmax_inf_avSE_AMC}
    \end{figure}

    \begin{figure}[t]
    \centering
    \includegraphics[width=\@figSize]{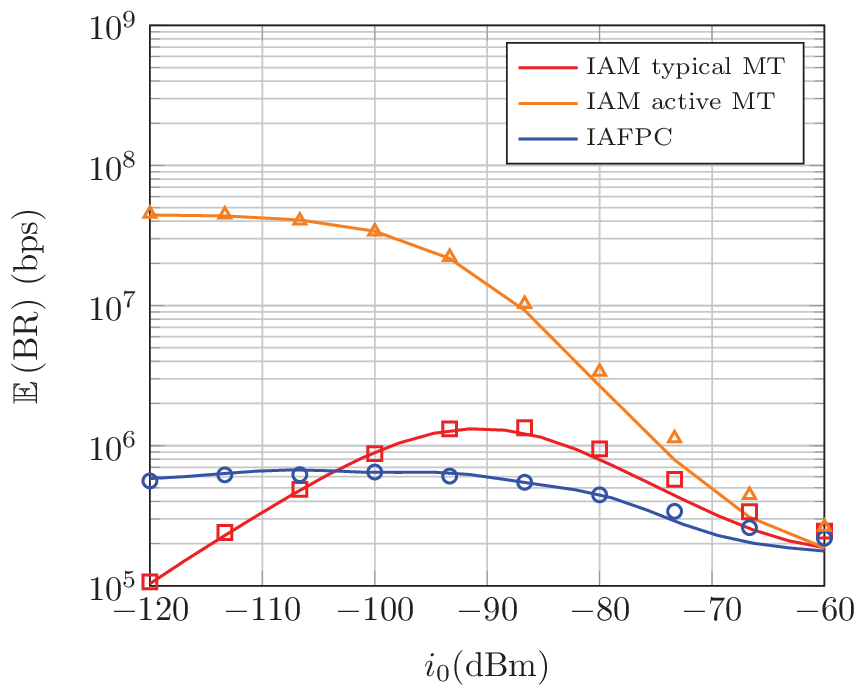}
    \caption{Comparison of average BR of IAFPC and IAM for $\epsilon=1$,
    $t^{(1)}/t^{(2)}=9$ dB and $p_{\mathrm{max}} \rightarrow \infty$.
    As for IAM, two cases are considered: the typical MT and the typical active MT.}
    \label{fig:comparativa_i0_pmax_inf_avBR_AMC}
    \end{figure}

    \begin{figure}[t]
    \centering
    \includegraphics[width=\@figSize]{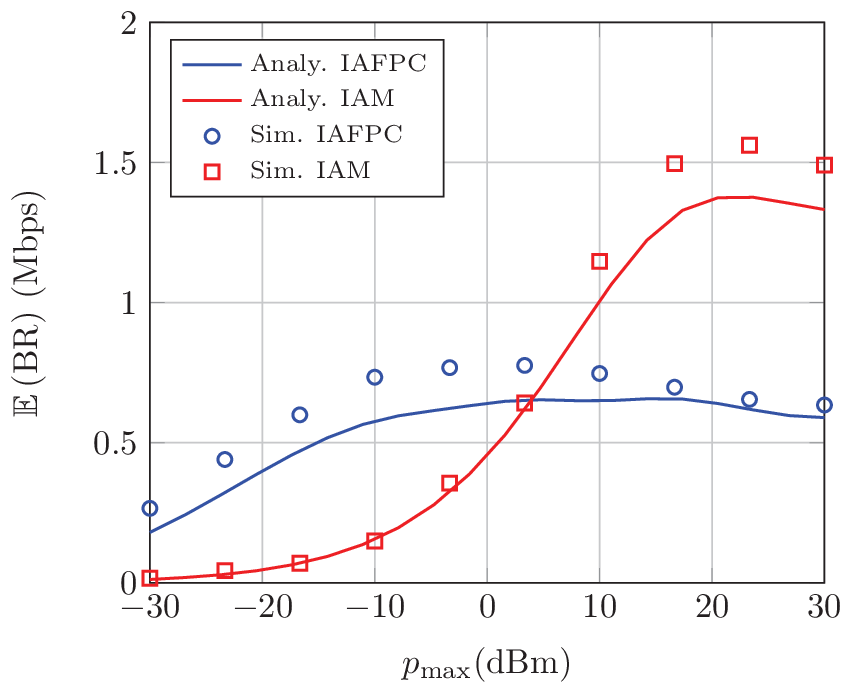}
    \caption{Comparison of average BR of IAFPC and IAM for $\epsilon=1$,
    $t^{(1)}/t^{(2)}=9$ dB and $i_{0} = -90$ dBm.}
    \label{fig:comparativa_pmax_i0_m90dBm_avBR_AMC}
    \end{figure}
\else
\begin{figure}[!tbp]
  \centering
  \begin{minipage}[b]{0.32\textwidth}
    \includegraphics[width=\textwidth]{Fig21}
    \caption{Comparison of average SE of IAFPC and IAM for $\epsilon=1$,
    $t^{(1)}/t^{(2)}=9$ dB and $p_{\mathrm{max}} \rightarrow \infty$.
    As for IAFPC, the average SE based on the Shannon formula is shown as well.}
    \label{fig:comparativa_i0_pmax_inf_avSE_AMC}
  \end{minipage}
  \hfill
  \begin{minipage}[b]{0.32\textwidth}
    \includegraphics[width=\textwidth]{Fig33}
    \caption{Comparison of average BR of IAFPC and IAM for $\epsilon=1$,
    $t^{(1)}/t^{(2)}=9$ dB and $p_{\mathrm{max}} \rightarrow \infty$.
    As for IAM, two cases are considered: the typical MT and the typical active MT.}
    \label{fig:comparativa_i0_pmax_inf_avBR_AMC}
  \end{minipage}
  \hfill
  \begin{minipage}[b]{0.32\textwidth}
    \includegraphics[width=\textwidth]{Fig20}
    \caption{Comparison of average BR of IAFPC and IAM for $\epsilon=1$,
    $t^{(1)}/t^{(2)}=9$ dB and $i_{0} = -90$ dBm.}
    \label{fig:comparativa_pmax_i0_m90dBm_avBR_AMC}
  \end{minipage}
\end{figure}
\fi

In Fig. \ref{fig:comparativa_i0_pmax_inf_avSE_AMC}, the average SE of IAFPC and IAM schemes is analyzed and three conclusions can be drawn. By comparing the average SE of the IAPFC scheme based on the definition given in Section VI (i.e., for AMC schemes) and on the Shannon formula, we note, as expected, that the latter formula provides optimistic estimates of the average SE. By comparing the average SE of the IAM scheme for typical (active and muted) MTs and active (only) MTs, we note a different performance trend as a function of $i_0$. 
As for the active MTs, the average SE increases as $i_0$ decreases. As for the typical MTs, on the other hand, the average SE decreases as $i_0$ decreases. 
This is because the lower $i_0$ is the more MTs are turned off, which on average, contributes to reduce the SE of the typical MT. 
By comparing the average SE of IAPFC and IAM schemes, we evince that IAFPC outperforms IAM for all relevant values of the maximum interference constraint $i_0$, since all the MTs are active under the IAFPC scheme. The average SE of the active MTs under the IAM scheme is, however, much better than that of the IAFPC scheme, since the other-cell interference is reduced.

As discussed in Section I, however, the SE does not provide information on the amount of bandwidth that the scheduler allocates to each active MT. 

This trade-off is captured by the average BR, which is shown Fig. \ref{fig:comparativa_i0_pmax_inf_avBR_AMC}. As far as the average BR is concerned, in particular, we note that IAFPC and IAM schemes provide opposite trends compared to those evinced from the analysis of the average SE of the typical MT. More precisely, IAM provides a better average BR than IAFPC and there exists an optimal value of $i_0$ that maximizes it. This optimal value of $i_0$ emerges if the typical MT is considered, i.e., the MT may be either active or inactive. The figure, however, shows the average BR achieved only by the active MTs as well. In this case, we note that the MTs that satisfy both power and interference constraints achieve a very high throughput due to the reduce level of interference that is generated in this case. In a nutshell, IAM outperforms IAFPC in terms of average BR because the available bandwidth is shared among fewer MTs (only those active), which results in a higher throughput for each of them. Even though some MTs may be turned off in IAM, this may not necessarily be considered as a downside from the user's perspective: in high-mobility scenarios, for example, some MTs may prefer to be muted for some periods of time if their reward is achieving a higher throughput once they are allowed to transmit.
In Fig. \ref{fig:comparativa_pmax_i0_m90dBm_avBR_AMC}, we study the impact of $p_\mathrm{max}$ for a given maximum interference constraint $i_0$. We observe that $p_\mathrm{max}$ plays a critical role as well and highly affects the average BR. This figure confirms, once again, that both $p_\mathrm{max}$ and $i_0$ constraints need to be appropriately optimized in order for IAM to outperform IAFPC. 
%
In Fig. \ref{fig:avBR_vs_varI}, we illustrate the potential of IAM of reducing the variance of the interference compared with IUM, while still guaranteeing the same average BR. As discussed in the previous sections, this is beneficial for implementing AMC schemes. The figure shows a four-order magnitude reduction of the variance of the interference for the considered setup of parameters.

\ifTwoColumns
    \begin{figure}[t]
    \centering
    \includegraphics[width=\@figSize]{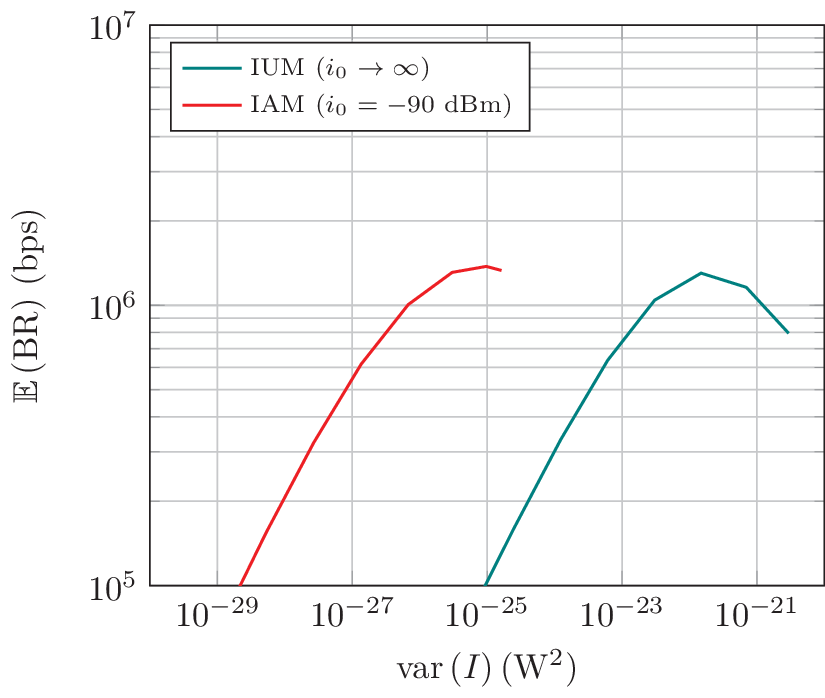}
    \caption{Average BR as a function of the variance of the interference for IAM
    ($i_{0} = -90$ dBm) and IUM ($i_{0} \to \infty$) schemes with $p_{\max} = 5$ dBm and
    $t^{(1)}/t^{(2)}=9$ dB.}
    \label{fig:avBR_vs_varI}
    \end{figure}

    \begin{figure}[t]
    \centering
    \includegraphics[width=\@figSize]{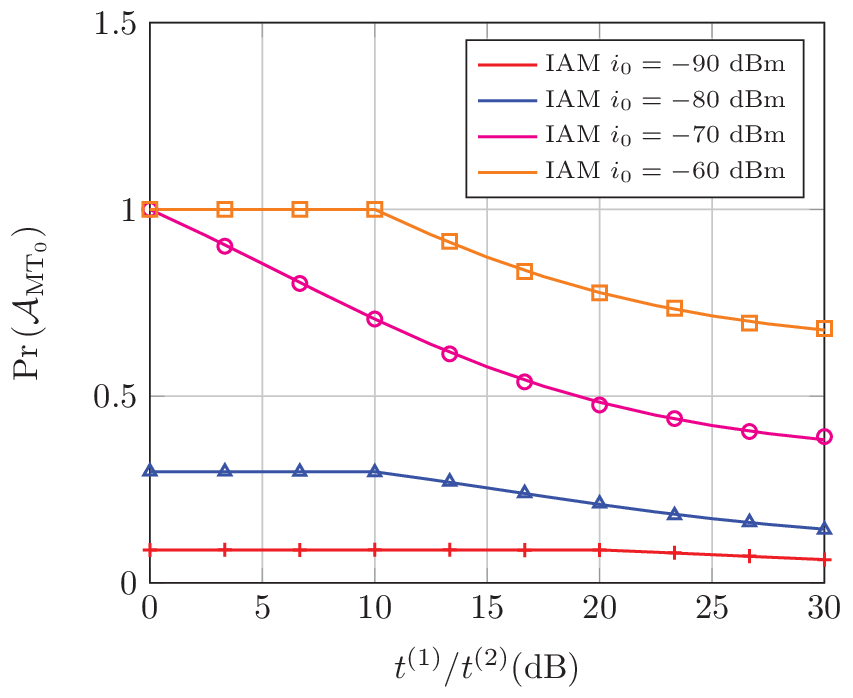}
    \caption{Probability the typical MT is active as a function of
    $t^{(1)}/t^{(2)}$ for IUFPC ($i_{0} \to \infty$) and IAM with $i_0=\{-90,-80,-70,-60\}$ dBm. $p_{\max} \to \infty$ for both schemes.}
    \label{fig:PrA_vs_t1t2}
    \end{figure}

    \begin{figure}[t]
    \centering
    \includegraphics[width=\@figSize]{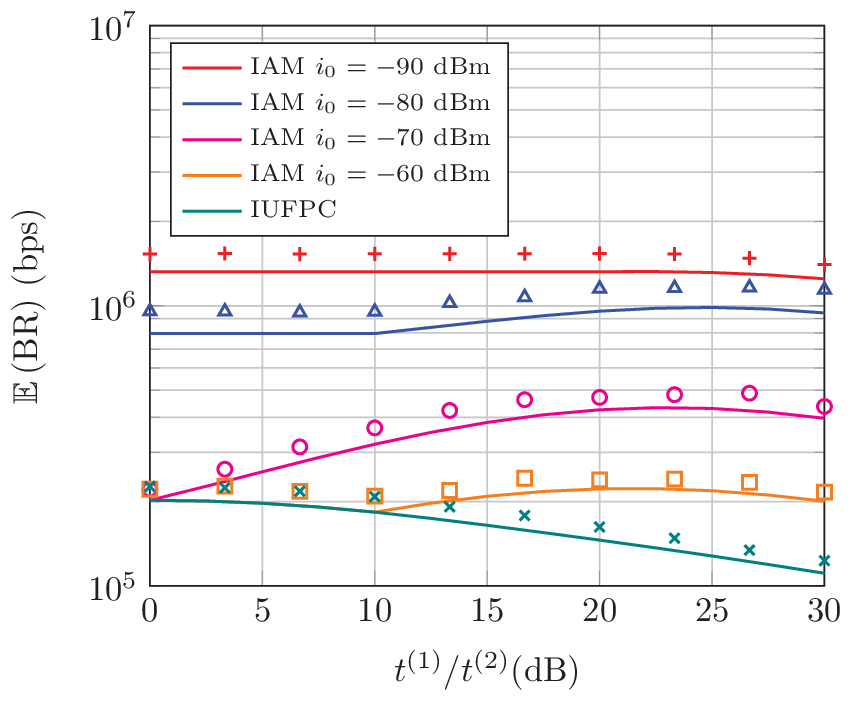}
    \caption{Average BR as a function of $t^{(1)}/t^{(2)}$ for IUFPC ($i_{0} \to \infty$) and IAM with $i_0=\{-90,-80,-70,-60\}$ dBm. $p_{\max} \to \infty$ for both schemes.}
    \label{fig:E_BR_vs_t1t2}
    \end{figure}
\else
\begin{figure}[!tbp]
  \centering
  \begin{minipage}[b]{0.32\textwidth}
    \includegraphics[width=\textwidth]{Fig27}
    \caption{Average BR as a function of the variance of the interference for IAM
    ($i_{0} = -90$ dBm) and IUM ($i_{0} \to \infty$) schemes with $p_{\max} = 5$ dBm and
    $t^{(1)}/t^{(2)}=9$ dB.}
    \label{fig:avBR_vs_varI}
  \end{minipage}
  \hfill
  \begin{minipage}[b]{0.32\textwidth}
    \includegraphics[width=\textwidth]{Fig30}
    \caption{Probability the typical MT is active as a function of
    $t^{(1)}/t^{(2)}$ for IUFPC ($i_{0} \to \infty$) and IAM with $i_0=\{-90,-80,-70,-60\}$ dBm. $p_{\max} \to \infty$ for both schemes.}
    \label{fig:PrA_vs_t1t2}
  \end{minipage}
  \hfill
  \begin{minipage}[b]{0.32\textwidth}
    \includegraphics[width=\textwidth]{Fig32}
    \caption{Average BR as a function of $t^{(1)}/t^{(2)}$ for IUFPC ($i_{0} \to \infty$) and IAM with $i_0=\{-90,-80,-70,-60\}$ dBm. $p_{\max} \to \infty$ for both schemes.}
    \label{fig:E_BR_vs_t1t2}
  \end{minipage}
\end{figure}
\fi

\subsection{Impact of the Association Weights: On UL-DL Decoupling}
\label{sec:Dependence with Association}
%
{\color{black} As shown in \cite{Singh15} and \cite{DiRenzo16b}, optimizing the performance of HCNs for DL transmission does not necessarily results in optimizing their performance in the UL. Based on the GCA criterion studied in Section IV, this implies that different cell association weights (i.e., a different ratio $t^{(1)}/t^{(2)}$ for two-tier HCNs) may be needed in the DL and in the UL. However, this approach, which is referred to as UL-DL decoupling, introduces additional implementation challenges, which require the modification of the existing network architecture and control plane.} 

In this section, motivated by these considerations, we analyze and compare IAM, IAFPC and IUFPC schemes as a function of $t^{(1)}/t^{(2)}$. The setup $t^{(1)}/t^{(2)}=0$ dB corresponds to the SPLA criterion. Some numerical illustrations are provided in Figs. \ref{fig:PrA_vs_t1t2} and \ref{fig:E_BR_vs_t1t2}, where the probability that the typical MT is active and the average BR are shown, respectively.

In Fig. \ref{fig:E_BR_vs_t1t2}, in particular, we compare the average BR of IUFPC and IAM schemes. The figure highlights important differences between these two interference management schemes for improving the performance of the UL of HCNs. First of all, we note that the average BR of the IUFPC scheme decreases as the ratio $t^{(1)}/t^{(2)}$ increases. More specifically, the best average BR is obtained if the SPLA criterion is used, which is in agreement with previously published papers \cite{Singh14b}. This originates from the fact that the larger $t^{(1)}/t^{(2)}$ is, the more MTs are associated with more distance BSs, which, due to the use of power control, results in increasing the interference in the UL. The performance trend is, on the other hand, different if the IAM scheme is used. In this case, there are several values of $i_0$ that provide a better average BR compared with IUFPC. In addition, the average BR increases as $t^{(1)}/t^{(2)}$ increases, since the excess interference that is generated under the IUFPC scheme is now kept under control by imposing the maximum interference constraint $i_0$. As observed in previous figures, Fig. \ref{fig:PrA_vs_t1t2} confirms that this gain is obtained since more MTs are turned off. 
%

Figures \ref{fig:PrA_vs_t1t2} and \ref{fig:E_BR_vs_t1t2} confirm the findings in \textbf{Remark \ref{rem:Operation regimes of IAM under GA}} and, in particular, the existence of an operating regime where the performance of IAM is independent of the association weights. Let us consider, for example, the setup for $i_0=-60$ dBm. In this case, $i_0 > p_0$ and hence, according to \textbf{Remark \ref{rem:Operation regimes of IAM under GA}}, the system is interference-unaware if $t^{(1)}/t^{(2)} \in [-10,+10]$ dB. Figure \ref{fig:E_BR_vs_t1t2}, more specifically, confirms that IAM is interference-unaware since it provides the same average BR as IUFPC for $t^{(1)}/t^{(2)} \in [-10,+10]$ dB\footnote{Only positive values (in dB) of the association weights $t^{(1)}/t^{(2)}$ are shown in Figs. \ref{fig:PrA_vs_t1t2} and \ref{fig:E_BR_vs_t1t2}.}. Similar conclusions can be drawn for other values of $i_0$, where different operating regimes can be identified as predicted in \textbf{Remark \ref{rem:Operation regimes of IAM under GA}}. If $i_0=-90$ dBm, in particular, then $i_0 < p_0$ and the system is independent of the cell association criterion for $t^{(1)}/t^{(2)} \in [-20,+20]$, which is confirmed in Figs. \ref{fig:PrA_vs_t1t2} and \ref{fig:E_BR_vs_t1t2}.
It is worth mentioning that the values of $t^{(1)}/t^{(2)}$ for which the considered system model is cell association independent are usually adopted in practical engineering applications. In particular, the authors of \cite{Singh15,Singh14} have shown that the optimal cell association ratio that optimizes the DL is usually less than $20$ dB. This is in agreement and compatible with the findings in Figs. \ref{fig:PrA_vs_t1t2} and \ref{fig:E_BR_vs_t1t2}.
%
%
%
{\color{black} In view of the numerical results and theoretical insights derived in this work, it is possible to state the following arguments in favor of such Interference-Aware Muting procedure:
\begin{enumerate}
\item Taking into account the periods where the typical MT is active and those where it is muted, the average BR is increased with IAM compared to IAFPC and IUFPC. 
%
\item Thanks to mobility and shadowing, MTs are only muted for a given period of time. 
\item Since muted MTs do not transmit, its average transmitted power is reduced compared to IAFPC and IUFPC. This has been studied with Fig. \ref{fig:comparativa_i0_iafpc_pmax_avPMT}. 
\item With IAM, there is a regime where the UL performance is independent of cell association, which eases joint design of UL and DL transmissions as it have been discussed above. 
\item It is straightforward to extend the developed model to consider other approaches where IAM take place only in a portion of the resources (e.g., bandwidth), leading to a higher system fairness. 
Let us consider, for instance, that the system bandwidth is split in two orthogonal sub-bands, e.g., 
$\mathcal{B}_A$ and $\mathcal{B}_M$.  $\mathcal{B}_A$ is restricted to active MTs, i.e., those whose interference is smaller than $i_0$, whereas the other sub-band is used to the rest of MTs\footnote{\color{black} Although it is possible to study more general frameworks for IAM, we have focused on the case considered in this paper due to space limitations and to study deeper the effect of MTs muting.}. 
%
\end{enumerate}
}
\section{Conclusion}
\label{sec:Conclusion}
In this paper, we have studied the performance of IAM: an interference management scheme for enhancing the throughput of HCNs. With the aid of stochastic geometry, we have developed a general mathematical approach for analyzing and optimizing its performance as a function of several system parameters. Simplified and insightful expressions of the throughput and other relevant performance indicators have been proposed for simplified but relevant case studies, such as in the presence of channel inversion power control and equal cell association weights. Among the many performance trends that have been identified, we have proved that, while optimizing the DL and the UL of HCNs necessitates, in general, to use different cell association weights, there exist some operating regimes where IAM is cell association independent. This is shown to simplify the design of HCNs, since no changes in their control plane is needed compared with conventional cellular networks. The mathematical frameworks and findings have been substantiated against Monte Carlo simulations, as well as the achievable performance of IAM has been compared against other IAFPC and IUFPC schemes, by highlighting several important trade-offs in terms of system fairness and system throughput.


\appendices

\section{Proof of Proposition \ref{prop:IAM Xjm}}
\label{app:Proof of Proposition IAM Xjm}
The probability that a MT is active is by definition as follows:
\begin{equation}
\label{eq:PrA}
\Pr \left( {{{\cal A}_{{\rm{M}}{{\rm{T}}_0}}}} \right) =\sum\nolimits_{ {j}  \in {\cal K}}  \sum\nolimits_{ {m}  \in {\cal K}} {\Pr \left( {{\cal X}_{{\rm{M}}{{\rm{T}}_0}}^{(j,m)},{{\cal A}_{{\rm{M}}{{\rm{T}}_0}}}} \right)}
\end{equation}
\noindent where $\Pr \left( {{\cal X}_{{\rm{M}}{{\rm{T}}_0}}^{(j,m)},{{\cal A}_{{\rm{M}}{{\rm{T}}_0}}}} \right)$ is the probability that the MT is active, is associated to tier $j$ and that the most interfered BS belongs to tier $m$. If $j \neq m$, it can be written as follows:
\ifTwoColumns
\begin{align}
\label{eq:IAM Xjm 1}
    \Pr & \left( \mathcal{X}^{(j,m)}_{\mathrm{MT}_0},
    \mathcal{A}_{\mathrm{MT}_0} \right)
    \overset{\mathrm{(a)}}{=} \mathbb{E}_{R^{(m)}_{\mathrm{MT}_0,(1)}}
    \mathbb{E}_{R^{(j)}_{\mathrm{MT}_0,(1)}, R^{(j)}_{\mathrm{MT}_0,(2)}}
    \Bigg[ \nonumber \\
    & \mathbf{1} \left(R^{(m)}_{\mathrm{MT}_0,(1)} >
    \left( \frac{t^{(m)}}{t^{(j)}} \right)^{\frac{1}{\alpha}}
    R^{(j)}_{\mathrm{MT}_0,(1)} \right) \times \nonumber \\
    & \mathbf{1} \left( R^{(m)}_{\mathrm{MT}_0,(1)} < R^{(j)}_{\mathrm{MT}_0,(2)} \right)
    \times \nonumber \\
    & \mathbf{1} \left( R^{(j)}_{\mathrm{MT}_0,(1)} < \frac{1}{\tau}
    \left( \frac{p_\mathrm{max}}{p_0}\right)^{\frac{1}{\alpha \epsilon}} \right) \times
    \nonumber \\
    & \mathbf{1} \left( R^{(m)}_{\mathrm{MT}_0,(1)} >
    \left( \frac{p_0}{i_0} \right)^{\frac{1}{\alpha}}
    \frac{\left( \tau R^{(j)}_{\mathrm{MT}_0,(1)} \right)^\epsilon}{ \tau } \right)
     \Bigg]
\end{align}
\else
\begin{align}
\label{eq:IAM Xjm 1}
    \Pr & \left( \mathcal{X}^{(j,m)}_{\mathrm{MT}_0},
    \mathcal{A}_{\mathrm{MT}_0} \right)
    \overset{\mathrm{(a)}}{=} \mathbb{E}_{R^{(m)}_{\mathrm{MT}_0,(1)}}
    \mathbb{E}_{R^{(j)}_{\mathrm{MT}_0,(1)}, R^{(j)}_{\mathrm{MT}_0,(2)}}
    \Bigg[
    \nonumber \\
    & \mathbf{1} \left(R^{(m)}_{\mathrm{MT}_0,(1)} >
    \left( \frac{t^{(m)}}{t^{(j)}} \right)^{\frac{1}{\alpha}}
    R^{(j)}_{\mathrm{MT}_0,(1)} \right) \times
    \mathbf{1} \left( R^{(m)}_{\mathrm{MT}_0,(1)} < R^{(j)}_{\mathrm{MT}_0,(2)} \right)
    \times
    \nonumber \\
    & \mathbf{1} \left( R^{(j)}_{\mathrm{MT}_0,(1)} < \frac{1}{\tau}
    \left( \frac{p_\mathrm{max}}{p_0}\right)^{\frac{1}{\alpha \epsilon}} \right) \times
     \mathbf{1} \left( R^{(m)}_{\mathrm{MT}_0,(1)} >
    \left( \frac{p_0}{i_0} \right)^{\frac{1}{\alpha}}
    \frac{\left( \tau R^{(j)}_{\mathrm{MT}_0,(1)} \right)^\epsilon}{ \tau } \right)
     \Bigg]
\end{align}
\fi
\noindent where (a) is obtained by definition of expectation formulated with the aid of indicator functions.

To compute this expectation, the PDF of the distance of the nearest BS and of the joint PDF of the distances of the nearest and second nearest BSs are needed. By definition of PPP, they are equal to $f_{R^{(j)}_{\mathrm{MT}_0,(1)}}(r) = 2 \pi \lambda^{(j)} r \mathrm{e}^{-\pi \lambda^{(j)} r^2}$ and $f_{R^{(j)}_{\mathrm{MT}_0,(1)}, R^{(j)}_{\mathrm{MT}_0,(2)}}(r_1,r_2) = 4 \left( \pi \lambda^{(j)} \right)^2 r_1 r_2  \mathrm{e}^{-\pi \lambda^{(j)} r_2^2}$ for $r_1 < r_2$, respectively, \cite{Martin-Vega14}. With the aid of these PDFs, we obtain:
%
%
%
\ifTwoColumns
\begin{align}
\label{eq:PrXjm part 2}
& \Pr  \left( \mathcal{X}^{(j,m)}_{\mathrm{MT}_0},
    \mathcal{A}_{\mathrm{MT}_0} \right)
    = \int\limits_{r^{(j)}_1=0}^{\frac{1}{\tau}
    \left( \frac{p_\mathrm{max}}{p_0} \right)^{\frac{1}{\alpha \epsilon}}}
    \times \nonumber \\
    &\int \limits^{\infty}_{r^{(m)}_1=\mathrm{max}
    \left( \left( \frac{p_0}{i_0}\right)^{\frac{1}{\alpha}}
    \frac{\left(\tau r^{(j)}_{1} \right)^\epsilon}{\tau},
    \left(\frac{t^{(m)}}{t^{(j)}} \right)^\frac{1}{\alpha} r^{(j)}_1 \right)}
    f_{R^{(m)}_{\mathrm{MT}_0,(1)}} \left( r^{(m)}_1 \right)
    \nonumber \\
    & \underbrace{
    \int\limits^{\infty}_{r^{(j)}_2 = \mathrm{max} \left( r^{(j)}_1, r^{(m)}_1\right)} \!\!
    f_{R^{(j)}_{\mathrm{MT}_0,(1)}, R^{(j)}_{\mathrm{MT}_0,(2)}}(r^{(j)}_1,r^{(j)}_2)
     \mathrm{d}r^{(j)}_2}
    _{\varphi \left( r^{(j)}_1, r^{(m)}_1 \right) }
    \mathrm{d}r^{(m)}_1 \mathrm{d}r^{(j)}_1
\end{align}
\else
\begin{align}
\label{eq:PrXjm part 2}
\Pr & \left( \mathcal{X}^{(j,m)}_{\mathrm{MT}_0},
    \mathcal{A}_{\mathrm{MT}_0} \right)
    = \int\limits_{r^{(j)}_1=0}^{\frac{1}{\tau}
    \left( \frac{p_\mathrm{max}}{p_0} \right)^{\frac{1}{\alpha \epsilon}}}
    \int \limits^{\infty}_{r^{(m)}_1=\mathrm{max}
    \left( \left( \frac{p_0}{i_0}\right)^{\frac{1}{\alpha}}
    \frac{\left(\tau r^{(j)}_{1} \right)^\epsilon}{\tau},
    \left(\frac{t^{(m)}}{t^{(j)}} \right)^\frac{1}{\alpha} r^{(j)}_1 \right)}
    f_{R^{(m)}_{\mathrm{MT}_0,(1)}} \left( r^{(m)}_1 \right)
    \nonumber \\
    &
    \int\limits^{\infty}_{r^{(j)}_2 = \mathrm{max} \left( r^{(j)}_1, r^{(m)}_1\right)} \!\!
    f_{R^{(j)}_{\mathrm{MT}_0,(1)}, R^{(j)}_{\mathrm{MT}_0,(2)}}(r^{(j)}_1,r^{(j)}_2)
     \mathrm{d}r^{(j)}_2
    \mathrm{d}r^{(m)}_1 \mathrm{d}r^{(j)}_1
\end{align}
\fi

The computation of the two-fold integral leads to the function $\nu^{(j)}(v)$ that is provided in (\ref{eq:nu}).

The case $j=m$ can be solved by using an approach similar to the previous case. The final result corresponds to the function $\eta^{(j)}(v)$ available in (\ref{eq:kappa}).

By combining both cases $j=m$ and $j \neq m$, $\Pr \left( \mathcal{X}_{\mathrm{MT}_{0}}^{(j,m)},\mathcal{A}_{\mathrm{MT}_{0}}
    \right)$ can be written as follows:
\ifTwoColumns
\begin{align}
    & \Pr  \left( \mathcal{X}^{(j,m)}_{\mathrm{MT}_0},  \mathcal{A}_{\mathrm{MT}_0} \right) =
    \int_{0}^{\frac{1}{\tau}\left( \frac{p_\mathrm{max}}{p_0} \right)^\frac{1}{\alpha \epsilon}}
    \nonumber \\ & \quad \times
    \left( \mathbf{1} \left( j \neq m \right) \nu^{(j,m)} (v) +  \right.
    \left. \mathbf{1} \left( j = m \right) \eta^{(j)} (v) \right) \mathrm{d}v
\end{align}
\else
\begin{align}
    \Pr  \left( \mathcal{X}^{(j,m)}_{\mathrm{MT}_0},  \mathcal{A}_{\mathrm{MT}_0} \right) =
    \int_{0}^{\frac{1}{\tau} \left( \frac{p_\mathrm{max}}{p_0} \right)^\frac{1}{\alpha \epsilon}}
    \left( \mathbf{1} \left( j \neq m \right) \nu^{(j,m)} (v) +  \right.
    \left. \mathbf{1} \left( j = m \right) \eta^{(j)} (v) \right) \mathrm{d}v
\end{align}
\fi

The proof follows by computing the summation over $m\in\mathcal{K}$ in (\ref{eq:PrA}).

\section{Proof of Lemma \ref{prop:IAM LI}}
\label{app:IAM LI}
The Laplace transform of the interference can be expressed as follows:
\ifTwoColumns
\begin{align}
\label{eq:IAM LI(s|Xj) Part 1}
    \mathcal{L}_{I} & \left( s|\mathcal{X}_{\mathrm{MT}_{0}}^{(j)} \right)
    = \mathbb{E}_{I}\left[ \mathrm{e}^{-sI}|\mathcal{X}_{\mathrm{MT}_{0}}^{(j)} \right]
    = \prod\limits_{k\in \mathcal{K}}{\mathbb{E}_{\Psi^{(k)}}}
    \nonumber \\
    & \prod\limits_{\mathrm{MT}_{i}\in \Psi ^{(k)}}{ \mathbb{E}_{R_{\mathrm{MT}_{i}}}}
    \left[
    \mathbb{E}_{H_{\mathrm{MT}_{i}}}\exp \left( -sH_{\mathrm{MT}_{i}}\left( \tau
    D_{\mathrm{MT}_{i}} \right)^{-\alpha }
    \right. \right.
    \nonumber \\
    & \left( \tau R_{\mathrm{MT}_{i}} \right)^{\alpha \epsilon }p_{0}
    \left. \left. \mathbf{1}\left( \mathcal{O}_{\mathrm{MT}_{i}}^{(j,k)}
    \right)\mathbf{1}\left( \mathcal{Z}_{\mathrm{MT}_{i}} \right) \right)
    |\mathcal{X}_{\mathrm{MT}_{i}}^{(k)},\mathcal{A}_{\mathrm{MT}_{i}} \right]
\end{align}
\else
\begin{align}
\label{eq:IAM LI(s|Xj) Part 1}
    \hspace{-0.5cm} \mathcal{L}_{I} & \left( s|\mathcal{X}_{\mathrm{MT}_{0}}^{(j)} \right)
    = \mathbb{E}_{I}\left[ \mathrm{e}^{-sI}|\mathcal{X}_{\mathrm{MT}_{0}}^{(j)} \right]
    = \prod\limits_{k\in \mathcal{K}}{\mathbb{E}_{\Psi^{(k)}}}
    \nonumber \\
    & \prod\limits_{\mathrm{MT}_{i}\in \Psi ^{(k)}}{ \mathbb{E}_{R_{\mathrm{MT}_{i}}}}
    \left[
    \mathbb{E}_{H_{\mathrm{MT}_{i}}}\exp \left( -sH_{\mathrm{MT}_{i}}\left( \tau
    D_{\mathrm{MT}_{i}} \right)^{-\alpha }
    \right. \right.
     \left( \tau R_{\mathrm{MT}_{i}} \right)^{\alpha \epsilon }p_{0}
    \left. \left. \mathbf{1}\left( \mathcal{O}_{\mathrm{MT}_{i}}^{(j,k)}
    \right)\mathbf{1}\left( \mathcal{Z}_{\mathrm{MT}_{i}} \right) \right)
    |\mathcal{X}_{\mathrm{MT}_{i}}^{(k)},\mathcal{A}_{\mathrm{MT}_{i}} \right]
\end{align}
\fi

By applying the Probability Generating Functional (PGF) theorem in \cite{Haenggi13} and computing the expectation with respect to the channel fading, $\mathcal{L}_{I}  \left( s|\mathcal{X}_{\mathrm{MT}_{0}}^{(j)} \right)$ is as follows:
\ifTwoColumns
\begin{align}
\label{eq:IAM LI(s|Xj) Part 1a}
    &
    \exp \left( -\sum\limits_{k\in \mathcal{K}}{2\pi \lambda ^{(k)}}
    \int\limits_{\rho=0}^{\infty }{\mathbb{E}_{R_{\mathrm{MT}_{i}}}}
    \left[
    \mathbf{1}\left( \mathcal{O}_{\mathrm{MT}_{i}}^{(j,k)} \right)\mathbf{1}\left(
    \mathcal{Z}_{\mathrm{MT}_{i}} \right)
    \right. \right. \nonumber \\
    & \left. \left. \frac{s\left( \tau \rho  \right)^{-\alpha }\left(
    \tau R_{\mathrm{MT}_{i}} \right)^{\alpha \epsilon }p_{0}}{1+s
    \left( \tau \rho  \right)^{-\alpha }\left( \tau R_{\mathrm{MT}_{i}} \right)
    ^{\alpha \epsilon }p_{0}}\rho
    |\mathcal{X}_{\mathrm{MT}_{i}}^{(k)},\mathcal{A}_{\mathrm{MT}_{i}}
    \right]
    \mathrm{d}\rho  \right)
\end{align}
\else
\begin{align}
\label{eq:IAM LI(s|Xj) Part 1a}
    &
    \exp \left( -\sum\limits_{k\in \mathcal{K}}{2\pi \lambda ^{(k)}}
    \int\limits_{\rho=0}^{\infty }{\mathbb{E}_{R_{\mathrm{MT}_{i}}}}
    \left[
    \mathbf{1}\left( \mathcal{O}_{\mathrm{MT}_{i}}^{(j,k)} \right)\mathbf{1}\left(
    \mathcal{Z}_{\mathrm{MT}_{i}} \right)
     \frac{s\left( \tau \rho  \right)^{-\alpha }\left(
    \tau R_{\mathrm{MT}_{i}} \right)^{\alpha \epsilon }p_{0}}{1+s
    \left( \tau \rho  \right)^{-\alpha }\left( \tau R_{\mathrm{MT}_{i}} \right)
    ^{\alpha \epsilon }p_{0}}\rho
    |\mathcal{X}_{\mathrm{MT}_{i}}^{(k)},\mathcal{A}_{\mathrm{MT}_{i}}
    \right]
    \mathrm{d}\rho  \right)
\end{align}
\fi

By conditioning on the event $\mathcal{Q}^{(n)}_{\mathrm{MT}_i}$ defined in (\ref{eq:Xjm}) and by using the total probability theorem, $\mathcal{L}_{I}  \left( s|\mathcal{X}_{\mathrm{MT}_{0}}^{(j)} \right)$ can be written as follows:
\ifTwoColumns
\begin{align}
\label{eq:IAM LI(s|Xj) Part 1a}
    &\exp \left(-\sum\limits_{k\in \mathcal{K}}{2\pi \lambda ^{(k)}}
    \sum\limits_{n\in \mathcal{K}}{\Pr \left( \mathcal{Q}_{
    \mathrm{MT}_{i}}^{(n)}|\mathcal{X}_{\mathrm{MT}_{i}}^{(k)},
    \mathcal{A}_{\mathrm{MT}_{i}} \right) } \right.
    \nonumber \\
    & \int\limits_{\rho=0}^{\infty }{\mathbb{E}_{R_{\mathrm{MT}_{i}}}}
    \left[
    \mathbf{1}\left( \mathcal{O}_{\mathrm{MT}_{i}}^{(j,k)} \right)\mathbf{1}
    \left( \mathcal{Z}_{\mathrm{MT}_{i}} \right)
    \right.
    \nonumber \\
    & \left. \frac{s\left( \tau \rho  \right)^{-\alpha }\left( \tau R_{\mathrm{MT}_{i}}
    \right)^{\alpha \epsilon }p_{0}}{1+s\left( \tau \rho  \right)^{-\alpha }
    \left( \tau R_{\mathrm{MT}_{i}} \right)^{\alpha \epsilon }p_{0}}\rho
    |\left. \mathcal{X}_{\mathrm{MT}_{i}}^{(k,n)},\mathcal{A}_{\mathrm{MT}_{i}} \right]
    \mathrm{d}\rho \right)
\end{align}
\else
\begin{align}
\label{eq:IAM LI(s|Xj) Part 1a}
    &\exp \left(-\sum\limits_{k\in \mathcal{K}}{2\pi \lambda ^{(k)}}
    \sum\limits_{n\in \mathcal{K}}{\Pr \left( \mathcal{Q}_{
    \mathrm{MT}_{i}}^{(n)}|\mathcal{X}_{\mathrm{MT}_{i}}^{(k)},
    \mathcal{A}_{\mathrm{MT}_{i}} \right) } \right.
    \nonumber \\
    & \int\limits_{\rho=0}^{\infty }{\mathbb{E}_{R_{\mathrm{MT}_{i}}}}
    \left[
    \mathbf{1}\left( \mathcal{O}_{\mathrm{MT}_{i}}^{(j,k)} \right)\mathbf{1}
    \left( \mathcal{Z}_{\mathrm{MT}_{i}} \right)
    \frac{s\left( \tau \rho  \right)^{-\alpha }\left( \tau R_{\mathrm{MT}_{i}}
    \right)^{\alpha \epsilon }p_{0}}{1+s\left( \tau \rho  \right)^{-\alpha }
    \left( \tau R_{\mathrm{MT}_{i}} \right)^{\alpha \epsilon }p_{0}}\rho
    |\left. \mathcal{X}_{\mathrm{MT}_{i}}^{(k,n)},\mathcal{A}_{\mathrm{MT}_{i}} \right]
    \mathrm{d}\rho \right)
\end{align}
\fi

The next step is the computation of the expectation with respect to $R_{\mathrm{MT}_i}$ by conditioning on $\mathcal{X}^{(k,n)}_{\mathrm{MT}_i} \cap \mathcal{A}_{\mathrm{MT}_i}$ and by applying the definition of the event $\mathcal{O}^{(j,k)}_{\mathrm{MT}_i}$ in (\ref{eq:O_MTi}) and of $\mathcal{Z}_{\mathrm{MT}_i}$ in (\ref{eq:ZMTi}). In particular, by conditioning on $\mathcal{X}^{(k,n)}_{\mathrm{MT}_i} \cap \mathcal{A}_{\mathrm{MT}_i}$ for $\mathrm{MT}_i \in \Psi^{(k)}$, the distances $R_{\mathrm{MT}_i}$ are independent and identically distributed random variables whose PDF is in (\ref{eq:f_RMT_0 cond XjmA IAM}). With some algebra, $\mathcal{L}_{I}  \left( s|\mathcal{X}_{\mathrm{MT}_{0}}^{(j)} \right)$ can be written as follows:
\ifTwoColumns
\begin{align}
\label{eq:IAM LI(s|Xj) Part 2}
    &\exp \left( -\sum\limits_{k\in \mathcal{K}}{2\pi \lambda ^{(k)}}\sum
    \limits_{n\in \mathcal{K}}{\Pr \left( \mathcal{Q}_{\mathrm{MT}_{i}}^{(n)}
    |\mathcal{X}_{\mathrm{MT}_{i}}^{(k)},\mathcal{A}_{\mathrm{MT}_{i}} \right)}
    \right. \nonumber \\
    & \int\limits_{r=0}^{\infty }{f_{R_{\mathrm{MT}_{i}}}\left( r
    |\mathcal{X}_{\mathrm{MT}_{i}}^{(k,n)},\mathcal{A}_{\mathrm{MT}_{i}} \right)}
    \int\limits_{\rho =\max \left( \left( \frac{t^{(j)}}{t^{(k)}} \right)
    ^{\frac{1}{\alpha }}r,\left( \frac{p_{0}}{i_{0}} \right)^{\frac{1}{\alpha }}
    \frac{\left( \tau r \right)^{\epsilon }}{\tau } \right)}^{\infty }
    \nonumber \\
    & \left.
    \frac{s
    \left( \tau \rho  \right)^{-\alpha }\left( \tau r \right)
    ^{\alpha \epsilon }p_{0}}{1+s\left( \tau \rho  \right)^{-\alpha }
    \left( \tau r \right)^{\alpha \epsilon }p_{0}}\rho
    \mathrm{d}\rho \mathrm{d}r
    \right)
\end{align}
\else
\begin{align}
\label{eq:IAM LI(s|Xj) Part 2}
    &\exp \left( -\sum\limits_{k\in \mathcal{K}}{2\pi \lambda ^{(k)}}\sum
    \limits_{n\in \mathcal{K}}{\Pr \left( \mathcal{Q}_{\mathrm{MT}_{i}}^{(n)}
    |\mathcal{X}_{\mathrm{MT}_{i}}^{(k)},\mathcal{A}_{\mathrm{MT}_{i}} \right)}
    \right. \nonumber \\
    & \int\limits_{r=0}^{\infty }{f_{R_{\mathrm{MT}_{i}}}\left( r
    |\mathcal{X}_{\mathrm{MT}_{i}}^{(k,n)},\mathcal{A}_{\mathrm{MT}_{i}} \right)}
    \int\limits_{\rho =\max \left( \left( \frac{t^{(j)}}{t^{(k)}} \right)
    ^{\frac{1}{\alpha }}r,\left( \frac{p_{0}}{i_{0}} \right)^{\frac{1}{\alpha }}
    \frac{\left( \tau r \right)^{\epsilon }}{\tau } \right)}^{\infty }
     \left.
    \frac{s
    \left( \tau \rho  \right)^{-\alpha }\left( \tau r \right)
    ^{\alpha \epsilon }p_{0}}{1+s\left( \tau \rho  \right)^{-\alpha }
    \left( \tau r \right)^{\alpha \epsilon }p_{0}}\rho
    \mathrm{d}\rho \mathrm{d}r
    \right)
\end{align}
\fi

The proof follows by computing the inner integral.

\ifCLASSOPTIONcaptionsoff
  \newpage
\fi


\bibliographystyle{IEEEtran}
\bibliography{UL_Power_Control}

\end{document}